\documentclass[11pt]{article}
\usepackage{geometry}
\usepackage{setspace}
\usepackage[authoryear,round]{natbib}
\usepackage[utf8]{inputenc} 
\usepackage[hidelinks]{hyperref}      
\usepackage{url}   
\usepackage{enumitem}        
\usepackage{booktabs}       
\usepackage{amsfonts}      
\usepackage{nicefrac}      
\usepackage{microtype}     
\usepackage[dvipsnames]{xcolor}        
\usepackage{amsmath,amssymb,amsthm}
\allowdisplaybreaks
\usepackage{graphicx}
\usepackage{placeins}
\usepackage{float}
\usepackage{enumitem}
\usepackage{algorithm,algpseudocode}

\newcommand{\der}{\mathrm{d}}
\newcommand{\iid}{\overset{\text{iid}}{\sim}}
\newcommand{\E}{\mathbb{E}}
\newcommand{\I}{\mathbb{I}}
\newcommand{\R}{\mathbb{R}}
\newcommand{\drift}{\mathcal{S}_h}
\newcommand{\drifti}{\mathcal{S}_h^{(i)}}
\newcommand{\diff}{\mathcal{C}_h}
\newcommand{\diffi}{\mathcal{C}_h^{(i)}}

\newtheorem{assumption}{Assumption}

\newtheorem{theorem}{Theorem}[section]
\newtheorem{lemma}[theorem]{Lemma}
\newtheorem{proposition}[theorem]{Proposition}
\newtheorem{definition}[theorem]{Definition}
\newcommand{\keywords}[1]{%
	\smallskip\noindent\textbf{Keywords:} #1\par
}

\title{SULA: Robust Langevin Sampling via Drift-Diffusion Saturation}
\author{
Ritik Soni \\
Department of Statistics and Data Science\\
Indian Institute of Technology Kanpur\\
\texttt{ritiks@iitk.ac.in}
\and
Dootika Vats \\
Department of Statistics and Data Science\\
Indian Institute of Technology Kanpur\\
\texttt{dootika@iitk.ac.in}
}
\date{\today}

\begin{document}
\maketitle

\begin{abstract}
  The unadjusted Langevin algorithm (ULA) and its stochastic-gradient variant, SGLD, are workhorses of modern Bayesian machine learning, but can be fragile when faced with a target distribution possessing a non-Lipschitz drift. In such settings, the expected squared jump distance can grow rapidly, leading to numerical instability or the need for prohibitively small step-sizes.  We propose  the saturated unadjusted Langevin algorithm (SULA)  that pairs a saturated gradient drift with a gradient-informed state-dependent diffusion, while preserving first-order consistency with the Langevin diffusion. The drift and diffusion terms are chosen to guarantee that the one-step squared update is independent of the gradient. Consequently,  SULA is robust to the choice of step-size and preserves estimation and prediction quality.  We provide sufficient conditions under which SULA with a step-size $h$ converges weakly to the Langevin diffusion at the standard $O(h)$ rate, the SULA chain is geometrically ergodic, and that the bias of its invariant measure relative to the target is $O(h)$. Further, we develop SG-SULA, a stochastic-gradient version of SULA that rescales the stochastic gradient using a Gaussian CDF correction under Gaussianity of the gradient noise. Numerical experiments on various examples demonstrate SULA's robustness to tuning and SG-SULA's improved predictive performance.
\end{abstract}

\keywords{Langevin Monte Carlo, Stochastic gradient Markov chain Monte Carlo, Bayesian computation}

\section{Introduction}

Efficient and reliable sampling from complex probability distributions is a critical component in the workflow of many machine learning and statistical tasks. Traditional Markov chain Monte Carlo (MCMC) algorithms like Metropolis-Hastings require a full sweep of the dataset in every iteration of the sampler, prohibiting scalability. 
A key driving force for innovation in efficient sampling is the powerful Langevin stochastic differential equation (SDE). Let $\pi(x)$ be the desired target density defined on $\mathbb{R}^d$. The Langevin SDE, $(Y_t)_{t \ge 0}$, is a diffusion process given by
\begin{equation}\label{eq:langevin}
\der Y_t = \nabla \log \pi (Y_t) \der t + \sqrt{2}\,\der W_t,
\end{equation}
where $(W_t)_{t \geq 0}$ is a $d$-dimensional standard Brownian motion. Under mild conditions, the Langevin SDE has $\pi$ as its unique invariant and limiting distribution \citep{Pavliotis2014,Roberts1996}. Inspired by the Langevin SDE, \cite{Roberts1996} proposed the powerful unadjusted Langevin algorithm (ULA) that employs an Euler-Maruyama discretization of \eqref{eq:langevin}, where for $h > 0$
\begin{equation}
\label{eq:ula}
    X_{t+1} = X_t + h\, \nabla \log \pi (X_t)  + \sqrt{2 h} Z_t\,,
\end{equation}
where $Z_t \iid N(0,\I_d)$. Under suitable conditions, the process $\{X_t\}_{t \geq 0}$ has been shown to converge to $\pi_{h}^{\text{ULA}}$, an approximation of $\pi$, and as $h \to 0$, $\pi_h^{\text{ULA}} \to \pi$ in a suitable metric \citep{Roberts1996}. ULA and its stochastic gradient variants \citep{welling2011bayesian} are among the most popular sampling algorithms employed in modern machine learning \citep{ahn2012bayesian, bardenet2014towards, duchi2011adaptive, li2016preconditioned, salakhutdinov2008bayesian, welling2011bayesian}. However, ULA requires careful tuning of $h$ and relies on the gradient of $\log \pi$ being Lipschitz. \cite{skew-symmetric} highlight that for ULA 
$$
\E\left[ \|X_{t+1} - X_t\|^2 \mid X_t = x\right] = h^2 \left\|\nabla \log \pi(x) \right\|^2 + 2dh\,.
$$
When $\nabla \log \pi$ has a large local Lipschitz constant or if it is superlinear in parts of the state-space, the expected squared jump distance of the sampler explodes. This forces smaller values of $h$, which then requires larger run lengths, increasing overall computational burden. To solve this critical problem, we present the saturated unadjusted Langevin algorithm (SULA).

\subsection{Related Work}

Some attempts have been made to address the lack of robustness of ULA and SGLD. \cite{brosse2019tamed,sabanis2013note,Sabanis2016} study the tamed unadjusted  Langevin algorithm (TULA) that replaces the $\nabla \log \pi(x)$ in \eqref{eq:ula} with $\nabla \log \pi(x)/(1 + h \|\nabla \log \pi(x)\|)$, taming the growth rate of the drift function. However, the scheme can exhibit large errors for coarse discretizations \citep{skew-symmetric}, as we will also see in our numerical experiments. 
The unadjusted Barker algorithm (UBA) \citep{skew-symmetric,mauri2024robust} introduces skew-symmetric noise increments whose asymmetry encodes the drift direction. This achieves robustness and weak convergence without global Lipschitz assumptions on $\nabla \log \pi$, but requires an additional Bernoulli random variable per coordinate, at every iteration, significantly increasing the random number generation cost.

 In addition to the foundational work on TULA \citep{brosse2019tamed} and UBA \citep{skew-symmetric}, several alternative schemes aim to explore the target distribution efficiently. A popular approach uses Riemannian manifold SDEs with state-dependent diffusion matrices. \citet{girolami2011riemann} were the first to consider Euler-Maruyama discretizations of Riemannian manifold SDEs with further explorations by \citet{ma2015complete}, and extensions to sampling from a simplex by \citet{patterson2013stochastic}. A comprehensive review is provided in \citet{nemeth2021stochastic}. Adaptive and preconditioned variants of SGLD \citep{ahn2012bayesian,li2016preconditioned} address local geometry, which is orthogonal to the gradient-magnitude robustness that SULA targets; the two mechanisms can in principle be combined. In these approaches, state-dependent noise modifies the continuous-time SDE, whereas SULA introduces it only at the discretization level, preserving the target Langevin SDE in \eqref{eq:langevin}.

\subsection{Contributions}

We present SULA, a novel discretization of the SDE in \eqref{eq:langevin}, that achieves robustness to step-size without modifying the target dynamics 
or requiring additional sources of randomness. Our contributions are:
\begin{itemize}[leftmargin=10pt]
    \item \textbf{Variance balancing via drift-diffusion saturation.}  We introduce a saturated drift $\drift(x)$ paired with a   gradient-informed diffusion coefficient $\diff(x)$, chosen so that the expected one-step squared update is always $2dh$. This \emph{variance balancing} property decouples the sampler's increments from the local geometry of the potential, yielding robustness to the choice of step-size. We focus on two variants of SULA, $\pi/4-$SULA which reproduces UBA-like behavior in a Gaussian-only framework and $1/2$-SULA which mimics ULA for small gradients and almost deterministically drifts inwards for large gradients. 

    \item \textbf{Theoretical guarantees.} We establish a complete set of theoretical results for SULA under a one-sided Lipschitz assumption on $\nabla \log \pi$. First, we show that as $h \to 0$, SULA converges to \eqref{eq:langevin} in $O(h)$, a rate similar to the ULA algorithm. In practice, the step-size $h$ is never zero; we provide sufficient conditions to show that for any fixed $h > 0$, the process in \eqref{eq:sula} converges in total variation distance to an invariant distribution $\pi_h$ at a geometric rate. Thus, the resulting approximate Markov chain exhibits fast enough mixing to guarantee finite variance of Monte Carlo estimators \citep{jones2004markov}.
    Finally, we show that under suitable conditions, as $h \to 0$, the difference between expectations of arbitrary functions under $\pi_h$ and $\pi$ is $O(h)$, ensuring a bound on the bias induced by the discretization. 
    \item \textbf{Stochastic gradient extension (SG-SULA).} 
    Because the saturated drift $S_h(x)$ is nonlinear, naive replacement 
    of $\nabla \log \pi$ with a stochastic gradient introduces a 
    systematic bias in the drift. We show that  stochastic gradients 
    effectively shrink the true saturated drift by a multiplicative factor, and develop SG-SULA, which corrects this bias by appropriately scaling the stochastic gradient. Under Gaussianity of the gradient noise, SG-SULA recovers one-step unbiasedness, conditional on the discretization noise in the update, (Proposition~\ref{sgprop2}), making it suitable for large-scale Bayesian inference.
    \item \textbf{Empirical validation.} We evaluate SULA and SG-SULA 
    across three settings of increasing complexity: (i) \emph{Known targets:} A heterogeneous $100$-dimensional Gaussian and a super-Gaussian distribution where ULA is known to break down; 
    (ii) \emph{Bayesian neural encoding:} A count-response Bayesian model fit to spike train recordings, representing a realistic  inference problem with correlated  covariates and sparse event counts; (iii) \emph{Bayesian matrix factorization:} SG-SULA is evaluated on the MovieLens dataset against SGLD and the corrected stochastic gradient Barker dynamics (SGBD), demonstrating improved predictive performance.
\end{itemize}

\section{Robust Langevin Sampling via SULA}
Consider the task of sampling from a distribution with density $\pi$ defined on $\R^d$. 
The SULA update is
\begin{equation}
\label{eq:sula}
X_{t+1} = X_t + h\, \drift (X_t) + \sqrt{2h} \,\diff(X_t) Z_t\,,
\end{equation}
where for $c > 1/2$ and $i = 1, 2, \dots, d$, the $i$th component of $\drift$ and the $i$th diagonal  of $\diff$ are
\begin{equation}
    \label{eq:SULA_drift}
\drift^{(i)}(x) = \dfrac{\nabla_i \log \pi(x)}{\sqrt{1 + c h  \left(\nabla_i \log \pi(x)\right)^2 } } \qquad
\text{ and } \quad \diffi(x) = \sqrt{1 - \dfrac{h}{2} \drifti(x)^2}\,.
\end{equation}
The saturated drift $\drift(x)$ saturates at $1/\sqrt{ch}$, which yields the restriction that $c > 1/2$ to ensure well-defined positive diagonals of the diffusion matrix. Setting $c = \pi/4$ yields conditional first and second moments similar to those of the UBA algorithm. However, choosing $c$ slightly greater than $1/2$ ensures that when $\nabla_i \log \pi(x) \to \pm \infty$, $\drifti(x)$ saturates at $\sqrt{2/h}$, forcing $\diffi(x) \to 0$, enabling an almost deterministic move in the direction of the gradient. Consequently, the $1/2-$SULA scheme behaves like ULA when the gradient is small and like a descent algorithm when the gradient is large. We note that the descent-like regime of $1/2-$SULA is itself $h$-dependent. For any fixed $x$, $\drift(x) \to \nabla \log \pi(x)$ and $\diff(x) \to \mathbb{I}_d$ as $h \to 0$, so $1/2-$SULA recovers Langevin behavior in the small-$h$ limit. Our numerical experiments will highlight the benefit of choosing $c \approx 1/2$  and our results hold for all $c > 1/2$. By design, the drift term $\drifti(x) \approx \nabla_i \log \pi(x)$ for small $h$, similar to the TULA sampler. However, because of the diffusion coefficient introduced in the discretization, we obtain
$$\E \left[ (X_{t+1}^{(i)} - X_t^{(i)} )^2 |\; X_t\right]
= 2h.
$$
We now introduce notation used throughout the paper. We write $\mathbb{E}_x$ for expectation with respect to the process initialized at $x$. The total variation norm of a measure $\mu$ is denoted by $\|\mu\|_{\mathrm{TV}}$.  For $l \ge 1$, we denote by $C_P^l(\mathbb{R}^d)$ the class of $l$-times continuously differentiable functions whose derivatives up to order $l$ have at most polynomial growth. Throughout, $(Y_t)_{t \ge 0}$ denotes the solution of the Langevin SDE \eqref{eq:langevin} with invariant measure $\pi$, and $\{X_t\}_{t \ge 0}$ denotes the Markov chain generated by the SULA update. 

\subsection{Theoretical Results}
We now show that SULA converges weakly to the Langevin diffusion \eqref{eq:langevin} as $h \to 0$. To this end, we impose the following standard assumptions on the target density $\pi$. 

\begin{assumption}\label{ass:main}
For every $i=1,\dots,d$, assume that $\nabla_i\log\pi\in C_P^\infty(\mathbb{R}^d)$. In addition, the following conditions hold:
\begin{enumerate}[label=(\alph*), ref=\theassumption(\alph*), leftmargin=20pt]
    \item \label{assum1}
	The drift $\nabla \log \pi(x)$ satisfies
	\begin{gather*}
	\liminf_{\|x\|\to\infty} \|\nabla \log \pi(x)\| = +\infty,
	\end{gather*}
    and for each $i=1,\dots,d$, $x_i\nabla_i\log\pi(x) < 0$ for sufficiently large $\|x\|$. Also,
		\begin{gather*}
		\limsup_{\|x\|\to\infty}
		\left\langle 
		\frac{x}{\|x\|},\,
		\frac{\nabla \log \pi(x)}{\|\nabla \log \pi(x)\|}
		\right\rangle < 0 .
	\end{gather*}

    \item \label{assum:LC}	There exist $\alpha>0$ and $\beta\ge 0$ such that for all
	$x,y\in\mathbb{R}^d$,
	\begin{equation*}
	\langle \nabla\log \pi(x) - \nabla\log \pi(y),\, x-y \rangle
	\le \beta - \alpha \|x-y\|^2 .
	\end{equation*}

    \item \label{assum:HP}
	There exists a constant $C>0$ such that
	$
	\nabla^2 \log \pi(x) \preceq C I_d \text{ for all } x\in\mathbb{R}^d.
	$
    
    \item \label{assum:PG}
	There exist constants $m,L>0$ such that for all $x,y\in\mathbb{R}^d$,
	\begin{equation*}
	\|\nabla\log \pi(x) - \nabla\log \pi(y)\|
	\le 
	L\bigl(1+\|x\|^{m} + \|y\|^{m}\bigr)\,\|x-y\|.
	\end{equation*}
\end{enumerate}
\end{assumption}
Assumption~\ref{assum1} is standard in the literature and ensures that in the tails of the distribution the gradient is informative and points inwards. In particular, $\pi(x)\to 0$ fast enough as $\|x\|\to\infty$. Assumption~\ref{assum:LC} is dissipativity condition with an additive constant, allowing superlinear drifts. Assumption~\ref{assum:HP} is a smoothness condition a uniform upper bound on the Hessian of $\log \pi$, allowing
unbounded negative curvature. This assumption is immediately satisfied by all weakly log-concave distributions. Finally, Assumption~\ref{assum:PG} imposes a polynomial growth condition on the gradient and is a technical condition employed in the proofs. We suspect this condition can be relaxed as the proposed method is effective even when the gradient has an exponential rate (see Section~\ref{sec:neural}). 

We aim to study the weak convergence of SULA to the underlying Langevin SDE. 
Under Assumption~\ref{ass:main}, we now state the main weak convergence result.
\begin{theorem}\label{thm:weak-true}
	Under Assumption~\ref{ass:main}, for any $T>0$ and any $f \in C_P^4(\mathbb{R}^d)$, there exists a function $\mathcal{V} \in C_P^\infty(\mathbb{R}^d)$ such that for all $h \in (0,1)$ and all integers $k$ with $kh \le T$,
	\begin{equation}
	\left| \mathbb{E}_x\big[f(X_k)\big] - \mathbb{E}_x\big[f(Y_{kh})\big] \right|
	\le \mathcal{V}(x)h,
	\end{equation}
	where $(Y_t)_{t \ge 0}$ solves the Langevin SDE \eqref{eq:langevin} with initial condition $Y_0 = x$.
\end{theorem}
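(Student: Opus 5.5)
The plan is the standard Milstein--Talay weak-convergence framework: one-step local weak error estimates, combined with uniform-in-$k$ moment bounds for the chain and polynomial-growth control of the Kolmogorov backward solution. Let $u(t,x)=\mathbb{E}_x[f(Y_{T'-t})]$ for $T'=kh\le T$, which solves $\partial_t u + \mathcal{L}u=0$ with $\mathcal{L}g=\langle\nabla\log\pi,\nabla g\rangle+\Delta g$. Under Assumption~\ref{ass:main}, specifically the one-sided Lipschitz condition (b), the Hessian bound (c), the polynomial growth (d), and $\nabla\log\pi\in C_P^\infty$, known results on SDEs with one-sided Lipschitz, polynomially growing coefficients give that $u(t,\cdot)\in C_P^4$ uniformly in $t\in[0,T]$. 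Here (c) bounds the Jacobian flow $\nabla Y_t$, and higher derivative flows are controlled via polynomial moments of $Y_t$. The global error then telescopes as
\[
\mathbb{E}_x f(X_k)-\mathbb{E}_x f(Y_{kh})=\sum_{j=0}^{k-1}\mathbb{E}_x\bigl[\mathbb{E}[u((j+1)h,X_{j+1})\mid X_j]-u(jh,X_j)\bigr],
\]
and it suffices to show that each summand is bounded by $h^2\,\mathbb{E}_x[K(X_j)]$ for some polynomially growing $K$. We also need $\sup_{j\le T/h}\mathbb{E}_x\|X_j\|^{p}\le C_p(1+\|x\|^{p'})$.

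\textbf{Step 1 (local moment matching).} Compare the SULA increment $\Delta=h\drift(x)+\sqrt{2h}\diff(x)Z$ with the exact diffusion over time $h$. From $1/\sqrt{1+a}=1-a/2+O(a^2)$ we get $|\drifti-\nabla_i\log\pi|\le c h|\nabla_i\log\pi|^3$. Similarly, $\diffi^2=1-\frac h2\drifti^2$, so $\mathbb{E}[\Delta_i\Delta_j]=2h\delta_{ij}$ exactly, whereas the diffusion gives $2h\delta_{ij}+O(h^2)$. The third moments are $O(h^2)$ polynomially, since the odd Gaussian moments vanish and the cross terms are of the form $h\cdot h\,\drift\,\diff^2$. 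Fourth and higher moments are $O(h^2)$ with polynomial coefficients, because $h|\drift|\le\sqrt{h/c}$ and $\diff\le 1$ are bounded. Taylor-expanding $u$ to fourth order, with the remainder controlled by the polynomial growth of $\partial^4 u$ at intermediate points, and comparing with the expansion $u(t+h,\cdot)$ of the semigroup (Itô--Taylor to order $h^2$ using $\partial_t u=-\mathcal{L}u$, $\partial_t^2u=\mathcal{L}^2u$), the $O(h)$ terms cancel exactly. This leaves $h^2K(x)$ with $K$ polynomial, because $\nabla\log\pi$ and its derivatives grow polynomially.

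\textbf{Step 2 (moment bounds for the chain), which I expect to be the main obstacle.} Superlinear drifts make explicit Euler schemes blow up in moments, so the saturation must be exploited. I would work with $V_p(x)=(1+\|x\|^2)^{p}$ and compute $\mathbb{E}[\|X_{j+1}\|^2\mid X_j=x]=\|x\|^2+2h\langle x,\drift(x)\rangle+h^2\|\drift\|^2+2h\sum_i\diffi^2=\|x\|^2+2h\langle x,\drift\rangle+2dh$, where the variance balancing makes the last two terms collapse to $2dh$. Outside a ball, Assumption~\ref{assum1} gives $x_i\drifti(x)<0$ coordinatewise, so $\langle x,\drift\rangle\le0$. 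Inside the ball everything is bounded, hence $\mathbb{E}[\|X_{j+1}\|^2\mid X_j]\le\|X_j\|^2+Ch$. For higher powers the same expansion applies, using the boundedness $|\Delta|\le\sqrt{d/(ch)}\,h+\sqrt{2h}|Z|$, which makes all higher-order terms $O(h)$ times polynomials of lower degree. This yields $\mathbb{E}V_p(X_{j+1})\le(1+C_ph)\mathbb{E}V_p(X_j)+C_ph$, and a discrete Gronwall argument gives uniform moments on $[0,T]$. The delicate point is the coordinatewise sign condition; if that fails, I would fall back on the angular condition of Assumption~\ref{assum1} together with the monotonicity of the saturation.

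\textbf{Step 3 (conclusion).} Summing the at most $T/h$ local errors of size $h^2\,\mathbb{E}_xK(X_j)\le h^2C(1+\|x\|^{q})$ gives the bound $\mathcal{V}(x)h$ with $\mathcal{V}(x)=CT(1+\|x\|^2)^{q}\in C_P^\infty$, uniformly for $h\in(0,1)$.
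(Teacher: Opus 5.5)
Your proposal is correct and follows the paper's route: the Milstein--Talay telescoping against the Kolmogorov backward solution (the paper's $I_n+J_n$ split with $\varphi_n=v((k-n)h,\cdot)$), the one-step expansion $\mathbb{E}_x f(X_1)=f(x)+h\mathcal{L}f(x)+O(h^2)$ with polynomially bounded remainder, and finite-horizon moment bounds via a $(1+Ch)$ recursion and discrete Gronwall. Your exact identity $\mathbb{E}[\|X_{j+1}\|^2\mid X_j=x]=\|x\|^2+2h\langle x,\drift(x)\rangle+2dh$, combined with the coordinatewise sign condition and $h|\drifti(x)|\le\sqrt{h/c}$, is a slightly cleaner substitute for the paper's coercivity lemma; one small slip is that only the diagonal second moments equal $2h$ exactly, since $\mathbb{E}[\Delta_i\Delta_j]=h^2\drifti(x)\mathcal{S}_h^{(j)}(x)$ for $i\neq j$, but this is $O(h^2)$ with polynomial coefficients and does not affect the local error.
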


\begin{proof}[Proof Sketch.]
	The result follows by verifying the conditions in Theorem 2.2.1 of \cite{milstein2004stochastic}. First, Proposition~\ref{prop:mombound} ensures that Assumption (d) in Theorem 2.2.1 of \cite{milstein2004stochastic} is satisfied. Next, Proposition~\ref{lmm:Weakexp} establishes Equation (2.2.7) with $p=1$. In addition, Theorem~\ref{thm:DiffusionSDE} guarantees that this condition holds for any test function.
	Combining these results, all the assumptions in Theorem 2.2.1 of \cite{milstein2004stochastic} are verified, which completes the proof.
\end{proof}

Theorem~\ref{thm:weak-true} ensures that as $h \to 0$ the discretized process in \eqref{eq:sula} converges to the Langevin SDE in \eqref{eq:langevin} in the weak sense. In practice, however, letting $h \to 0$ forces more iterations of the sampler to achieve the same overall time evolution, $T$, thus increasing computational load. Thus, it is critical to study  the convergence behavior of the sampler for a fixed $h$. To this end, we show that the SULA chain is geometrically ergodic for a unique invariant distribution, $\pi_h$. 

\begin{theorem}\label{thm:geom-ergodicity}
	Under Assumption~\ref{assum1} and \ref{assum:PG}, the Markov chain generated by  SULA admits an invariant measure $\pi_{h}$ with finite moments of all orders. Furthermore, denoting by $P_{h}$ the associated Markov kernel, there exist constants
	$C>0$, $\lambda>0, a>0$, and a function $W_a:\mathbb{R}^d \to [1,\infty)$ such that for all
	$n\in\mathbb{N}$ and all $x\in\mathbb{R}^d$,
	\begin{equation}
	\label{eq:geom-ergodicity}
	\bigl\| P_{h}^n(x,\cdot) - \pi_{h} \bigr\|_{\mathrm{TV}}
	\le C\,W_a(x)\,e^{-\lambda n}\,.
	\end{equation}
\end{theorem}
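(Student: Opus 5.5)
We prove geometric ergodicity with the standard Meyn--Tweedie drift-and-minorization route, using the exponential Lyapunov function $W_a(x)=\exp(a\|x\|)$ (or a smoothed version $\exp(a\sqrt{1+\|x\|^2})$) for a small $a>0$. Three facts give the result: a geometric drift condition $P_hW_a\le \lambda_0 W_a + b\,\mathbf{1}_K$ with $\lambda_0<1$ and $K$ compact, $\psi$-irreducibility and aperiodicity, and that compact sets are small. Theorem 15.0.1 of Meyn and Tweedie then yields a unique invariant measure $\pi_h$ and the bound \eqref{eq:geom-ergodicity}. Moreover $\pi_h(W_a)<\infty$, and $\|x\|^p\lesssim W_a$ for every $p$, so $\pi_h$ has finite moments of all orders.

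\textbf{Minorization and irreducibility.} Since $\nabla\log\pi$ is continuous, $\drift$ and $\diff$ are continuous. Each diagonal entry satisfies $\diffi(x)^2 = 1-\tfrac{h}{2}\drifti(x)^2 \ge 1-\tfrac{1}{2c}>0$, using $c>1/2$ and $\drifti^2<1/(ch)$. So the diffusion matrix is uniformly nondegenerate, and the transition density is Gaussian,
\[
N\!\bigl(x+h\drift(x),\,2h\,\diff(x)^2\bigr).
\]
Its mean is bounded on compacts and its covariance is bounded above and below. Hence the density is continuous in $(x,y)$ and bounded below on $K\times B$ for any compact $K$ and ball $B$. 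This gives Lebesgue irreducibility, aperiodicity, and the fact that every compact set is small.

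\textbf{Drift condition (main obstacle).} Write $X_1=x+h\drift(x)+\sqrt{2h}\,\diff(x)Z$. The noise has a covariance bounded by $2h\,\I_d$, and $|h\drifti|\le\sqrt{h/c}$, so every increment is sub-Gaussian with a bounded step. We bound
\[
\|X_1\|\le \|x+h\drift(x)\|+\sqrt{2h}\,\|Z\|,
\]
which gives
\[
\frac{P_hW_a(x)}{W_a(x)}\le \exp\bigl(a(\|x+h\drift(x)\|-\|x\|)\bigr)\,\E\, e^{a\sqrt{2h}\|Z\|}.
\]
The core task is to show that $\|x+h\drift(x)\|-\|x\|\le -\delta<0$ for all large $\|x\|$, with $\delta$ independent of $x$. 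Expanding,
\[
\|x+h\drift\|^2-\|x\|^2 = 2h\langle x,\drift\rangle + h^2\|\drift\|^2 .
\]
The second term is bounded by $hd/c$.

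For the first term I use Assumption~\ref{assum1}. For large $\|x\|$ each $x_i\nabla_i\log\pi(x)<0$, so every summand $x_i\drifti(x)$ is nonpositive; saturation acts coordinatewise and does not flip signs. Saturation can, however, destroy the angle condition, because the coordinates with the largest gradients get clipped. To handle this, pick $i^\ast$ maximizing $|\nabla_i\log\pi(x)|$. By the angle condition and $\|\nabla\log\pi\|\to\infty$, we have $|\nabla_{i^\ast}\log\pi|\ge \|\nabla\log\pi\|/\sqrt d\to\infty$. I would then argue that a coordinate $j$ exists with $|x_j|\gtrsim\|x\|$ and $|\nabla_j\log\pi|\to\infty$. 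Such a $j$ follows from the limsup inner-product condition by a pigeonhole over coordinates, combined with sign-consistency. For that coordinate $|\drift^{(j)}|\to 1/\sqrt{ch}$, so
\[
\langle x,\drift\rangle\le x_j\drift^{(j)}\le -\kappa\|x\|/\sqrt{ch}.
\]
This gives $\|x+h\drift\|-\|x\|\le -\kappa\sqrt{h/c}+O(1/\|x\|)$. Making this pigeonhole step rigorous is the delicate part. If it fails, I would instead use $W=\exp(a\|x\|_1)$. The $\ell_1$ norm decouples coordinates, and the coordinatewise sign condition gives $|x_i+h\drifti|\le|x_i|$ for each $i$ once $|x_i|$ is large (the increment is bounded), with a strict decrease of order $\sqrt{h/c}$ in the dominant coordinate.

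Finally, choose $a$ small enough that $\E e^{a\sqrt{2h}\|Z\|}\le e^{a\delta/2}$. This gives $\lambda_0=e^{-a\delta/2}<1$ outside a compact $K$. On $K$, $P_hW_a$ is bounded by continuity together with the Gaussian moment generating function, which supplies the constant $b$. Assumption~\ref{assum:PG} is used only for continuity and local boundedness, and Assumption~\ref{assum:LC} is not needed.
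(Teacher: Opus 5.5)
Your minorization and irreducibility argument and the overall drift-plus-small-set structure agree with the paper. The drift inequality, however, fails at its last step, and the failure is structural rather than technical.

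The bound $\|X_1\|\le\|x+h\drift(x)\|+\sqrt{2h}\,\|Z\|$ replaces the mean-zero Gaussian increment by a nonnegative one, so the noise now costs you at first order in $a$. By Jensen and $\E\|Z\|\ge\E\|Z\|_1/\sqrt d=\sqrt{2d/\pi}$ (here $\pi$ is the numerical constant), $\E e^{a\sqrt{2h}\|Z\|}\ge\exp\bigl(a\sqrt{2h}\,\E\|Z\|\bigr)\ge\exp\bigl(2a\sqrt{dh/\pi}\bigr)$. The gain from the drift is also first order in $a$, and it is too small precisely because the drift saturates. Any $\delta$ with $\|x+h\drift(x)\|-\|x\|\le-\delta$ satisfies $\delta\le h\|\drift(x)\|<\sqrt{dh/c}<\sqrt{2dh}$, since $c>1/2$. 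Because $1/\sqrt2<2/\sqrt\pi$, this gives $e^{a\delta/2}<e^{a\sqrt{dh/2}}<\E e^{a\sqrt{2h}\|Z\|}$ for every $a>0$, every $h$ and every $d$. Shrinking $a$ cannot help: you are comparing two fixed constants of order $\sqrt h$, and the comparison goes the wrong way.

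The $\ell_1$ fallback does not rescue this either. Take a product target such as $\prod_i e^{-x_i^4}$ at $x=(R,\epsilon,\dots,\epsilon)$. The kink of $|\cdot|$ at zero makes each small coordinate gain about $2\sqrt{h/\pi}$ in expectation. For $d\ge3$ this outweighs the at most $\sqrt{h/c}$ lost in the dominant coordinate, so by Jensen $P_hW/W>1$ there for arbitrarily large $R$.

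What is missing is an estimate that uses the zero mean of the noise, so that it enters only at second order in $a$ or at order $h/\|x\|$. The paper does this with Gaussian concentration. With $\varphi(y)=\sqrt{1+\|y\|^2}$, the map $Z\mapsto\varphi\bigl(x+h\drift(x)+\sqrt{2h}\,\diff(x)Z\bigr)$ is $\sqrt{2hd}$-Lipschitz. The Herbst bound then gives $P_hW_a(x)\le e^{a^2hd}\exp\bigl(a\,\E\varphi(X_1)\bigr)$, and Jensen gives $\E\varphi(X_1)\le\bigl(1+\|x+h\drift(x)\|^2+2hd\bigr)^{1/2}$. The noise now appears only as $+2hd$ under the square root. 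Once $\|x\|\ge 2d/\kappa'$, this is absorbed by the coercive term $-2h\kappa'\|x\|$ of Lemma~\ref{lem:coercivity}. The result is a decrease of $\varphi$ of order $h\kappa'$, and hence $P_hW_a\le e^{-a^2hd}W_a$ outside a ball for small $a$.

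Your pigeonhole step, by contrast, is not where the difficulty lies. Every summand $x_i\nabla_i\log\pi(x)$ is nonpositive for large $\|x\|$, so $\langle x,\drift(x)\rangle\le\bigl(1+ch\|\nabla\log\pi(x)\|^2\bigr)^{-1/2}\langle x,\nabla\log\pi(x)\rangle$. Saturation therefore preserves the angle condition, and this is exactly how the paper proves its coercivity lemma.
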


\begin{proof}[Proof Sketch]
Following \cite{Lelievre2018} and \cite{Roberts1996}, in Proposition~\ref{prop:irreducibility} we first establish that the kernel $P_h$ is irreducible with respect to the Lebesgue measure, aperiodic, and admits a minorization condition.
Next, Proposition \ref{prop:RoTULA-drift} establishes a drift condition for the SULA scheme with a specific choice of a drift function $W_a(x)$. From Theorem 2.25 of \cite{Lelievre2018}, 
the sufficient conditions for geometric ergodicity are satisfied. 
\end{proof}

Theorem~\ref{thm:geom-ergodicity} establishes two key properties that immediately guarantee the existence of Markov chain central limit theorems for ergodic averages \citep{jones2004markov}. This immediately implies that estimators of expectations of functions have finite variance. However, the discretization induces an asymptotic bias since $\pi_h \ne \pi$ in general. The next result establishes a vanishing bias for such expectations as $h \to 0$. First we state an assumption.

\begin{assumption}\label{assum:bias-result}
The following conditions are assumed on $\pi$ and $(Y_t)_{t \geq 0}$.
\begin{enumerate}[label=(\alph*), ref=\theassumption(\alph*), leftmargin=20pt]

    \item \label{assum:smooth} The target density $\pi \in C^4(\mathbb{R}^d)$,  $\pi(x) >0$ for all $x \in \mathbb{R}^d$, and $\log \pi \in C_P^4(\mathbb{R}^d)$.
    \item \label{assum:moments}The measure $\pi$ has moments of all orders.
    \item \label{assum:sde} The diffusion process $(Y_t)_{t \ge 0}$ in the Langevin SDE \eqref{eq:langevin} is time reversible and has $\pi$ as its unique invariant and limiting measure. 
\end{enumerate}
\end{assumption}

\noindent Now, we provide the following result that establishes the discretization error.
\begin{theorem}\label{thm:bias}
	Under Assumptions~\ref{ass:main} and \ref{assum:bias-result}, there exists $\phi\in C_P^\infty(\mathbb{R}^d)$ with $\int\phi\,d\pi=0$ such that, for every $f\in C_P^\infty(\mathbb{R}^d)$, there exist constants $K>0$ and $h\in(0,1)$, depending on $f$, and a remainder $R_h^f$ with $|R_h^f|\le K$ satisfies
	\begin{equation}
		\int f\,d\pi_h=\int f\,d\pi+h\int f\phi\,d\pi+h^2R_h^f.
	\end{equation}
\end{theorem}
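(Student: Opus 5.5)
The plan is to follow the Talay--Tubaro / Mattingly--Stuart--Tretyakov strategy for expanding the invariant measure of a weak-order-one scheme. Write $\mathcal{L}f=\langle\nabla\log\pi,\nabla f\rangle+\Delta f$ for the generator of \eqref{eq:langevin}. The first step is a second-order local expansion of the one-step SULA kernel. For $f\in C_P^\infty$, I will Taylor-expand $\mathbb{E}_x f(X_1)$ with $X_1=x+h\drift(x)+\sqrt{2h}\,\diff(x)Z$, and separately expand $\drift$ and $\diff$ in $h$:
\[
\drifti=\nabla_i\log\pi-\tfrac{c h}{2}(\nabla_i\log\pi)^3+O_P(h^2),\qquad (\diffi)^2=1-\tfrac h2(\nabla_i\log\pi)^2+O_P(h^2).
\]
Matching terms should give
\begin{equation*}
\mathbb{E}_x f(X_1)=f(x)+h\mathcal{L}f(x)+h^2\mathcal{A}f(x)+h^3 r_h^f(x),
\end{equation*}
where $\mathcal{A}$ is an explicit fourth-order differential operator with polynomially growing coefficients. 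It collects $\tfrac12\mathcal{L}^2$-type terms together with the SULA corrections $-\tfrac{c}{2}\sum_i(\nabla_i\log\pi)^3\partial_i f-\tfrac12\sum_i(\nabla_i\log\pi)^2\partial_{ii}f$. The remainder $r_h^f$ has polynomial growth uniformly in $h\in(0,1)$. For the Taylor remainder I will use the uniform bounds $|\drifti|\le(ch)^{-1/2}$ and $\diffi\le1$, and for the expansion of the square root I will use Assumption~\ref{assum:PG}. This is essentially the computation behind Proposition~\ref{lmm:Weakexp}, carried one order further.

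Next I will define $\phi$ through a Poisson equation. Set $g:=\mathcal{A}^{\ast}$-type quantity, namely $\psi:=\mathcal{A}u-\tfrac12\mathcal{L}^2 u$ integrated against $\pi$. More concretely, for each $f$ I take $u_f$ to solve $\mathcal{L}u_f=f-\int f\,d\pi$. This solution lies in $C_P^\infty$, which follows from Theorem~\ref{thm:DiffusionSDE} (polynomial bounds on derivatives of $\mathbb{E}_xf(Y_t)$) together with exponential ergodicity of $Y$ from Assumptions~\ref{assum:LC} and \ref{assum:sde}, via $u_f=-\int_0^\infty(\mathbb{E}_xf(Y_t)-\pi(f))\,dt$. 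Stationarity of $\pi_h$ applied to $u_f$ gives $\int(\mathbb{E}_xu_f(X_1)-u_f)\,d\pi_h=0$. Inserting the expansion, dividing by $h$, and using $\mathcal{L}u_f=f-\pi(f)$ yields
\[
\pi_h(f)-\pi(f)=-h\,\pi_h(\mathcal{B}u_f)-h^2\pi_h(r^{u_f}_h),\qquad \mathcal{B}=\mathcal{A}-\tfrac12\mathcal{L}^2,
\]
where the $\tfrac12\mathcal{L}^2$ part integrates to zero under $\pi$. Because $\pi_h$ has moments of all orders uniformly in $h$ (Theorem~\ref{thm:geom-ergodicity}, with the Lyapunov constants made $h$-uniform), the last term is $O(h^2)$. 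Applying the same identity to $\mathcal{B}u_f$ replaces $\pi_h(\mathcal{B}u_f)$ by $\pi(\mathcal{B}u_f)+O(h)$. Finally, using Assumption~\ref{assum:smooth} and reversibility (Assumption~\ref{assum:sde}), I integrate by parts:
\[
\pi(\mathcal{B}u_f)=\int u_f\,\mathcal{B}^{\ast}\pi\,dx=\int u_f\,\frac{\mathcal{B}^\ast\pi}{\pi}\,d\pi .
\]
Setting $\tilde\phi=-\mathcal{B}^\ast\pi/\pi$, I then need to move $u_f$ onto $f$. To do this I solve $\mathcal{L}w=\tilde\phi-\pi(\tilde\phi)$; self-adjointness of $\mathcal{L}$ in $L^2(\pi)$ gives $\int u_f\tilde\phi\,d\pi=\int f\,w\,d\pi$ up to constants. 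I then take $\phi:=w-\pi(w)$, which is independent of $f$ and has $\int\phi\,d\pi=0$. One also has to check that $\int\mathcal{B}^\ast\pi=0$, which ensures there is no constant term.

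The main obstacle is uniformity in $h$. Two things must hold uniformly: the moment bounds $\sup_{h<h_0}\pi_h(\|x\|^p)<\infty$, and the polynomial growth bound on $r_h^f$. Theorem~\ref{thm:geom-ergodicity} is stated for fixed $h$. I would therefore redo the drift condition of Proposition~\ref{prop:RoTULA-drift} with $W(x)=1+\|x\|^{2p}$ and verify that its constants do not degenerate as $h\to0$, using the bound $\mathbb{E}\|X_1-x\|^2=2dh$ together with Assumption~\ref{assum1}. A secondary difficulty is the regularity of $u_f$ and $w$ in $C_P^\infty$, which I would take from the exponential ergodicity of $Y$ plus Theorem~\ref{thm:DiffusionSDE}.
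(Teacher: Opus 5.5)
Your proposal is correct and gives the same correction function, but it is organized differently from the paper. The paper never runs the stationarity argument itself. It checks the hypotheses of Theorem~3.3 of \cite{Lelievre2018} and reads off $\phi=-\mathcal{L}^{-1}(\mathcal{A}^\ast\mathbb{I})$. The hypotheses it checks are the expansion $P_hf=f+h\mathcal{L}f+h^2\mathcal{A}f+O(h^3)$ from Proposition~\ref{lmm:Weakexp}, $\int\mathcal{L}f\,d\pi=0$, invertibility of $\mathcal{L}$ on $\mathcal{S}_0$ via Pardoux--Veretennikov and elliptic regularity, and $\mathcal{A}^\ast\mathbb{I}\in\mathcal{S}_0$.

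You instead carry out by hand the Talay--Tubaro bookkeeping that underlies that theorem. You test stationarity of $\pi_h$ against $u_f=\mathcal{L}^{-1}(f-\pi(f))$, apply the same identity again to $\mathcal{A}u_f$, and use self-adjointness of $\mathcal{L}$ to move $u_f$ onto $f$. Your $\phi=w-\pi(w)$ is exactly the paper's $\phi$. The reason is that $(\mathcal{L}^2)^\ast\pi=0$ for the Lebesgue adjoint, so $\mathcal{B}^\ast\pi/\pi$ coincides with the paper's $\mathcal{A}^\ast\mathbb{I}$.

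The citation buys brevity. Your version exposes a hypothesis the paper passes over: moment bounds on $\pi_h$ that are uniform in $h$. The paper only invokes finiteness of moments for each fixed $h$. You do not need a new drift condition with polynomial $W$ to get this. The constants $a$, $b$, $M$ in Proposition~\ref{prop:RoTULA-drift} are bounded uniformly for $h\in(0,1)$. Integrating that inequality against $\pi_h$ therefore gives $\pi_h(W_a)\le hb/(1-e^{-a^2h})\le b\,e^{a^2}/a^2$, which controls all polynomial moments at once.

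Two small repairs are needed.

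First, stationarity gives $\pi_h(f)-\pi(f)=-h\,\pi_h(\mathcal{A}u_f)-h^2\pi_h(r_h^{u_f})$ with the full $\mathcal{A}$, not with $\mathcal{B}$. The term $\tfrac12\mathcal{L}^2u_f$ has zero $\pi$-mean but only an $O(h)$ $\pi_h$-mean. You may drop it only after passing from $\pi_h$ to $\pi$, at which point it is absorbed into the $h^2$ remainder.

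Second, Theorem~\ref{thm:DiffusionSDE} is a finite-horizon statement. It gives no $C_P^\infty$ bounds on $\int_0^\infty(P_tf-\pi(f))\,dt$, because its derivative bounds do not decay in $t$. Ergodicity of $Y$ gives polynomial growth of $u_f$ itself. Polynomial bounds on its derivatives, and on those of $w$, then follow from interior elliptic estimates on unit balls. That is what the paper's appeal to Pardoux--Veretennikov and \cite{gilbarg1977elliptic} is meant to supply.
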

\begin{proof}[Proof Sketch] 
We verify the assumptions in Theorem 3.3 of \cite{Lelievre2018} with the help of Proposition \ref{lmm:Weakexp}, which leads to the required bias result.
\end{proof}

\section{Stochastic Gradient SULA}
Stochastic gradient MCMC methods like SGLD have become a standard tool for Bayesian inference in modern machine learning when complete gradient computation is infeasible. However, SGLD suffers from the same fragility issues as ULA. As a result, there is a growing interest in developing sampling schemes that remain stable under noisy gradients \citep{chen2014stochastic,mauri2024robust,patterson2013stochastic,  teh2016consistency}. In this direction, we propose a stochastic gradient variant of the SULA algorithm, called SG-SULA. For $k = 1, \dots, N$, let $\pi_k$ denote the target density contributed by the $k$th observation, where $N$ is the number of samples in the data. For each coordinate $i=1, \dots, d$, the stochastic gradient estimator of $\nabla_i \log \pi$ is
\begin{equation}
\widehat{\nabla}_i \log \pi(x)
=
\frac{N}{n} \sum_{j \in S_n} \nabla_i \log \pi_j(x),
\end{equation}
where $S_n \subset \{1,\dots,N\}$ is a subset of size $n \ll N$, sampled uniformly at random. 

Critically, although the stochastic gradient estimator is unbiased at the gradient level, the nonlinear structure of $\mathcal{S}_{h}$ introduces bias in the stochastic gradient variant. The SGBD of \cite{mauri2024robust} faces a similar challenge where it is resolved by imposing additional assumptions on the gradient noise. Specifically, under Gaussianity of the gradient noise, \cite{mauri2024robust} show that a corrected version of SGBD yields \emph{almost} unbiased one-step updates. Under a similar paradigm, we show that the proposed SG-SULA algorithm may yield exact unbiased one-step updates. We define for each $i=1,\dots,d,$ and $c >1/2,$
\begin{equation}
\widehat{\mathcal{S}}^{(i)}_{h}(x, \alpha)
: =
\frac{\alpha \widehat{\nabla}_i \log \pi(x)}{
	\sqrt{1 + c\,h (\alpha  \widehat{\nabla}_i\log\pi(x) )^2}}, \qquad \widehat{\mathcal{C}}^{(i)}_{h}(x, \alpha)
=
\sqrt{1 - \frac{h}{2} \widehat{\mathcal{S}}^{(i)}_{h}(x, \alpha)^2 }.\end{equation}
The drift $\widehat{\mathcal{S}}^{(i)}_{h}(x, \alpha)$ multiplies the gradient by a factor of $\alpha$ so that $\widehat{\mathcal{S}}^{(i)}_{h}(x, 1)$ is the same as \eqref{eq:SULA_drift} with stochastic gradients. We first assume the stochastic gradient is of the form
$
\widehat{\nabla} \log \pi(x) = \nabla \log \pi(x) + \eta_t,
$
where $\eta_t = (\eta^{(1)}_t, \eta^{(2)}_t,\ldots, \eta^{(d)}_t)$ represents the stochastic noise. The following result establishes that the stochastic gradient  reduces the magnitude in conditional expectation of the saturated drift when the stochastic noise is symmetric. The proof of the following proposition is in the appendix.

\begin{proposition}\label{sgprop1}
Assuming $\eta^{(i)}_t \stackrel{d}{=} -\eta^{(i)}_t$, for each $i = 1,\dots,d$, we obtain a shrinkage bound
\begin{equation}
\left|\mathbb{E}\left[\widehat{\mathcal{S}}^{(i)}_{h}(X_t, 1) \mid X_t\right]\right|
	\le
	\left|\mathcal{S}^{(i)}_{h}(X_t)\right|.
\end{equation}
\end{proposition}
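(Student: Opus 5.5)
Conditioning on $X_t = x$, write $g = \nabla_i \log \pi(x)$ and $\eta = \eta^{(i)}_t$. Define the scalar map $s(u) = u/\sqrt{1+chu^2}$. Then $\widehat{\mathcal{S}}^{(i)}_{h}(x,1) = s(g+\eta)$ and $\mathcal{S}^{(i)}_h(x) = s(g)$, so the claim reduces to the one-dimensional inequality
\[
\bigl|\mathbb{E}[s(g+\eta)]\bigr| \le |s(g)|
\]
for symmetric $\eta$. I will assume $\eta$ is independent of $X_t$ given the conditioning, or at least that its conditional law given $X_t$ is symmetric, which is how the hypothesis should be read. The function $s$ is odd, strictly increasing, bounded, and concave on $[0,\infty)$: indeed $s'(u) = (1+chu^2)^{-3/2}$ is decreasing in $|u|$.

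First use symmetry. Since $\eta \stackrel{d}{=} -\eta$, we have
\[
\mathbb{E}[s(g+\eta)] = \tfrac12\,\mathbb{E}\bigl[s(g+\eta)+s(g-\eta)\bigr].
\]
By oddness of $s$ it suffices to treat $g \ge 0$: the case $g<0$ follows by replacing $(g,\eta)$ with $(-g,-\eta)$. If $g=0$, the expectation is $0$ and the bound holds trivially. So the core of the proof is a pointwise estimate, valid for $g>0$ and every $e \ge 0$:
\[
-s(g) \le \tfrac12\bigl(s(g+e)+s(g-e)\bigr) \le s(g).
\]
Taking expectations over $e = |\eta|$ then gives the result.

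\textbf{Upper bound.} If $e \le g$, both arguments lie in $[0,\infty)$, and concavity of $s$ there gives $\tfrac12(s(g+e)+s(g-e)) \le s(g)$. If $e>g$, oddness gives $s(g-e) = -s(e-g)$. Define
\[
\psi(e) = s(g+e) - s(e-g) = \int_{e-g}^{e+g} s'(u)\,\der u .
\]
This is an integral of $s'$ over an interval of fixed length $2g$ centred at $e$. Since $s'$ is even and decreasing in $|u|$, this integral is maximized at $e=0$, where it equals $2s(g)$. Hence $\psi(e) \le 2s(g)$, which is the upper bound.

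\textbf{Lower bound.} If $e \le g$, both terms are nonnegative. If $e>g$, then $s(g+e) \ge s(e-g)$ by monotonicity, so the average is still $\ge 0 \ge -s(g)$.

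Combining the two bounds, $0 \le \mathbb{E}[s(g+\eta)] \le s(g)$ for $g \ge 0$, which is the claim. This also shows the sign of the drift is preserved, so the stochastic gradient genuinely shrinks it. The one step needing care is the case $e>g$ in the upper bound, where concavity on the half-line no longer applies directly. The "sliding window" integral argument using the unimodality of $s'$ handles it. I also need integrability, but this is automatic because $s$ is bounded by $1/\sqrt{ch}$.
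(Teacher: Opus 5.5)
Your proposal is correct and follows essentially the paper's route. You symmetrize over $\pm\eta$, get the sign (lower) bound from oddness and monotonicity of $s$ (the paper's Lemma~\ref{lmm:propSG2}), and get the upper bound from $s'$ being even and decreasing in $|u|$ (the paper's Lemma~\ref{lmm:propSG1}, which writes the gap as $\tfrac12\int_0^{e}\bigl(s'(g+t)-s'(g-t)\bigr)\,\der t\le 0$); since your sliding-window identity $s(g+e)+s(g-e)=\int_{e-g}^{e+g}s'(u)\,\der u$ holds for every $e\ge 0$, the separate concavity case $e\le g$ is unnecessary and your argument then coincides with the paper's.
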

Proposition~\ref{sgprop1} shows that stochastic gradients shrink the original drift, and therefore suggesting a correction by rescaling. For $Z\sim\mathcal{N}(0,1)$, the drift in \eqref{eq:SULA_drift} admits the representation
\begin{equation}\label{eq:integral-form}
\drifti(x)=\sqrt{\frac{2\pi}{ch}}\,
\E\bigl[Z\Phi\bigl(\sqrt{ch}\,\nabla_i
\log\pi(x)Z\bigr)\big].
\end{equation}
Following the strategy of \citet{mauri2024robust}, The next proposition establishes a rescaling that corrects the Gaussian CDF term after averaging over the gradient noise for each fixed $z$. The proof is in the appendix.
\begin{proposition}\label{sgprop2}
	Let $\eta_t^{(i)}\sim\mathcal{N}(0,\tau^2)$ with $\tau>0$. For fixed $z\in\mathbb{R}$ such that $|z|<1/(\tau\sqrt{ch})$, define $\alpha_{z,h}^{\tau}=(1-chz^2\tau^2)^{-1/2}$. Then
	\begin{equation}
        \mathbb{E}\left[\Phi\left(\alpha_{z,h}^{\tau}\sqrt{ch}\,\widehat{\nabla}_i\log\pi(X_t)z\right)\;\middle|\;X_t\right]=\Phi\left(\sqrt{ch}\,\nabla_i\log\pi(X_t)z\right).
	\end{equation}
\end{proposition}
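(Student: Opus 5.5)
The plan is to reduce the statement to the standard Gaussian identity
\[
\E\bigl[\Phi(a+bW)\bigr]=\Phi\!\left(\frac{a}{\sqrt{1+b^2}}\right),\qquad W\sim\mathcal{N}(0,1),\ a,b\in\R .
\]
I would prove this identity first. Let $U\sim\mathcal{N}(0,1)$ be independent of $W$. Then $\Phi(a+bW)=\mathbb{P}(U\le a+bW\mid W)$, so $\E[\Phi(a+bW)]=\mathbb{P}(U-bW\le a)$. Since $U-bW\sim\mathcal{N}(0,1+b^2)$, the right-hand side follows.

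Next, condition on $X_t$ and write $g=\nabla_i\log\pi(X_t)$, which is fixed given $X_t$. Write $\widehat{\nabla}_i\log\pi(X_t)=g+\eta_t^{(i)}$ with $\eta_t^{(i)}=\tau W$. I assume here, as the setup intends, that $\eta_t^{(i)}$ is independent of $X_t$, or at least that its conditional law given $X_t$ is $\mathcal{N}(0,\tau^2)$. The argument of $\Phi$ is then $a+bW$ with
\[
a=\alpha_{z,h}^{\tau}\sqrt{ch}\,g\,z,\qquad b=\alpha_{z,h}^{\tau}\sqrt{ch}\,\tau z .
\]
The identity gives the conditional expectation as $\Phi\bigl(a/\sqrt{1+b^2}\bigr)$.

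The only real step is the algebra showing that $\alpha_{z,h}^{\tau}/\sqrt{1+b^2}=1$. We have $1+b^2=1+chz^2\tau^2(\alpha_{z,h}^{\tau})^2$. Substituting $(\alpha_{z,h}^{\tau})^2=(1-chz^2\tau^2)^{-1}$ gives
\[
1+b^2=\frac{(1-chz^2\tau^2)+chz^2\tau^2}{1-chz^2\tau^2}=(\alpha_{z,h}^{\tau})^2 .
\]
Hence $\sqrt{1+b^2}=\alpha_{z,h}^{\tau}$ because $\alpha_{z,h}^{\tau}>0$, and so $a/\sqrt{1+b^2}=\sqrt{ch}\,g\,z$, which is the claim. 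The restriction $|z|<1/(\tau\sqrt{ch})$ is used only to make $1-chz^2\tau^2>0$, so that $\alpha_{z,h}^{\tau}$ is well defined and positive.

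I expect no substantive obstacle. The points needing care are the conditional independence of the noise from $X_t$ and the positivity of $\alpha_{z,h}^{\tau}$ when taking the square root.
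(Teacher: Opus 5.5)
Your proposal is correct and follows essentially the same route as the paper: both apply the Gaussian identity $\E[\Phi(a+bW)]=\Phi\bigl(a/\sqrt{1+b^2}\bigr)$ and then check that $\alpha_{z,h}^{\tau}$ exactly cancels the shrinkage factor $\sqrt{1+ch(\alpha_{z,h}^{\tau})^2z^2\tau^2}$. The paper solves this equation for $\alpha$ and uses the identity without proof, whereas you verify the given $\alpha_{z,h}^{\tau}$ directly, prove the identity, and state explicitly the assumption (implicit in the paper) that $\eta_t^{(i)}\sim\mathcal{N}(0,\tau^2)$ conditionally on $X_t$.
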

The Gaussian assumption on the stochastic gradients is certainly limiting but may often hold true when batch size, $n$, is large enough. Proposition~\ref{sgprop2} gives an exact correction of the Gaussian CDF term for each fixed $z$ whenever $|z| < 1/(\tau \sqrt{c h}).$  However, outside this regime, the correction factor $\alpha_{z,h}^\tau$ is not well-defined. Thus, we define $\tilde{\alpha}_{z,h}^\tau = \alpha_{z,h}^\tau$ when $|z| < 1/(\tau \sqrt{ch})$ and $1$ otherwise. Through the representation in \eqref{eq:integral-form}, this
provides a motivation for rescaling the stochastic
gradient in SULA. We apply this rescaling to both
coefficients in \eqref{eq:SULA_drift} to construct
SG-SULA. With this, we present the SG-SULA algorithm in Algorithm~\ref{csg-algo}, where $\texttt{var}_{S_n}^{(i)}$ denotes the sample variance of the $i$th component of the stochastic gradients in batch $S_n$.


\begin{algorithm}[t]
	\caption{Stochastic Gradient SULA (SG-SULA)}
	\label{csg-algo}
	\begin{algorithmic}[1]
		\State \textbf{Input:} $X_0 \in \mathbb{R}^d$, step size $h>0$,
		\Statex \hspace{\algorithmicindent} $\beta\in(0,1)$, initial variances $\{\widehat{\tau}^{(i)}_0\}_{i=1}^d$
		\For{$t=0,1,2,\dots$}
		\State Sample $S_n \subset \{1,\dots,N\}$ uniformly at random
		\For{$i=1,\dots,d$} \Comment{Parallelizable}
		\State Compute $\widehat{\nabla}_i\log\pi(X_t)$ using $S_n$
		\State Update noise
		$$
		\begin{aligned}
			\widehat{\tau}^{(i)}_{t+1}
= (1-\beta)\widehat{\tau}^{(i)}_t
+ \beta\sqrt{\texttt{var}_{S_n}^{(i)}}.
		\end{aligned}
		$$
		\State Draw $Z_t^{(i)}=z\sim\mathcal{N}(0,1)$
		\State Compute modified drift
		$\widehat{\mathcal{S}}^{(i)}_h(X_t,\tilde{\alpha}^{\tau}_{z,h})$
		\Statex \hspace{\algorithmicindent}\hspace{\algorithmicindent}\hspace{\algorithmicindent}
		and modified diffusion
		$\widehat{\mathcal{C}}^{(i)}_h(X_t,\tilde{\alpha}^{\tau}_{z,h})$
		\EndFor
		\State Update
		$$
		\begin{aligned}
			X_{t+1} &= X_t
			+ h\,\widehat{\mathcal{S}}_h(X_t,\tilde{\alpha}^{\tau}_{z,h}) + \sqrt{2h}\,
			\widehat{\mathcal{C}}_h(X_t,\tilde{\alpha}^{\tau}_{z,h})Z_t
		\end{aligned}
		$$
		\EndFor
	\end{algorithmic}
\end{algorithm}

\section{Numerical Experiments}

We present three sets of examples of increasing complexity. The first set are implementations of SULA on targets that provide proof of concept and highlight utility of the method. Next, we present a Bayesian study of neural encoding where the model possesses a superlinear drift. Standard algorithms struggle to effectively employ gradients here. We implement both unadjusted and stochastic gradient methods. Our final example is that of a Bayesian probabilistic matrix factorization studying the MovieLens dataset. The size of this dataset is prohibitively large so that we only implement stochastic gradient methods here. Throughout, we compare SULA and SG-SULA to ULA, TULA, UBA and their stochastic gradient counterparts; to ensure reproducibility, code for SGBD was used directly from the open access repository shared in \cite{mauri2024robust}. All experiments were conducted on a Linux machine with an Intel Xeon Gold 6226R CPU (16 cores dual socket) and 256GB RAM, running Ubuntu 22.04.

\subsection{Toy Examples: Gaussian and Super-Gaussian}

We first demonstrate robustness to discretization of our proposed SULA algorithm on two targets: (i) a heterogeneous 100-dimensional Gaussian target with marginal variances on a grid from 10 to 0.1, and (ii) a 100-dimensional super-Gaussian distribution with independent components $\propto e^{-x^4}$. The Gaussian target is designed to have heterogeneously scaled components. Consequently, choosing a step-size $h$ without any preconditioning can be challenging, with preconditioning often  being unreliable \citep{hird2025quantifying} as well as computationally burdensome. Similarly, the super-Gaussian distribution has a non-Lipschitz log-gradient, but one that is one-sided Lipschitz and exhibits polynomial growth, satisfying Assumption~\ref{ass:main}.

\begin{figure*}[t]
    \centering
    \includegraphics[width=0.4\linewidth]{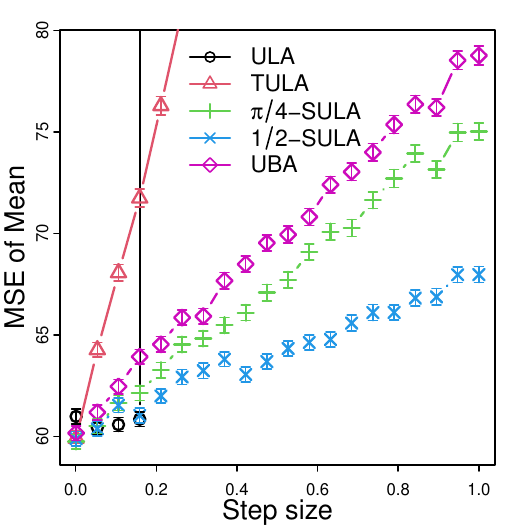}
    \includegraphics[width=0.4\linewidth]{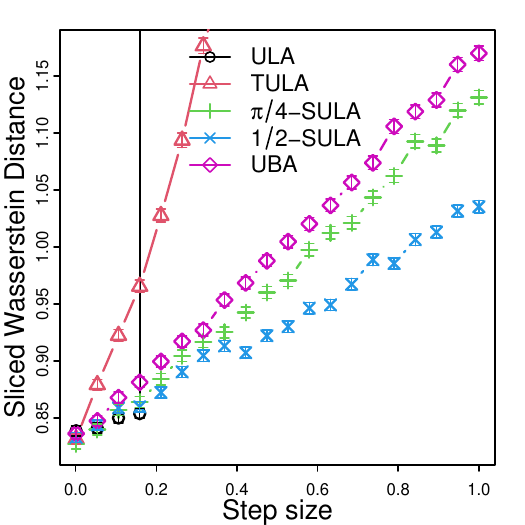}
    \includegraphics[width=0.4\linewidth]{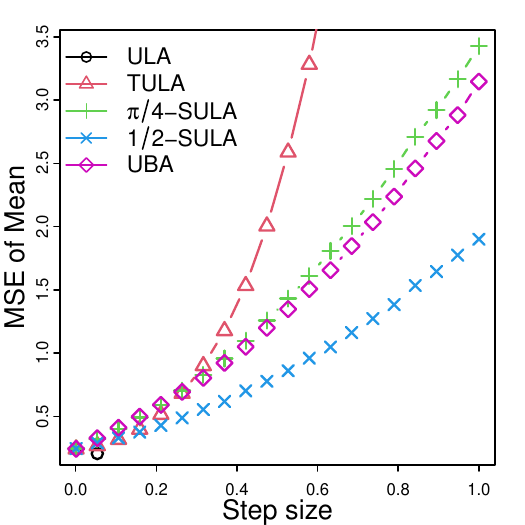}
    \includegraphics[width=0.4\linewidth]{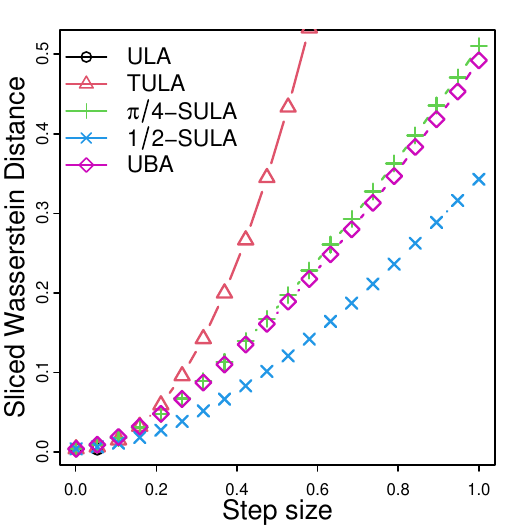}        
    \caption{(Top row): Mean squared error for estimating the mean and estimated sliced Wasserstein distance from the truth for the Gaussian example. (Bottom row): Mean squared error for estimating the mean and estimated sliced Wasserstein distance from the truth for the super-Gaussian example. Standard errors over 1000 replications are also plotted.}
    \label{fig:toy}
\end{figure*}

We ran 1000 replications of all algorithms with SULA being run for $c = \pi/4$ and $c = 0.51$. We call these $\pi/4-$SULA and $1/2-$SULA, respectively. The simulation was run for varying step-sizes, keeping the process runtime fixed at $T = 100$. Over repeated simulations, we compare the mean squared error (MSE) in the sampling mean and the sliced Wasserstein distance \citep{bonneel2015sliced,nadjahi2020statistical} between the obtained samples and independent samples obtained from the truth. Figure~\ref{fig:toy} summarizes the results. Noticeably, ULA  breaks down, escaping from the plot after small step-sizes in both examples. TULA exhibits worse discretization error than SULA and UBA, although it does not break down as dramatically as ULA. On the other hand, SULA and UBA exhibit more robust behavior with $\pi/4-$SULA and UBA being similar (as expected). In both examples, the $1/2-$SULA sampler performs significantly better  yielding a slower discretization error rate than the other robust methods. 

\subsection{Bayesian Study of Neural Encoding}
\label{sec:neural}
We evaluate our SULA scheme on a neural encoding task using spike train recordings within a count response Bayesian model. Such models are widely used in computational neuroscience to characterize how external stimuli and spike history influence neural firing rates. This setting presents a challenging inference problem due to high-dimensional correlated covariates and sparse event counts. A Bayesian formulation enables uncertainty quantification over stimulus filters and firing rates, which is essential for scientific interpretability. Consequently, this dataset serves as a realistic benchmark for scalable inference in non-conjugate, high-dimensional probabilistic models.  

We use a spike train dataset reported in \cite{d2023efficient}, where calcium imaging is used to record neuronal activity in response to external stimuli. The dataset contains $N = 920$ observations, each representing the number of activation events for a neuron. Associated with each observation is a vector of $p = 23$ covariates, denoted by $x_i \in \mathbb{R}^{23}$, which captures relevant features. To model the count data, we adopt a Bayesian Poisson regression framework proposed in \cite{d2023efficient}. For each $i = 1, \dots, N$, we have
$$
Y_i \mid \beta \sim \mathrm{Poisson}\big(\exp(x_i^\top \beta )\big),
$$
where $\beta  \in \mathbb{R}^{23}$ is the unknown parameter vector with Gaussian prior, that is,
$
\beta  \sim \mathcal{N}(0, 2 I_{23}),
$
resulting in an intractable posterior. To obtain a reliable benchmark, we generate reference samples using a carefully tuned Tamed-MALA algorithm. 

\begin{figure}
	\centering
	\includegraphics[width=0.65\linewidth]{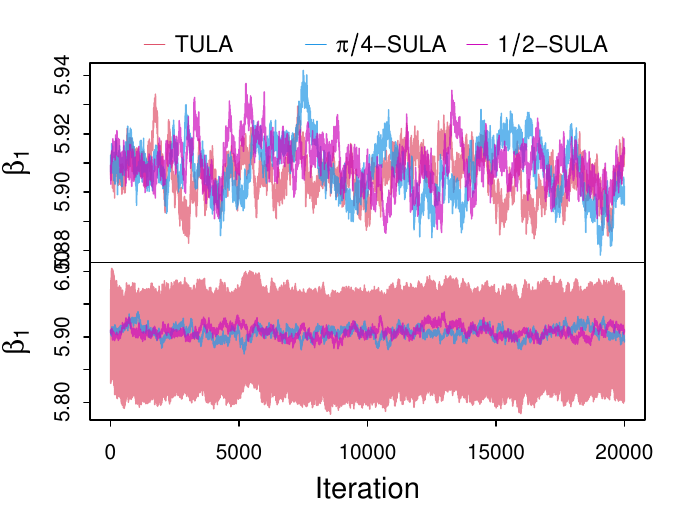}
    
	\includegraphics[width=0.4\linewidth]{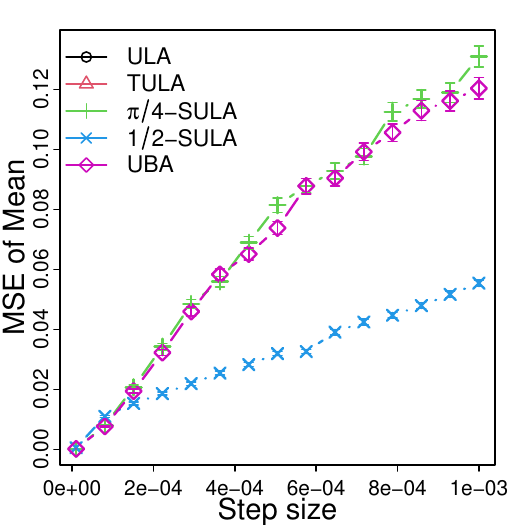}
	\includegraphics[width=0.4\linewidth]{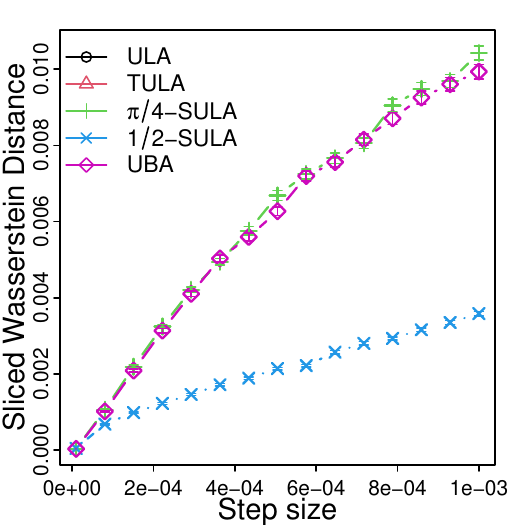}
	\caption{(Top) Trace plot of the first regression coefficient for $h = 8\times 10^{-7}$ (upper panel) and $h = 1.6\times 10^{-6}$ (lower panel). (Right) Mean squared error for the mean and (left) estimated sliced Wasserstein distance, both as a function of increasing step-size. Standard errors over 1000 replications are also plotted.}
	\label{fig:neural}
\end{figure}
First, we demonstrate that despite careful taming of the gradient, TULA struggles in this model. We set $h = 8 \times 10^{-7}$ and then subsequently double $h$ to obtain $h = 1.6 \times 10^{-6}$. Trace plots for the first component are given in Figure~\ref{fig:neural} (left). Noticeably, SULA remains robust to doubling of the step-size due to the specific variance balancing employed, however TULA is drastically affected by this change of discretization. We further compare SULA to UBA using replicated experiments. All algorithms are initialized at the maximum likelihood estimate and run for $T = 0.1$ time units for various choices of step size. Performance is evaluated over $1000$ independent replications. We report the  MSE of the estimated posterior mean, along with the sliced Wasserstein distance between the empirical sample distributions and those produced by the reference sampler, averaged across independent runs. Figure \ref{fig:neural} (right) reports the behavior which closely mirrors that seen in the toy example in Figure \ref{fig:toy}. In particular, ULA becomes unstable as the step size increases, leading to divergence. TULA exhibits improved stability but still suffers from noticeable discretization bias. In contrast, both SULA and UBA demonstrate significantly more stable behavior, with $\pi/4$-SULA closely matching UBA. Notably, the $1/2$-SULA variant consistently achieves the best performance.  

We also implement stochastic gradient algorithms with $n = 100$ and compare the sliced Wasserstein distance to the true distribution as well as the effective sample size (ESS) per second \citep{vats2019multivariate}. We do not implement SGLD as SGLD breaks down at step-sizes larger than $10^{-7}$. It is well known that conditional unbiasedness of the stochastic gradient update does not translate to convergence of stochastic gradient methods to $\pi_h$ for a fixed $h$ \citep{teh2016consistency}, since the noise in the stochastic gradient interacts non-trivially with the Gaussian update. Consequently, increasing $h$ does not always increase the bias, which is why stochastic gradient implementations often use larger step-sizes. Results in Figure~\ref{fig:neural-sg} echo this behavior. Here, $1/2$-SG-SULA exhibits a lower bias as a function of discretization error while also yielding higher ESS per second. 
\begin{figure}
	\centering
	\includegraphics[width=0.4\linewidth]{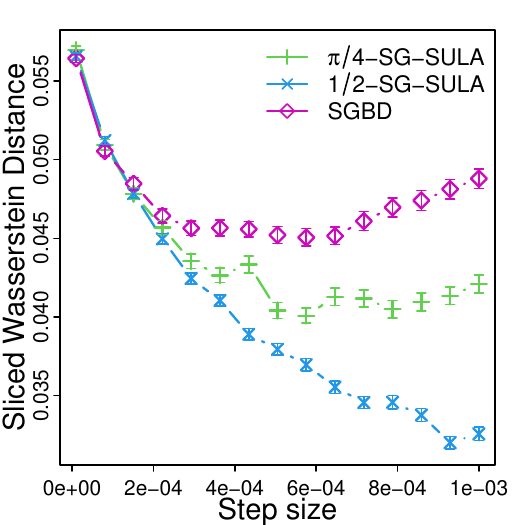}
	\includegraphics[width=0.4\linewidth]{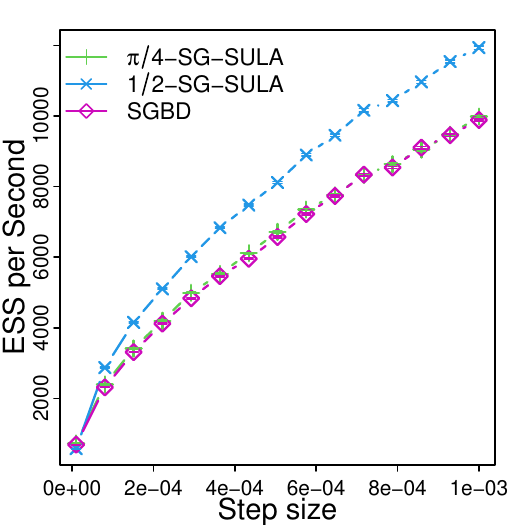}
	\caption{(Left) Estimated sliced Wasserstein distance as a function of step-size. (Right) Effective sample size per second of the algorithms. Standard errors over 1000 replications are also plotted.}
	\label{fig:neural-sg}
\end{figure}

\subsection{Bayesian Probabilistic Matrix Factorization}

To demonstrate the strength of SULA in a high-dimensional realistic setting, we employ the Bayesian probabilistic matrix factorization (BPMF) model of \cite{salakhutdinov2008bayesian}, which has a high-dimensional latent structure and sparse observed user–item interactions. The MovieLens dataset with $10^5$ ratings in $\{1,2,3,4,5\}$ from 1000 users and 1700 movies clearly exhibits these characteristics, making it a standard benchmark for evaluating scalable sampling methods. We use this setting to compare our proposed SG-SULA scheme with SGLD and the corrected variant of SGBD \cite{mauri2024robust} to assess its empirical performance. Let $R_{ij}$ denote the rating of user $i$ for item $j$. Then for observed entries,
$$
R_{ij} \mid U, V \sim \mathcal{N}(U_i^\top V_j, \alpha^{-1}),
$$
where $U_i, V_j \in \mathbb{R}^p$ are latent user and item factors. We place Gaussian priors $U_i \sim \mathcal{N}(\mu_U, \Lambda_U^{-1})$ and $V_j \sim \mathcal{N}(\mu_V, \Lambda_V^{-1})$ for $i=1,\dots,U$ and $j=1,\dots,M$, with diagonal precision matrices $\Lambda_U=\mathrm{diag}(\lambda_{U,1},\dots,\lambda_{U,p})$ and $\Lambda_V=\mathrm{diag}(\lambda_{V,1},\dots,\lambda_{V,p})$. The mean parameters follow $\mu_U \sim \mathcal{N}(0, \Lambda_U^{-1})$ and $\mu_V \sim \mathcal{N}(0, \Lambda_V^{-1})$, while the precision components satisfy $\lambda_{U,k}, \lambda_{V,k} \sim \mathrm{Gamma}(\alpha_0,\beta)$ for $k=1,\dots,p$. Similar to \cite{mauri2024robust}, we set $\alpha_0 = 1, \beta = 5$, and $\alpha = 3$. The resulting posterior distribution on the parameters is intractable and sampling or other approximation methods become necessary. 

Experiments are conducted with 20 latent factors. Under this specification, the posterior distribution over $(U, V, \mu_U, \mu_V, \Lambda_U, \Lambda_V)$ is $54{,}080$-dimensional. We use an $80/20$ train--test split, mini-batches of size $800$ and report root MSE of predicted user ratings on the test set as well as predictive log-likelihood of the test set, using running MCMC estimates. For SGLD, we used $h = 0.0078$ as anything larger than that exploded the gradient updates. For SGBD and SG-SULA we set $h = 0.025$. We initialize all samplers from a random draw from the prior distribution. Figure \ref{fig:bpfm} presents the results after a burn-in of 2000 iterations, where evidently both variants of the SG-SULA sampler demonstrate stable and accurate predictive behavior. Noticeably, SG-SULA methods move slower than other methods early on in the exploratory phase, but quite rapidly return improvements in prediction. On the other hand, corrected SGBD implementation is unable to perform well we suspect due to the lingering bias of the stochastic gradient updates.

\begin{figure}
	\centering
	\includegraphics[width=0.4\linewidth]{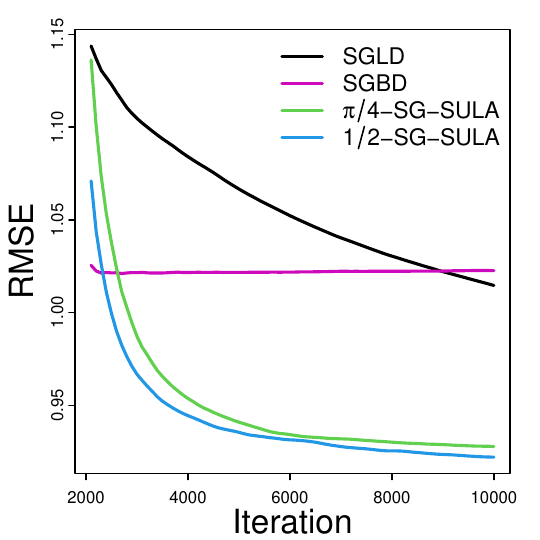}
	\includegraphics[width=0.4\linewidth]{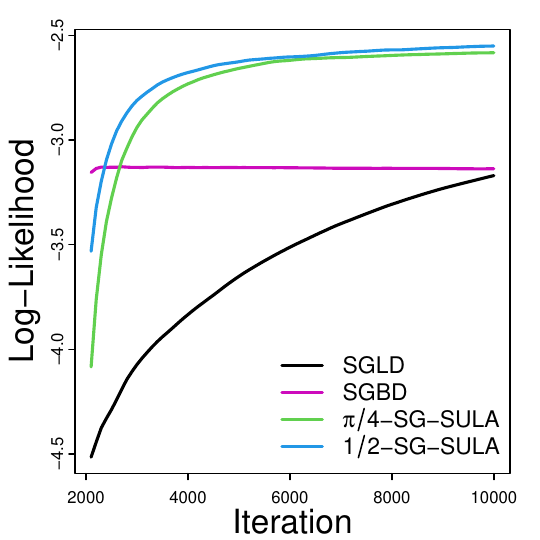}
	\caption{(Left) Root MSE of the predicted user ratings and (right) predictive log-likelihood.}
	\label{fig:bpfm}
\end{figure}

\section{Discussion}

We propose SULA, an Euler-Maruyama style discretization of the Langevin SDE that ensures robustness by making the expected squared jump distance independent of the gradient. This is achieved by employing a saturation function of the gradient in the drift while also using a gradient-informed diffusion term. The two terms are carefully chosen to yield a desired variance balancing.  SULA is particularly effective when the $\nabla \log \pi$ is superlinear or has a large Lipschitz constant. Our theoretical results hold for all $c > 1/2$, and we focus on two specific variants: (i) $\pi/4-$SULA, which reproduces UBA-like first and second moments without requiring the additional Bernoulli randomness of UBA, and (ii) $1/2-$SULA, which mimics ULA when gradients are small and behaves like a descent algorithm when gradients are large, while still recovering the Langevin diffusion in the small-$h$ limit.  Additionally, under an assumption of Gaussianity of the gradient noise, we show how a stochastic gradient version of SULA can be adapted to yield one-step lower conditional unbiasedness. Our numerical experiments highlight the robustness of the SULA scheme and predictive performance of SG-SULA.

\begin{paragraph}{Limitations.}
    Our rates are stated only in terms of $h$, with dimension absorbed into the constants. A careful non-asymptotic analysis would highlight this dependence, but due to the nonlinear nature of both the drift and diffusion function, such an analysis is fairly challenging. Further for the reduced bias result of SG-SULA, the Gaussianity assumption on the gradient is limiting and would require large mini-batch sizes particularly for superlinear gradient functions. Finally, both UBA and SULA can suffer from slow drift without an educated starting value. If a warm start is not possible, we first recommend running TULA to obtain a warm start.
    
\end{paragraph}

\begin{paragraph}{Future Work.}
   Several extensions of this work would be critical. A non-asymptotic analysis on the rate of the convergence of the SULA algorithm would highlight the dimension dependence of the algorithm. Extensions to proximal algorithms for when $\log \pi$ is not differentiable would be fruitful \citep{durmus2018efficient}. A  natural extension is to combine SULA with preconditioning. SULA addresses gradient-magnitude robustness, while preconditioned schemes \citep{hird2025quantifying,li2016preconditioned} adapt to the local target geometry; the two mechanisms are complementary and a preconditioned SULA would retain variance balancing while accommodating anisotropic targets. We expect the proof techniques developed here to extend to this setting with appropriate modifications. Finally, like all unadjusted algorithms, SULA performs inexact sampling. A Metropolis correction often employed in each step \citep{Roberts1996} would be required if interest is in directly sampling from $\pi$. The resulting Metropolis-Adjusted SAturated Langevin Algorithm (MASALA) is likely to yield similar robustness benefits as SULA.  
\end{paragraph}

\section*{Acknowledgments and statement on AI use}
Dootika Vats was supported by ANRF under Grant ARGM/2026/0082. The authors utilized
GPT-6 Astra to identify key references across literature and to proofread the final versions of the
paper.

\bibliographystyle{plainnat}
\bibliography{ref_sula}

\appendix
\section*{Appendix}

\renewcommand{\thelemma}{\thesection.\arabic{lemma}}
\renewcommand{\thetheorem}{\thesection.\arabic{theorem}}
\renewcommand{\theproposition}{\thesection.\arabic{proposition}}
\renewcommand{\theequation}{\thesection.\arabic{equation}}
\renewcommand{\thedefinition}{\thesection.\arabic{definition}}

\section{Notations}\label{sec:D-notations}
We here introduce some notation which will be used throughout this work. Let $\mathbb{R}^d$ denote the $d$-dimensional Euclidean space. For $u, v \in \mathbb{R}^d$, we define the scalar product
$\langle u, v \rangle = \sum_{i=1}^d u_i v_i.$  Let $\| \cdot \|$ denote the Euclidean norm when applied to vectors and $\| \cdot \|_F$ denote the Frobenius norm when applied to matrices.  For any function $f : \mathbb{R}^d \rightarrow \mathbb{R}$ and for some $x = (x_1, x_2, \ldots, x_d) \in \mathbb{R}^d$, let $\partial_i f(x)$ denote the partial derivative with respect to $x_i$. In general, we use the notation
$$
\partial_{i_1\cdots i_k} f(x)
:= \frac{\partial^k f(x)}
{\partial x_{i_1}\cdots \partial x_{i_k}}, \qquad i_1,\dots,i_k \in \{1,\dots,d\}.
$$
Furthermore, let $f^{(k)}(x) := \partial_{i_1\cdots i_k} f(x) \in \mathbb{R}$ denote any $k$-th order derivative of function $f(x)$ with respect to the components of $x \in \mathbb{R}^d$. The space $C^l(\mathbb{R}^d)$ denotes functions from $\mathbb{R}^d$ to $\mathbb{R}$ that are $l$-times continuously differentiable. A function $f$ belongs to $C_P^{l,m}(\mathbb{R}^d)$ if all its partial derivatives up to order $l$ are polynomially bounded, meaning there exists a constant $C > 0$ such that
$
|f^{(k)}(x)| \leq C(1 + \|x\|^m), \text{ for all } x \in \mathbb{R}^d \text{ and } k \leq l.
$
Let $\mathbb{N}$ denote the set of natural numbers. We define the space
$$
C_P^l(\mathbb{R}^d) = \bigcup_{m \in \mathbb{N}} C_P^{l,m}(\mathbb{R}^d),
$$
which contains all $l$-times differentiable functions with polynomial growth. Similarly, we define
$$
C_P^\infty(\mathbb{R}^d) = \bigcap_{l \in \mathbb{N}} C_P^l(\mathbb{R}^d),
$$
which consists of all smooth functions with polynomial growth at infinity. We write $\mathbb{E}_x$ to denote expectation conditioned on the process starting at $x$. Let $A_1, A_2 \in \mathbb{R}^d$. Then $A_1\cdot A_2$ is the dot product.
For some $A \in \mathbb{R}^{d \times d}$, $A^\top$ denotes the transpose of $A$. The indicator function for a set $S$ is denoted by $\mathbf{1}_S$. We denote
the gradient vector by $\nabla f(x)$,
the divergence of the gradient (the Laplacian) by
$$
\nabla \cdot \nabla f(x) = \sum_{i=1}^d \frac{\partial^2 f}{\partial x_i^2}(x),
$$
and the Hessian matrix by
$$
\nabla^2 f(x) = \left( \frac{\partial^2 f}{\partial x_i \partial x_j} (x) \right)_{1 \le i,j \le d}.
$$
More generally, for $k\ge 3$, $\nabla^k f(x)$ is the  $k$-dimensional array with entries
$$
\nabla^k f(x) = \left( \partial_{{i_1}\cdots {i_k}} f(x) \right)_{i_1,\dots,i_k \in \{1,\dots,d\}},
$$
and the corresponding Frobenius norm takes the following form
$$
\left\|\nabla^k f(x)\right\|_F :=
\sqrt{\sum_{i_1,\dots,i_k=1}^d
\left(\partial_{{i_1}\cdots {i_k}} f(x)\right)^2}.
$$

\section{Weak Convergence Analysis}\label{app:weak-convergence}
Without loss of generality, we assume that $\pi$ attains its global maximum at the origin since convergence of Markov chains is translation invariant \citep{johnson2012variable}. Therefore as a consequence of Assumption~\ref{assum:LC}, we have 
\begin{equation*}
	\langle \nabla\log \pi(x),\, x\rangle
	\le \beta - \alpha \|x\|^2,
	\qquad \text{for all } x\in\mathbb{R}^d.
\end{equation*}
  Moreover, Assumption~\ref{assum:PG} implies the existence of a constant $C_g>0$ such that for $m \ge 1$
	\begin{equation*}
	\|\nabla\log \pi(x)\|
	\le C_g\bigl(1+\|x\|^{m}\bigr),
	\qquad \text{for all } x\in\mathbb{R}^d.
	\end{equation*}
Now, the following theorem from \cite{skew-symmetric} ensures that expectations of test functions are well-defined and smooth for the Langevin SDE with superlinear drift.
\begin{theorem}[\cite{skew-symmetric}]\label{thm:DiffusionSDE}
	Let $T>0$ be fixed. Under Assumptions~\ref{assum:LC} and \ref{assum:HP},
	consider the Langevin SDE \eqref{eq:langevin} with $Y_0 = x,$ and $ t\in[0,T]$. For a test function $f\in C_P^l(\mathbb{R}^d)$, we define
	$v_t(x) := \mathbb{E}_x[f(Y_t)]$, 
	which is well defined for all $x\in\mathbb{R}^d$ and $t\in[0,T]$, that is
$\mathbb{E}_x\bigl[\,|f(Y_t)|\,\bigr] < \infty .$
	Moreover, $v_t(\cdot)$ belongs to $C_P^l(\mathbb{R}^d)$ for each $t\in[0,T]$ and there exist constants $C>0$ and an integer $m\ge 1$, independent of $t$ and $x \in \mathbb{R}^d$, such that
	$$
	|v_t(x)| \le C\,(1+\|x\|^m),
	\qquad \text{for all }x\in\mathbb{R}^d.
	$$
	Also, for all $k=1,\dots,l$, we have
	$$
	\left\|\nabla^k v_t(x)\right\|_F \le C\,(1+\|x\|^m).
	$$
\end{theorem}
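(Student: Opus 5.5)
The plan is to work pathwise with the flow $x\mapsto Y_t^x$ of \eqref{eq:langevin}, exploiting that the noise is additive. The flow derivatives then satisfy random \emph{linear ODEs} with no martingale term, and Assumption~\ref{assum:HP} controls them deterministically. Throughout, $b:=\nabla\log\pi$, which is smooth with all derivatives polynomially bounded by Assumption~\ref{ass:main}.

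\textbf{Step 1 (well-posedness and moments).} Since $b$ is locally Lipschitz, a unique strong solution exists up to explosion. The consequence $\langle b(x),x\rangle\le\beta-\alpha\|x\|^2$ of Assumption~\ref{assum:LC} (recorded above) excludes explosion by a Khasminskii argument with Lyapunov function $1+\|x\|^2$. Applying It\^o's formula to $\|Y_t\|^{2p}$, localizing with $\tau_R=\inf\{t:\|Y_t\|\ge R\}$, and using Gronwall should give
\[
\sup_{t\le T}\mathbb{E}_x\|Y_t\|^{2p}\le C_p\bigl(1+\|x\|^{2p}\bigr)\qquad\text{for every }p\ge1 .
\]
Since $f\in C_P^l$, this yields $\mathbb{E}_x|f(Y_t)|<\infty$ and $|v_t(x)|\le C(1+\|x\|^m)$ uniformly in $t\in[0,T]$.

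\textbf{Step 2 (first variation).} Formally differentiating $Y_t^x=x+\int_0^t b(Y_s^x)\,ds+\sqrt2 W_t$ gives $J_t=\partial_x Y_t^x$ solving $\dot J_t=\nabla^2\log\pi(Y_t)J_t$ with $J_0=I_d$. For any $\xi\in\mathbb{R}^d$,
\[
\frac{d}{dt}\|J_t\xi\|^2=2\langle J_t\xi,\nabla^2\log\pi(Y_t)J_t\xi\rangle\le 2C\|J_t\xi\|^2 ,
\]
by Assumption~\ref{assum:HP}, so $\|J_t\|\le e^{CT}$ almost surely. This is the point where the one-sided Hessian bound replaces a global Lipschitz condition.

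\textbf{Step 3 (higher variations).} The $k$-th variation $J^{(k)}_t$ satisfies
\[
\dot J^{(k)}_t=\nabla^2\log\pi(Y_t)J^{(k)}_t+F_k(t),\qquad J^{(k)}_0=0,
\]
where $F_k$ is a finite sum of products of $\nabla^{j}b(Y_t)$ with $2\le j\le k$ and lower-order variations $J^{(i)}_t$, $i<k$. Using the same one-sided estimate and Duhamel/Gronwall, $\|J^{(k)}_t\|\le\int_0^t e^{C(t-s)}\|F_k(s)\|\,ds$. Inducting on $k$, combining polynomial growth of $\nabla^j b$ with Step 1 and H\"older's inequality gives $\sup_{t\le T}\mathbb{E}\|J^{(k)}_t\|^q\le C(1+\|x\|^{m_{k,q}})$.

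\textbf{Step 4 (differentiating $v_t$).} Next, $\nabla^k v_t(x)=\mathbb{E}\bigl[\partial^k_x f(Y^x_t)\bigr]$, expanded by Fa\`a di Bruno into terms $f^{(j)}(Y_t)$ times products of variations. Dominated convergence, justified by the polynomial moment bounds, permits the interchange. H\"older's inequality then gives $\|\nabla^k v_t(x)\|_F\le C(1+\|x\|^m)$ for $k\le l$, with $C,m$ independent of $t\le T$. Continuity of $\nabla^k v_t$ follows from continuity of the flow in $x$ together with uniform integrability.

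\textbf{Main obstacle.} The delicate step is rigorously justifying differentiability of $x\mapsto Y^x_t$ when $b$ is only locally Lipschitz with superlinear growth. Classical results assume bounded derivatives. I would first prove the result for the stopped process, where $b$ can be replaced by a globally Lipschitz smooth truncation agreeing on a ball. I would then show that difference quotients converge in $L^q$ as the truncation radius grows. This uses $\mathbb{P}(\tau_R\le T)\to0$ at a polynomial rate from Step 1. The a priori bounds of Steps 2 and 3 are uniform in the truncation, since the truncated Hessian can be chosen to still satisfy $\preceq C I_d$.
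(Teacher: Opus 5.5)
The paper gives no proof of this statement. It quotes it from \cite{skew-symmetric} and uses it as a black box in the proof of Theorem~\ref{thm:weak-true}, so there is no internal argument to compare against. Your Steps 1--4 give a correct, self-contained proof by the standard variational-equation method. Additive noise makes the derivative flows solve random linear ODEs with no martingale part. Assumption~\ref{assum:HP} then bounds $J_t$ pathwise by $e^{CT}$. The higher variations follow by Duhamel, induction and H\"older against the moments from Step 1. Finally, Fa\`a di Bruno together with uniform integrability transfers the bounds to $\nabla^k v_t$.

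What your write-up gains over the citation is that the hypotheses become visible. Step 3 needs polynomial growth of $\nabla^j\nabla\log\pi$ for $j\le l$, i.e.\ the $C_P^\infty$ clause in the preamble of Assumption~\ref{ass:main}. The statement's ``under Assumptions~\ref{assum:LC} and \ref{assum:HP}'' leaves this implicit. Your argument also shows that only the one-sided Hessian bound, not a global Lipschitz condition, enters the flow estimates.

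The one weak point is your final ``main obstacle'' paragraph, which is unnecessary and rests on an unsupported claim. You assert that a globally Lipschitz truncation $b_R$, equal to $b$ on a ball, can be chosen so that the symmetric part of $\nabla b_R$ stays $\preceq CI$ uniformly in $R$. You give no construction, and the obvious choice $\chi_R b$ fails: it picks up the term $b\otimes\nabla\chi_R$, of order $\|b(x)\|/R$ on the transition annulus. For superlinear $b$ this term is not compensated near the outer edge of the cutoff, where $\chi_R$ is small, so a uniform one-sided bound is lost.

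No truncation is needed. Fix $\omega$ and set $U^x_t:=Y^x_t-\sqrt{2}\,W_t(\omega)$. Then $\dot U^x_t=b\bigl(U^x_t+\sqrt{2}\,W_t(\omega)\bigr)$ is an ODE whose vector field is $C^\infty$ in the state, with derivatives continuous in $(t,U)$. Apply Assumption~\ref{assum:LC} to the pair $U_t+\sqrt{2}\,W_t$ and $\sqrt{2}\,W_t$. This gives $\tfrac{d}{dt}\|U_t\|^2\le 2\beta-2\alpha\|U_t\|^2+2\|U_t\|\sup_{s\le T}\|b(\sqrt{2}\,W_s)\|$, so solutions exist on $[0,T]$ for every $x$ simultaneously.

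Classical differentiable dependence on initial data then makes $x\mapsto Y^x_t$ smooth. Its derivatives solve exactly your variational equations, and there is no issue with $x$-dependent null sets. In addition, Assumption~\ref{assum:HP} implies $\langle b(x)-b(y),x-y\rangle\le C\|x-y\|^2$, hence $\|Y^x_t-Y^y_t\|\le e^{Ct}\|x-y\|$ pathwise. This directly dominates the difference quotients you need in Step 4.

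If you prefer to keep the localization, uniformity in $R$ is unnecessary anyway. On the event that paths started near $x$ stay in the ball up to time $T$, the truncated and true variations coincide, so your Step 2--3 bounds, computed with the true $b$, apply directly.
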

We now prove the main weak convergence theorem
\subsection{Detailed Proof of Theorem~\ref{thm:weak-true}}
\begin{proof}
 	Fix $T>0$,
    and a test function $f \in C_P^4(\mathbb{R}^d)$. Define 
 	$$
 	v(t,x) = \mathbb{E}_x[f(Y_t)], \qquad Y_0 =x,
 	$$
 	where $Y_t$ solves the Langevin SDE in (\ref{eq:langevin}). By Theorem \ref{thm:DiffusionSDE}, $v$ is smooth with polynomial growth. For an integer $k$ such that $k h \le T$, we define
 	$$
 	\varphi_n(x) = v((k-n)h, x), \qquad n=0,1,\dots,k.
 	$$
 	Note that
 	$$
 	\varphi_k(x) = v(0,x) = f(x), \qquad \varphi_0(x) = v(k h, x) = \mathbb{E}_x[f(Y_{kh})].
 	$$
 	We are interested in bounding the weak error
 	$$
 	\mathbb{E}_x[f(X_k)] - \mathbb{E}_x[f(Y_{k h})] = \mathbb{E}_x[\varphi_k(X_k)] - \varphi_0(x).
 	$$
 	By the telescoping identity, we write
 	\begin{align}\label{eq:weak-telescope}
 	\mathbb{E}_x[f(X_k)] - \mathbb{E}_x[f(Y_{k h})] 
 	&= \sum_{n=0}^{k-1} \left( \mathbb{E}_x[\varphi_{n+1}(X_{n+1})] - \mathbb{E}_x[\varphi_n(X_n)] \right) \nonumber \\
 	&= \sum_{n=0}^{k-1} \mathbb{E}_x[\varphi_{n+1}(X_{n+1}) - \varphi_n(X_n)].
 	\end{align}
 	It is sufficient to bound each $\mathbb{E}_x[\varphi_{n+1}(X_{n+1}) - \varphi_n(X_n)]$ for each $n$. We can write  the difference as
 	$$
 	\varphi_{n+1}(X_{n+1}) - \varphi_n(X_n) = \big(\varphi_{n+1}(X_{n+1}) - \varphi_{n+1}(X_n)\big) + \big(\varphi_{n+1}(X_n) - \varphi_n(X_n)\big),
 	$$
 	and define
 	$$
 	I_n = \mathbb{E}_x[\varphi_{n+1}(X_{n+1}) - \varphi_{n+1}(X_n)], \qquad
 	J_n = \mathbb{E}_x[\varphi_{n+1}(X_n) - \varphi_n(X_n)].
 	$$
 	Hence
 	\begin{equation}\label{eq:phiweak}
 	\mathbb{E}_x[\varphi_{n+1}(X_{n+1}) - \varphi_n(X_n)] = I_n + J_n.
 	\end{equation}
 	Conditioning on $X_n$ and applying Proposition \ref{lmm:Weakexp} with test function $\varphi_{n+1} \in C_P^4$, we get
 	$$
 	\mathbb{E}_x[\varphi_{n+1}(X_{n+1}) \mid X_n] = \varphi_{n+1}(X_n) + h\, \mathcal{L} \varphi_{n+1}(X_n) + M_1(X_n, h),
 	$$
 	with $|M_1(x,h)| \le R(x) h^2$, $R \in C_P^\infty$.  
 	Hence, we get
 	$$
 	I_n = \mathbb{E}_x[h \, \mathcal{L} \varphi_{n+1}(X_n) + M_1(X_n, h)].
 	$$
 	Also, by definition of $\varphi_n$, we can write
 	$$
 	\varphi_{n+1}(x) - \varphi_n(x) = v((k-n-1)h, x) - v((k-n)h, x),
 	$$
 	which represents the change of $v$ over a small time step $h$.  
 	Since $v$ is smooth and satisfies the backward Kolmogorov equation \cite[p. 46]{Pavliotis2014},
 	$$
 	\partial_t v(t,x) = \mathcal{L} v(t,x).
 	$$
 	Performing a first-order Taylor expansion of $v$ in time around $t = (k-n-1)h$, 
 	$$
 	v((k-n-1)h, x) = v((k-n)h, x) - h\, \partial_t v((k-n-1)h, x) + r(x, h).
 	$$
 	Replacing the time derivative by $\mathcal{L} v$, 
 	we get
 	\begin{equation}\label{eq:jn}
 	\varphi_{n+1}(x) - \varphi_n(x) = - h \, \mathcal{L} \varphi_{n+1}(x) + r(x, h),    
 	\end{equation}
 	where the remainder satisfies a polynomial growth bound due to Theorem \ref{thm:DiffusionSDE} 
 	$$
 	|r(x, h)| \le C_1 \bigl(1 + \|x\|^{m_1}\bigr) h^2,
 	$$
 	for some constants $C_1, m_1 > 0$.  
 	Intuitively, this expresses that the change of $v$ over one backward time step is dominated by the generator $\mathcal{L}$ with a small error of order $h^2$.
 	Taking expectation with respect to $X_n$ in (\ref{eq:jn}), we get
 	$$
 	J_n = \mathbb{E}_x[ - h \, \mathcal{L} \varphi_{n+1}(X_n) + r(X_n, h)].
 	$$
 	Adding $I_n + J_n$ gives
 	\begin{align*}
 	I_n + J_n &= \mathbb{E}_x[h \, \mathcal{L} \varphi_{n+1}(X_n)] + \mathbb{E}_x[M_1(X_n, h)] + \mathbb{E}_x[-h \, \mathcal{L} \varphi_{n+1}(X_n)] + \mathbb{E}_x[r(X_n, h)]\\
 	&= \mathbb{E}_x[M_1(X_n, h)] + \mathbb{E}_x[r(X_n, h)].
 	\end{align*}
 	It is clear that both $M_1(X_n, h)$ and $r(X_n, h)$ are polynomially bounded . Hence by Proposition \ref{prop:mombound}, along with (\ref{eq:phiweak}) and (\ref{eq:weak-telescope}) for some $\mathcal{V}(x)  \in C_P^\infty(\mathbb{R}^d),$ 
    we get
 	$$
 	|\mathbb{E}_x[f(X_k)] - \mathbb{E}_x[f(Y_{k h})] | = \Bigg|\sum_{n=0}^{k-1} (I_n + J_n) \Bigg| \le \mathcal{V}(x) h.
 	$$
 	This completes the proof.
 \end{proof} 

\section{Ergodicity of SULA}
\subsection{Irreducibility, Aperiodicity, and Minorization}
 \begin{proposition}[Irreducibility, Aperiodicity, and Minorization]\label{prop:irreducibility}
 	The Markov chain generated by the SULA scheme is $\lambda$-irreducible with respect to the Lebesgue measure $\lambda$ on $\mathbb{R}^d$, aperiodic, and also all compact sets are small.
 \end{proposition}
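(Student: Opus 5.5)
The key observation is that, conditionally on $X_t=x$, the SULA update \eqref{eq:sula} is Gaussian,
\[
P_h(x,\cdot)=\mathcal{N}\bigl(x+h\,\drift(x),\;2h\,\diff(x)^2\bigr).
\]
Its covariance is diagonal with entries $2h\,\diffi(x)^2 = 2h\bigl(1-\tfrac h2\drifti(x)^2\bigr)$. Everything then follows from two facts: the mean map is continuous, and the covariance stays uniformly nondegenerate. So the plan is first to establish these regularity properties of the kernel, and then to derive a uniform Gaussian lower bound on compact sets. From that lower bound we obtain irreducibility, smallness, and aperiodicity in the standard way, following \cite{Roberts1996}.

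\textbf{Step 1: a uniform covariance bound.} Since $|\drifti(x)|^2 = g^2/(1+chg^2) < 1/(ch)$ for $g=\nabla_i\log\pi(x)$, we get
\[
\diffi(x)^2 \ge 1-\frac{1}{2c} =: \sigma_0^2 > 0,
\]
which uses exactly $c>1/2$. We also have $\diffi(x)^2\le 1$. Therefore the covariance $\Sigma(x)$ satisfies
\[
2h\sigma_0^2 I_d \preceq \Sigma(x) \preceq 2h I_d \qquad \text{for all } x.
\]
Moreover, $\mu(x)=x+h\drift(x)$ is continuous because $\nabla\log\pi$ is continuous (Assumption~\ref{ass:main}).

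\textbf{Step 2: a density lower bound.} The transition density is
\[
p(x,y)=\det(2\pi\Sigma(x))^{-1/2}\exp\bigl(-\tfrac12 (y-\mu(x))^\top\Sigma(x)^{-1}(y-\mu(x))\bigr).
\]
By Step 1 this is bounded below by
\[
(4\pi h)^{-d/2}\exp\Bigl(-\frac{\|y-\mu(x)\|^2}{4h\sigma_0^2}\Bigr).
\]
Fix a compact $K$. Since $\mu$ is continuous (and in any case $\|h\drift(x)\|\le \sqrt{dh/c}$), $\mu(K)$ is contained in a ball of radius $R_K$. Hence for every bounded Borel set $B$,
\[
\inf_{x\in K,\,y\in B} p(x,y)\ge \varepsilon_{K,B}>0 .
\]
This yields the minorization $P_h(x,\cdot)\ge \varepsilon_{K,B}\,\lambda(\cdot\cap B)$ for all $x\in K$, so every compact set with positive measure is small, taking $\nu$ proportional to Lebesgue measure on $B$. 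Because $p(x,y)>0$ everywhere, $P_h(x,A)>0$ for every $x$ and every $A$ with $\lambda(A)>0$, which gives $\lambda$-irreducibility in a single step.

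\textbf{Step 3: aperiodicity.} Take $B=K$ with $\lambda(K)>0$. Then $P_h(x,\cdot)\ge\varepsilon\,\lambda(\cdot\cap K)$ for $x\in K$ implies $P_h(x,K)>0$ for every $x\in K$. A small set that the chain can re-enter in one step gives strong aperiodicity, since the cycle must have period $1$.

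The only genuine obstacle is Step 1. A nondegenerate diffusion is what makes all three properties go through, and it relies on $c>1/2$. When $c$ is close to $1/2$ the constant $\sigma_0$ is small but still positive, so this only affects the constants. No growth conditions are needed for this proposition; Assumption~\ref{assum:PG} becomes relevant only in the drift condition used later.
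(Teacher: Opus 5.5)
Your proof is correct and follows essentially the same route as the paper. The Gaussian kernel has an everywhere-positive density, which gives one-step $\lambda$-irreducibility; a positive infimum of $p_h(x,y)$ over $x\in K$, $y\in B$ gives minorization by normalized Lebesgue measure on $B$; and taking $B=K$ gives aperiodicity. The differences are minor. You get the infimum from the explicit bounds $\diffi(x)^2\ge 1-1/(2c)$ and $\|h\drift(x)\|\le\sqrt{dh/c}$ rather than from continuity of $p_h$. Also, your minorization never uses $\lambda(K)>0$, so it shows that every compact set is small, which is exactly the claim in the statement.
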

\begin{proof}
    Let $P_{h}(x,\cdot)$ be the transition kernel associated with the SULA scheme defined by 
$$
P_{h}(x, S) := \mathbb{P}(X_{n+1} \in S \;|\; X_n =x),
$$
for any Borel measurable set $S$ of $\mathbb{R}^d$. The distribution of $X_{n+1}$ given $X_n=x$ is Gaussian with mean
$$
\mathbb{E}[X_{t+1} \, | \, X_t =x] = x + h\,\drift(x),
$$
and covariance matrix $\sigma(x) =$ diag$(\sigma_1(x), \sigma_2(x),\ldots, \sigma_d(x) ) = 2 h\, \diff^2(x).$ 
Due to the choice of the tuning parameter $c > 1/2$,
each diagonal entry of $\diff^2(x)$ is strictly positive uniformly for all $x$, and therefore $\sigma(x)$ is a positive definite matrix. As a result, $\sigma(x)$ is invertible and its determinant $\det(\sigma(x)) > 0$. The transition density of the chain takes the following form.
\begin{equation}\label{eq:SULA-density}
p_{h}(x,y) = \frac{1}{\sqrt{(2\pi)^d \det(\sigma(x))}} \exp\left( -\frac{1}{2} (y - x - h\,\drift(x))^\top \sigma(x)^{-1} (y - x - h\,\drift(x)) \right),
\end{equation}
which is strictly positive for every $x,y \in \mathbb{R}^d$ and finite due to invertibility of $\sigma(x)$. For every $x \in \mathbb{R}^d$ and for any measurable subset $ S \subseteq \mathbb{R}^d $ with positive Lebesgue measure $\lambda$, there exists $\epsilon > 0$ such that
$P_{h}(x, S) \geq \epsilon > 0.$
Hence, the $\lambda$-irreducibility and aperiodicity for the scheme holds.
For each $x \in Q$, where $Q$ is a compact set, the transition $P_{h}$ has density which can be written as the product of the marginal Gaussian densities $p_i$ for some $z  \in \mathbb{R}^d$ 
\begin{align*}
p_{h}(x, x+z) 
&= \prod_{i=1}^d p_i(x, x_i+z_i)\\
&= \prod_{i=1}^d \frac{1}{\sqrt{2 \pi \sigma_i(x)}} \exp\left(-\frac{\left(x_i +z_i - \left(x_i + h\,\drift^{(i)}(x)\right)\right)^2}{2\sigma_i(x)}\right)\\
&= \prod_{i=1}^d \frac{1}{\sqrt{2\pi\,\sigma_i(x)}}
\exp\left(-\frac{\left(z_i - h\,\drift^{(i)}(x)\right)^2}{2\,\sigma_i(x)}\right).
\end{align*}
where  
$\sigma_i(x) = 2h\, \diff^{(i)}(x)^2$. Our goal is to obtain lower bound for  $p_{h}(x, x+z)$.
By definition of $\diff^{(i)}(x)$, we have
$$
\sigma_i(x) = 2h \left(1 - \frac{h}{2} \drift^{(i)}(x)^2\right).
$$
Since $\drift$ is continuous on the compact set $Q$, we define
$$
B = \sup_{x \in Q} \max_{1 \le i \le d} \left|\drift^{(i)}(x)\right| < \infty
$$
such that for each $i$
\begin{equation}\label{eq:A-bound}
\left|\drift^{(i)}(x)\right| \le B
\quad \text{for all } x \in Q.
\end{equation}
Consequently for all $x \in Q$ and for each $i$, we have
\begin{equation}\label{eq:vibound}
2h\left(1 - \frac{h}{2} B^2\right) 
\;\le\; \sigma_i(x) \;\le\; 2h.
\end{equation}
Using the inequality $(a-b)^2\le(|a|+|b|)^2$ with $a=z_i$ and $b=h\,\drift^{(i)}(x)$ and the bound in \eqref{eq:A-bound}, we obtain
\begin{equation}\label{eq:use-ab}
\left(z_i-h\,\drift^{(i)}(x)\right)^2
\le \left(|z_i| + \left|h\,\drift^{(i)}(x)\right|\right)^2 \le \left(|z_i| + h B\right)^2.
\end{equation}
From \eqref{eq:use-ab} we get
\begin{align}\label{eq:intermediate-2}
p_i(x,x_i+z_i)
&\ge \frac{1}{\sqrt{2\pi\,\sigma_i(x)}}
\exp\left(-\frac{(|z_i|+h B)^2}{2\,\sigma_i(x)}\right)\nonumber\\
&\ge \frac{1}{\sqrt{2\pi\,\sigma_i(x)}}
\exp\left(-\frac{|z_i|^2}{2\,\sigma_i(x)}\right)
\exp\left(-\frac{2h B\,|z_i| + (h B)^2}{2\,\sigma_i(x)}\right).
\end{align}
Since $\sigma_i(x)\le 2h$ by \eqref{eq:vibound},
\begin{equation}\label{eq:prefactor-bound}
\frac{1}{\sqrt{2\pi\,\sigma_i(x)}} \;\ge\; \frac{1}{\sqrt{2\pi\cdot 2h}}
= (4\pi h)^{-1/2}.
\end{equation}
The numerator $2h B\,|z_i| + (h B)^2$ is nonnegative and 
$\sigma_i(x)\ge 2h\left(1-\tfrac{h}{2}B^2\right)$ by \eqref{eq:vibound}. We may replace $\sigma_i(x)$ in the denominator by its lower bound and obtain
\begin{equation}\label{eq:exp-bound}
\exp\left(-\frac{2h B\,|z_i| + (h B)^2}{2\,\sigma_i(x)}\right)
\ge
\exp\left(-\frac{2h B\,|z_i| + (h B)^2}
{4h\left(1-\tfrac{h}{2}B^2\right)}\right).
\end{equation}
Similarly from \eqref{eq:vibound}, we have
\begin{equation}\label{eq:inter-3}
\exp\left(-\frac{|z_i|^2}{2\,\sigma_i(x)}\right)
\ge \exp\left(-\frac{|z_i|^2}{4h\left(1-\tfrac{h}{2}B^2\right)}\right)
\end{equation}
Combining \eqref{eq:intermediate-2}, \eqref{eq:prefactor-bound}, \eqref{eq:exp-bound}, and \eqref{eq:inter-3} yields
\begin{equation}\label{eq:intermediate-3}
p_i(x,x_i+z_i)
\ge (4\pi h)^{-1/2}
\exp\left(-\frac{|z_i|^2}{4h\left(1-\tfrac{h}{2}B^2\right)}\right)
\exp\left(-\frac{2h B\,|z_i| + (h B)^2}
{4h\left(1-\tfrac{h}{2}B^2\right)}\right).
\end{equation}
The factor $(2\pi)^{-1/2}\exp\left(-|z_i|^2/4h\left(1-\tfrac{h}{2}B^2\right)\right)$ can be recognized as a standard normal density such that we may write
\begin{equation}\label{eq:phi-identity}
(2\pi)^{-1/2}\exp\left(-\frac{|z_i|^2}{4h\left(1-\tfrac{h}{2}B^2\right)}\right)
= \varphi\left(\frac{|z_i|}{\sqrt{2h\left(1-\tfrac{h}{2}B^2\right)}}\right),
\end{equation}
where $\varphi(u)=(2\pi)^{-1/2}e^{-u^2/2}$ is the standard Gaussian density. 
Using \eqref{eq:phi-identity}, \eqref{eq:intermediate-3} takes the following form
$$
p_i(x,x_i+z_i)
\ge (2h)^{-1/2}
\exp\left(-\frac{2h B\,|z_i| + (h B)^2}
{4h\left(1-\tfrac{h}{2}B^2\right)}\right)
\varphi\left(\frac{|z_i|}{\sqrt{2h\left(1-\tfrac{h}{2}B^2\right)}}\right).
$$
Finally, we fix a threshold value $A >0$ such that $|z_i|\le A$ for any compact set $Q$. Then, we get
\begin{equation}\label{eq:component-bound}
p_i(x,x_i+z_i)
\ge (2h)^{-1/2}
\exp\left(-\frac{2 B A + h B^2}{4\left(1-\tfrac{h}{2}B^2\right)}\right)
\varphi\left(\frac{|z_i|}{\sqrt{2h\left(1-\tfrac{h}{2}B^2\right)}}\right)\mathbf{1}_{\{|z_i|\le A\}},
\end{equation}
where $\mathbf{1}_{\{\cdot\}}$ is the indicator function. Taking product of (\ref{eq:component-bound}) over all $i=1,\ldots, d$ and choosing $\|z\|_\infty = \max_{1 \le i \le d} |z_i|$, we get
\begin{align*}
p_{h}(x, x+z) 
&= \prod_{i=1}^d p_i(x, x_i+z_i)\\
& \ge  (2h)^{-d/2}
\exp\left(-d\frac{2 B A + h B^2}{4\left(1-\tfrac{h}{2}B^2\right)}\right)
\prod_{i=1}^d \varphi\left(\frac{|z_i|}{\sqrt{2h\left(1-\tfrac{h}{2}B^2\right)}}\right) \mathbf{1}_{\{|z_i|\le A\}}\\
& \ge (2h)^{-d/2}
\exp\left(-d\frac{2 B A + h B^2}{4\left(1-\tfrac{h}{2}B^2\right)}\right)
\prod_{i=1}^d \varphi\left(\frac{|z_i|}{\sqrt{2h\left(1-\tfrac{h}{2}B^2\right)}}\right) \mathbf{1}_{\{\|z\|_\infty \le A\}}\\
& = (2h)^{-d/2}
\exp\left(-d\frac{2 B A + h B^2}{4\left(1-\tfrac{h}{2}B^2\right)}\right)
(2\pi)^{-d/2} \exp\left(-\frac{\|z\|^2}{4h \left(1 - \tfrac{h}{2} B^2\right)}\right) \mathbf{1}_{\{\|z\|_\infty \le A\}}\\
& = \varepsilon_Q \, \varphi_d \left(\frac{z}{\sqrt{2h\left(1-\tfrac{h}{2}B^2\right)}}\right)\mathbf{1}_{\{\|z\|_\infty \le A\}},
\end{align*}
where 
$$
\varepsilon_Q = (2h)^{-d/2}
\exp\left(-d\frac{2 B A + h B^2}{4\left(1-\tfrac{h}{2}B^2\right)}\right)
$$ 
and $\varphi_d(u)=(2\pi)^{-d/2}e^{-\|u\|^2/2}$ is the standard multivariate Gaussian density with independent components.
To obtain a minorizing measure independent of $x$, we fix a compact ball $B$ of positive Lebesgue measure. The transition density is continuous and strictly positive, so
\[
m_{Q,B}:=\inf_{x\in Q,\,y\in B}p_h(x,y)>0.
\]
Thus, for every Borel set $D$ and every $x\in Q$,
\[
P_h(x,D)\geq m_{Q,B}\operatorname{Leb}(D\cap B)
=\epsilon_{Q,B}\nu_B(D),
\]
where $\nu_B$ is a normalized Lebesgue measure on $B$ and $\epsilon_{Q,B}=m_{Q,B}\operatorname{Leb}(B)$. Hence every compact set is small. Taking $Q=B$ also establishes aperiodicity, since $\nu_B(B)=1$.
\end{proof}

\subsection{Drift Condition}
\begin{proposition}[Drift condition for SULA]\label{prop:RoTULA-drift}
	Assume Assumptions~\ref{assum1} and \ref{assum:PG} and let 
	$h \in (0,1)$. Let $P_{h}$ denote the Markov transition kernel
	of SULA scheme. There exist $a>0$ such that, defining the Lyapunov function $W_a : \mathbb{R}^d \rightarrow [1, \infty)$ for all $x\in\mathbb{R}^d$ by
	$$
	W_a(x) \;=\; \exp\left(a\sqrt{1+\|x\|^2}\right).
	$$
	Then there exist $R, b >0$ such that
	\begin{equation}\label{eq:RoTULA-drift-alt}
	P_{h} W_a(x)
	\;\le\;
	e^{-a^2 h}\,W_a(x)
	\;+\;
	h\, b\,\mathbf{1}_{\overline{B_R}},
	\end{equation}
	where $\overline{B_R}=\{x\in\mathbb{R}^d:\|x\|\le R\}$.
\end{proposition}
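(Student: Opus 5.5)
We prove the drift inequality by splitting $\mathbb{R}^d$ into a large ball and its complement, following the strategy for exponential Lyapunov functions in \cite{Roberts1996} and for UBA in \cite{skew-symmetric}.

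\textbf{Structure of the increment.} Given $X_t=x$, write the increment as $X_{t+1}-x = h\,\drift(x) + \sqrt{2h}\,\diff(x)Z$. The components are independent Gaussians with mean $h\drifti(x)$ and variance $2h\,\diffi(x)^2\le 2h$. The key structural fact is that $|h\drifti(x)|\le \sqrt{h/c}<\sqrt{2h}$. The whole increment therefore has a Gaussian tail uniformly in $x$, with mean bounded by $\sqrt{dh/c}$ and covariance dominated by $2h\,\mathbb{I}_d$. This uniformity replaces the global Lipschitz condition used for ULA.

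\textbf{Expansion of the Lyapunov function.} Let $g(x)=\sqrt{1+\|x\|^2}$. Its gradient is $x/g$, bounded by $1$, and its Hessian is bounded by $1/g$. Hence
\[
g(x+\Delta)\le g(x)+\langle x/g(x),\Delta\rangle+\|\Delta\|^2/(2g(x)).
\]
Since $\|\Delta\|^2/(2g(x))\le \|\Delta\|^2/2$, and since $\langle x/g,\sqrt{2h}\,\diff Z\rangle$ is a one-dimensional Gaussian with variance at most $2h$, the moment generating function gives
\[
\frac{P_hW_a(x)}{W_a(x)}\le \exp\Big(a h\langle x/g(x),\drift(x)\rangle + a^2 h + \varepsilon(x)\Big).
\]
Here $\varepsilon(x)$ collects the contribution of the quadratic term $\tfrac{a}{2g(x)}\E\|\Delta\|^2$, which is of size $a\,dh/g(x)$. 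Its exponential moment is controlled by Gaussian integrability, since $a/g(x)$ is small for large $\|x\|$. More precisely, I would bound $\E\exp\big(s\langle u,Z\rangle+t\|Z\|^2\big)$ explicitly for small $t$.

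\textbf{The drift term for large $\|x\|$.} The main obstacle is to show that $\langle x/g(x),\drift(x)\rangle\le -\kappa$ for some $\kappa>0$ once $\|x\|\ge R$. The saturation destroys the direct use of $\langle x,\nabla\log\pi\rangle\to-\infty$, because $\drifti=\psi(\nabla_i\log\pi)$ with $\psi(u)=u/\sqrt{1+chu^2}$ acting coordinatewise. I would proceed in three steps.
\begin{enumerate}
\item By Assumption~\ref{assum1}, $x_i\nabla_i\log\pi(x)\le 0$ for every $i$ when $\|x\|$ is large, so every term $x_i\psi(\nabla_i\log\pi(x))$ is nonpositive.
\item Choose an index $i$ with $|\nabla_i\log\pi(x)|\ge\|\nabla\log\pi(x)\|/\sqrt d$. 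This gradient coordinate tends to infinity, so $|\drifti(x)|\ge \tfrac12(ch)^{-1/2}$ for large $\|x\|$.
\item Show that this coordinate also carries a nonvanishing fraction of $x$, i.e.\ $|x_i|/\|x\|$ is bounded below. Otherwise the inner-product limsup condition in Assumption~\ref{assum1} would fail. If this coordinatewise alignment does not follow cleanly, I would use the weighted version: $\sum_i x_i\nabla_i\log\pi\le-\delta\|x\|\|\nabla\log\pi\|$ forces some index where both $|x_i|$ and $|\nabla_i\log\pi|$ are comparable to their norms, by a pigeonhole argument over the $d$ coordinates.
\end{enumerate}
Together these give $\langle x,\drift(x)\rangle\le -\kappa_0\|x\|/\sqrt{h}$.

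\textbf{Conclusion.} For $\|x\|\ge R$ the exponent is therefore at most $-a\kappa_0\sqrt h+a^2h+O(ah/R)$. Choosing $a$ small (depending on $h$, $d$ and $\kappa_0$) and $R$ large makes this at most $-a^2h$. On $\overline{B_R}$, continuity of $\drift$ and $\diff$ gives $P_hW_a\le C_RW_a$. This yields the constant $b$ after absorbing terms, since $W_a$ is bounded on the ball. Assumption~\ref{assum:PG} is used only to guarantee continuity and local boundedness of the coefficients.
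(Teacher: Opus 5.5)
Your argument is correct in outline but follows a different route from the paper. The paper never expands $g(x)=\sqrt{1+\|x\|^2}$ pathwise. It applies Gaussian concentration (the Herbst argument) to the Lipschitz map $Z\mapsto g\bigl(x+h\drift(x)+\sqrt{2h}\,\diff(x)Z\bigr)$, obtaining $P_hW_a(x)\le e^{a^2hd}\exp\bigl(a\E[g(X_{t+1})]\bigr)$. Jensen's inequality, $\E[g(X_{t+1})]\le(1+\E\|X_{t+1}\|^2)^{1/2}$, then reduces everything to the second moment $\|x+h\drift(x)\|^2+2hd$. Lemma~\ref{lem:coercivity} gives $\|x+h\drift(x)\|^2\le\|x\|^2-2h\kappa'\|x\|$ for $\|x\|\ge M$, and its key step needs no coordinate selection. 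Multiplying each nonpositive $x_i\nabla_i\log\pi(x)$ by $(1+ch(\nabla_i\log\pi(x))^2)^{-1/2}\ge(1+ch\|\nabla\log\pi(x)\|^2)^{-1/2}$ gives $\langle x,\drift(x)\rangle\le(1+ch\|\nabla\log\pi(x)\|^2)^{-1/2}\langle x,\nabla\log\pi(x)\rangle$, and the angle condition finishes. This makes $a$, $M$ and $b$ uniform in $h\in(0,1)$, i.e.\ a genuine order-$h$ drift condition. Your route combines a one-sided quadratic bound on $g$, the exact MGF of a Gaussian quadratic form, and the saturation level $(ch)^{-1/2}$. It gives a stronger contraction outside the ball, of order $a\sqrt h$ rather than $a^2h$. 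However, your $R$ depends on $h$ through the saturation threshold and your $b$ carries a factor $1/h$. The statement permits this, since it fixes $h$ first.

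Two steps need repair. First, the bound $g(x+\Delta)\le g(x)+\langle x/g(x),\Delta\rangle+\|\Delta\|^2/(2g(x))$ is true, but not because of a Hessian bound at $x$. Taylor's theorem only controls the Hessian along the segment, by $\sup_{s\in[0,1]}1/g(x+s\Delta)$, which can be as large as $1$ when the segment passes near the origin. The bound follows instead from $g(x+\Delta)^2=g(x)^2+2\langle x,\Delta\rangle+\|\Delta\|^2$ and $\sqrt{1+u}\le 1+u/2$. The factor $1/g(x)$ is essential: since $\E\|\Delta\|^2=2hd$ exactly, without it the quadratic term contributes order $adh$, which $a\kappa_0\sqrt h$ dominates only when $\sqrt h\lesssim\kappa_0/d$.

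Second, the first version of your step 3 is false. In $d=2$, take the directions $x/\|x\|=(\epsilon,\sqrt{1-\epsilon^2})$ and $\nabla\log\pi(x)/\|\nabla\log\pi(x)\|=(-0.9,-\sqrt{0.19})$. They satisfy the sign and angle conditions of Assumption~\ref{assum1}, with inner product about $-0.44$. Yet the only coordinate with $|\nabla_i\log\pi|\ge\|\nabla\log\pi\|/\sqrt2$ carries only an $\epsilon$ share of $x$. Your pigeonhole fallback is the right argument and should replace steps 2 and 3. Since every $x_i\nabla_i\log\pi(x)\le0$, you get $\sum_i|x_i|\,|\nabla_i\log\pi(x)|\ge\delta\|x\|\,\|\nabla\log\pi(x)\|$. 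Hence some $i$ has both $|x_i|\ge(\delta/d)\|x\|$ and $|\nabla_i\log\pi(x)|\ge(\delta/d)\|\nabla\log\pi(x)\|$. This gives $\langle x,\drift(x)\rangle\le-\tfrac{\delta}{2d\sqrt c}\,\|x\|/\sqrt h$ as soon as $(\delta/d)^2\|\nabla\log\pi(x)\|^2\ge1/(3ch)$.
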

\begin{proof}
Consider the update $Y := X_{t+1} = X_t + h\,\drift(X_t) + \sqrt{2h} \, \diff(X_t) Z$, where $Z \sim \mathcal{N}(0, I_d)$. The operator $P_{h}$ is the law of $Y$ given $X_t = x$ with mean $x + h\,\drift(x)$ and covariance matrix $2h\, \diff^2(x)$. 
The Lyapunov function is $W_a(x) = \exp\bigl( a \varphi(x) \bigr)$, where $\varphi(x) = \left(1 + \|x\|^2\right)^{1/2}$ and $a > 0$. 
Note that $\varphi(\cdot)$ is $1$-Lipschitz since $\|\nabla \varphi(x)\| \le 1$ for all $x$. 
Consider the function $u(Z) = \varphi\left( x + h\,\drift(x) + \sqrt{2h} \, \diff(x) Z \right)$.  
For some $Z_1, Z_2 \sim \mathcal{N}(0, I_d)$, we consider
\begin{align*}
\|u(Z_1)-u(Z_2)\| 
&= \left\|\varphi\left(x + h\,\drift(x) + \sqrt{2h} \, \diff(x) Z_1 \right) - \varphi\left( x + h\,\drift(x) + \sqrt{2h} \, \diff(x) Z_2 \right)\right\|\\
& \le \left\|x + h\,\drift(x) + \sqrt{2h} \, \diff(x) Z_1 - x - h\,\drift(x) - \sqrt{2h} \, \diff(x) Z_2\right\|\\
& = \left\| \sqrt{2h} \, \diff(x) (Z_1 - Z_2) \right\|.
\end{align*}
Hence, due to $\|\diff(x)\|_F \le \sqrt{d}$, the Lipschitz constant $L$ of $u$ satisfies
\begin{equation}\label{eq:lip-u}
L \le \sqrt{2h} \, \left\|\diff(x)\right\|_F \le \sqrt{2hd} .
\end{equation}
The Gaussian Herbst argument states that for any $L$-Lipschitz function $u$ of standard Gaussian $Z$ and $\lambda \in \mathbb{R}$, we have \cite[Proposition 5.5.1]{Bakry2013}
$$
\mathbb{E}\left[ e^{\lambda (u(Z) - \mathbb{E} u(Z))} \right] \le \exp\left( \frac{\lambda^2 L^2}{2} \right).
$$
Taking $\lambda = a > 0$ gives
$$
\mathbb{E}\left[ e^{a u(Z)} \right] \le \exp\left( a \mathbb{E}[u(Z)] + \frac{a^2 L^2}{2} \right).
$$
From \eqref{eq:lip-u}, $L^2/2 \le hd$, so $\frac{a^2 L^2}{2} \le a^2 hd$. Thus
\begin{equation}\label{eq:herbst}
P_{h} W_a(x) = \mathbb{E}\bigl[ e^{a \varphi(Y)} \bigr] \le e^{a^2 hd} \exp\bigl( a \mathbb{E}[\varphi(Y)] \bigr).
\end{equation} 
Since, for $Y$, we have
$$
\mathbb{E}\|Y\|^2 = \|x + h\,\drift(x)\|^2 + 2h \operatorname{tr}\left(\diff^2(x)\right) \le \|x + h\,\drift(x)\|^2 + 2h d.
$$
The function $\varphi(y) = (1 + \|y\|^2)^{1/2}$ satisfies $\mathbb{E}[\varphi(V)] \le (1 + \mathbb{E}\|V\|^2)^{1/2}$ by Jensen’s inequality for any random vector $V$. 
Therefore \eqref{eq:herbst} gives
\begin{equation}\label{eq:working}
P_{h} W_a(x) \le e^{a^2 hd} \exp\left\{ a \bigl( 1 + \|x + h\,\drift(x)\|^2 + 2h d \bigr)^{1/2} \right\}.
\end{equation}
By Lemma \ref{lem:coercivity}, for $\|x\| \ge M_1$,
$$
\|x + h\,\drift(x)\|^2 \le \|x\|^2 - 2h \kappa' \|x\|.
$$
Adding the variance bound gives
\begin{equation}\label{eq:mean-square}
\|x + h\,\drift(x)\|^2 + 2h d \le \|x\|^2 - 2h \kappa' \|x\| + 2h d, \quad \|x\| \ge M_1.
\end{equation}
Choose $M \ge \max(M_1, 2d/\kappa')$. For $\|x\| \ge M$, we get
$$
\|x + h\,\drift(x)\|^2 + 2h d \le \|x\|^2 - 2h \kappa' \|x\| + \kappa' \|x\| h = \|x\|^2 - \kappa' \|x\| h.
$$
Thus
\begin{equation}\label{eq:sqrt-outside}
\bigl(1 + \|x + h\,\drift(x)\|^2 + 2h d\bigr)^{1/2} \le \bigl(1 + \|x\|^2 - \kappa' \|x\| h\bigr)^{1/2}, \quad \|x\| \ge M.
\end{equation}
We write the right-hand side of \eqref{eq:sqrt-outside} as $\varphi(x)(1-t)^{1/2}$ where $t = \kappa' \|x\| h/(1 + \|x\|^2)$.  
For $h$ small so $t \le 1$, we have $(1-t)^{1/2} \le 1 - t/2$, so
$$
\bigl(1 + \|x + h\,\drift(x)\|^2 + 2h d \bigr)^{1/2} \le \varphi(x) - \frac{\kappa' \|x\| h}{2 \varphi(x)}.
$$
Since $s \mapsto s/\varphi(s)$ is non-decreasing for $s \ge 0$, we have $\|x\|/\varphi(x) \ge M/\varphi(M)$ for $\|x\| \ge M$. Define
$$
a := \frac{\kappa' M}{4 \varphi(M) d}.
$$
Then $\kappa' \|x\| h/(2 \varphi(x)) \ge 2 a h d$, so
\begin{equation}\label{eq:contraction-outside}
\bigl(1 + \|x + h\,\drift(x)\|^2 + 2h d \bigr)^{1/2} \le \varphi(x) - 2 a h d, \quad \|x\| \ge M.
\end{equation}
Substituting \eqref{eq:contraction-outside} into \eqref{eq:working} with $a = \frac{\kappa' M}{4 \varphi(M) d}$
$$
P_{h} W_a(x) \le e^{a^2 hd} \exp\bigl( a (\varphi(x) - 2 a h d) \bigr) = e^{a^2 h d- 2 a^2 h d} W_a(x) = e^{-a^2 hd} W_a(x) \le e^{-a^2 h} W_a(x).
$$
Thus
\begin{equation}\label{eq:outside-ball}
P_{h} W_a(x) \le e^{-a^2 h} W_a(x), \quad \|x\| \ge M.
\end{equation}
Now consider $\|x\| \le M$. Due to continuity of $\drift$, we define
$$
K_M := \sup_{\|x\| \le M} \|\drift(x)\| < \infty.
$$
Apply $(1 + s + t)^{1/2} \le (1 + s)^{1/2} + t/2$ with $s = \|x\|^2$ and
$$
t = 2h \langle x, \drift(x) \rangle + h^2 \|\drift(x)\|^2 + 2h d.
$$
Since $\|x\| \le M$ and $\|\drift(x)\| \le K_M$, we get $t \le 2 h C_1$, where $C_1 = MK_M + \frac{h}{2} K_M^2 + d$. Thus
$$
\bigl(1 + \|x + h\,\drift(x)\|^2 + 2h d \bigr)^{1/2} \le \varphi(x) + h C_1.
$$
Substitute into \eqref{eq:working}, we get
$$
P_{h} W_a(x) \le e^{a^2 hd + a h C_1} W_a(x) = e^{ c h} W_a(x), \quad c := a^2 d + a C_1.
$$
Then
$$
P_{h} W_a(x) - e^{-a^2 hd} W_a(x) \le (e^{c h} - e^{-a^2 hd}) W_a(x).
$$
Since $1 - e^{-u} \le u$ for $u \ge 0$, we have $e^{c h} - e^{-a^2 hd} \le e^{c h}(a^2 d+ c)h$. 
Also, $W_a(x) \le \exp(a \varphi(M))$ for $\|x\| \le M$, so define
\begin{equation}\label{eq:b-constant}
b := e^{ch} (a^2 d+ c) \exp\bigl( a \varphi(M) \bigr).
\end{equation}
Since $e^{-a^2 hd} \le e^{-a^2 h}$. Thus
$$
P_{h} W_a(x) \le e^{-a^2 h} W_a(x) + h b \, \mathbf{1}_{\{\|x\| \le M\}}(x), \quad \|x\| \le M.
$$
Combining \eqref{eq:outside-ball} and the inside-ball estimate gives the target inequality for all $x$. Hence, we proved the drift condition.
\end{proof}


\section{Bias of the Invariant Measure}\label{app:bias}
\subsection{Proof of Theorem \ref{thm:bias}}
\begin{proof}
We prove the result by checking the assumptions in Theorem 3.3 of \cite{Lelievre2018}. We start with the choice of the test function space. We take $S = C_P^\infty(\mathbb{R}^d),$
	the space of smooth functions with polynomial growth. The space $S$ is dense in $L^2(\pi)$.
    Next, from Proposition \ref{lmm:Weakexp}, the one-step operator satisfies
    $$
P_{h} f
=
f + h\, \mathcal{L} f + h^2\, \mathcal{A} f + M_2(x, h).
$$
So in the notations of \cite{Lelievre2018}, we identify $\mathcal{A}_1 = \mathcal{L}$ and $\mathcal{A}_2 = \mathcal{A}$. This shows that the scheme is consistent with the generator $\mathcal{L}$. Now we use the invariance of $\pi$. Since $\pi$ is invariant for the diffusion. Hence for any $f \in C_P^\infty(\mathbb{R}^d)$,
$$
\int_{\mathbb{R}^d} P_t f(x)\,\pi(dx)
=
\int_{\mathbb{R}^d} f(x)\,\pi(dx)
\qquad \text{for all } t\ge 0,
$$
where $(P_t)_{t\ge 0}$ is the semigroup of the diffusion. 
By definition of the generator, we have
$$
\mathcal{L}f(x)
=
\lim_{t\to 0}\frac{P_t f(x)-f(x)}{t}.
$$
Since $f \in C_P^\infty(\mathbb{R}^d)$, we also have $\mathcal{L}f \in C_P^\infty(\mathbb{R}^d)$, and under Assumption \ref{assum:bias-result}, the measure $\pi$ has moments of all orders. Thus
$$
\left. 0 = \frac{d}{dt}\int_{\mathbb{R}^d} P_t f(x)\,\pi(dx)\right|_{t=0}
=
\int_{\mathbb{R}^d} \left.\frac{d}{dt}P_t f(x)\right|_{t=0}\,\pi(dx)
=
\int_{\mathbb{R}^d} \mathcal{L}f(x)\,\pi(dx).
$$
Since the left-hand side is the derivative of a constant function, it is equal to $0$. Hence
$$
\int_{\mathbb{R}^d} \mathcal{L}f(x)\,\pi(dx)=0.
$$
We now define the space
$$
\mathcal{S}_0 = \left\{ f \in C_P^\infty(\mathbb{R}^d) : \int f\, d\pi = 0 \right\}.
$$
From the above identity, it is clear that $\mathcal{S}_0$ is invariant under $\mathcal{L}$. Next, let $\mathcal{L}^*$ be the adjoint of $\mathcal{L}$ in $L^2(\pi)$. Since the diffusion is reversible, we have
$$
\mathcal{L}^* = \mathcal{L}.
$$
Under our drift and smoothness assumptions, the Poisson-equation
result of \cite{pardoux2001poisson}, together with
elliptic regularity
\cite[Theorem 9.19]{gilbarg1977elliptic}, implies that
$\mathcal{L}$ is invertible on $\mathcal{S}_0$.
Now we look at the operator $\mathcal{A}$. Under Assumption \ref{assum:bias-result}, we have $\log \pi \in C_P^4(\mathbb{R}^d)$. This ensures that derivatives of $\pi$ divided by $\pi$ are polynomially bounded. Using the expression of $\mathcal{A}$ and integrating by parts, for $\mathbb{I} : \mathbb{R}^d\rightarrow \mathbb{R}$ which is equal to 1 everywhere, we get the following
\begin{align*}
\mathcal{A}^\ast \mathbb{I}(x) 
&= \frac{c}{2} \sum_{i=1}^d \frac{\partial_i \big( (\nabla_i \log \pi(x))^3 \, \pi(x) \big)}{\pi(x)}
+ \frac{1}{2} \sum_{i,j=1}^d \frac{\partial_{ij} \big( (\nabla_i \log \pi(x))(\nabla_j \log \pi(x)) \, \pi(x) \big)}{\pi(x)} \\
&\quad - \frac{1}{2} \sum_{i=1}^d \frac{\partial_{ii} \big( (\nabla_i \log \pi(x))^2 \, \pi(x) \big)}{\pi(x)}
- \sum_{i,j=1}^d \frac{\partial_{ijj} \big( (\nabla_i \log \pi(x)) \, \pi(x) \big)}{\pi(x)} \\
&\quad + \frac{1}{2} \sum_{i,k=1}^d \frac{\partial_{iikk} \pi(x)}{\pi(x)}.
\end{align*}
From the assumptions on the drift and $\log \pi$, all terms above are smooth and have polynomial growth. Hence
$$
\mathcal{A}^\ast \mathbb{I} \in C_P^\infty(\mathbb{R}^d).
$$
Also, since $\mathcal{A}$ is a differential operator, we have $\mathcal{A}\mathbb{I} = 0$. Therefore,
$$
\int \mathcal{A}^\ast \mathbb{I}(x)\, \pi(x)\, dx
=
\int \mathcal{A}\mathbb{I}(x)\, \pi(x)\, dx
= 0.
$$
Thus, $\mathcal{A}^\ast\mathbb{I}\in\mathcal{S}_0$. Since $\mathcal{L}$ is invertible on $\mathcal{S}_0$, there exists a unique $\phi\in\mathcal{S}_0$ such that
$
\mathcal{L}\phi=-\mathcal{A}^\ast\mathbb{I}.
$
Hence, the correction function in Theorem~\ref{thm:bias} is
$$
\phi=-\mathcal{L}^{-1}(\mathcal{A}^\ast\mathbb{I}).
$$
Finally, from Theorem \ref{thm:geom-ergodicity}, the scheme admits a unique invariant measure $\pi_{h}$ and this measure has finite moments. This ensures that all required integrals are well-defined. Therefore, applying Theorem 3.3 of \cite{Lelievre2018} completes the proof.
\end{proof}

\section{Results for SG-SULA}\label{app:sg}
\subsection{Proof of Proposition \ref{sgprop1}}

\begin{proof}
We fix $i \in \{1,\dots,d\}$ and condition on $X_t$. Recall that
$$
\widehat{\mathcal{S}}_h^{(i)}(X_t, 1)
=
\frac{\widehat{\nabla}_i \log \pi(X_t)}{
\sqrt{1 + c\,h \big(\widehat{\nabla}_i \log \pi(X_t)\big)^2}},
\qquad
\drift^{(i)}(X_t)
=
\frac{\nabla_i \log \pi(X_t)}{
\sqrt{1 + c\,h \big(\nabla_i \log \pi(X_t)\big)^2}}.
$$
Using
$$
\widehat{\nabla}_i \log \pi(X_t)
=
\nabla_i \log \pi(X_t) + \eta_t^{(i)},
$$
we obtain
$$
\mathbb{E}\left[\widehat{\mathcal{S}}_h^{(i)}(X_t, 1)\mid X_t\right]
=
\mathbb{E}\left[
\frac{\nabla_i \log \pi(X_t) + \eta_t^{(i)}}{
\sqrt{1 + c\,h \big(\nabla_i \log \pi(X_t) + \eta_t^{(i)}\big)^2}}
\;\middle|\; X_t
\right].
$$
Since $\eta_t^{(i)} \stackrel{d}{=} -\eta_t^{(i)}$, we may write
$$
\mathbb{E}\left[\widehat{\mathcal{S}}_h^{(i)}(X_t, 1)\mid X_t\right]
=
\mathbb{E}\left[
\frac{1}{2}\left(
\frac{\nabla_i \log \pi(X_t) + \eta_t^{(i)}}{
\sqrt{1 + c\,h \big(\nabla_i \log \pi(X_t) + \eta_t^{(i)}\big)^2}}
+
\frac{\nabla_i \log \pi(X_t) - \eta_t^{(i)}}{
\sqrt{1 + c\,h \big(\nabla_i \log \pi(X_t) - \eta_t^{(i)}\big)^2}}
\right)
\;\middle|\; X_t
\right].
$$
If $\nabla_i \log \pi(X_t) = 0$, then by Lemma~\ref{lmm:propSG2}, the quantity inside the expectation is equal to $0$. Hence
$$
\mathbb{E}\left[\widehat{\mathcal{S}}_h^{(i)}(X_t, 1)\;\middle|\; X_t\right] = 0 = \drift^{(i)}(X_t).
$$
Let $\nabla_i \log \pi(X_t) > 0$. We define
$$
\mathcal{H}\left(\eta_t^{(i)}\right)
=
\frac{1}{2}\left(
\frac{\nabla_i \log \pi(X_t) + \eta_t^{(i)}}{
\sqrt{1 + c\,h \big(\nabla_i \log \pi(X_t) + \eta_t^{(i)}\big)^2}}
+
\frac{\nabla_i \log \pi(X_t) - \eta_t^{(i)}}{
\sqrt{1 + c\,h \big(\nabla_i \log \pi(X_t) - \eta_t^{(i)}\big)^2}}
\right).
$$
By Lemma~\ref{lmm:propSG2}, $\mathcal{H}\left(\eta_t^{(i)}\right) > 0$. Hence
\begin{equation}\label{eq:E-geq}
\mathbb{E}\left[\widehat{\mathcal{S}}_h^{(i)}(X_t, 1)\;\middle|\; X_t\right] > 0.
\end{equation}
On the other hand, by Lemma~\ref{lmm:propSG1}, we have
$$
\mathcal{H}\left(\eta_t^{(i)}\right)
\le
\frac{\nabla_i \log \pi(X_t)}{
\sqrt{1 + c\,h \big(\nabla_i \log \pi(X_t)\big)^2}}
=
\drift^{(i)}(X_t)
$$
with equality if $\nabla_i \log \pi(X_t) = 0$. Taking expectations along with \eqref{eq:E-geq} yields 
\begin{equation}\label{eq:grad-less}
0 < \mathbb{E}\left[\widehat{\mathcal{S}}_h^{(i)}(X_t, 1)\;\middle|\; X_t\right]
\le
\drift^{(i)}(X_t).
\end{equation}
On similar lines, when $\nabla_i \log \pi(X_t) < 0,$ we get
\begin{equation}\label{eq:grad-gra}
\drift^{(i)}(X_t) \le \mathbb{E}\left[\widehat{\mathcal{S}}_h^{(i)}(X_t, 1)\;\middle|\; X_t\right] < 0.
\end{equation}
Therefore, combining \eqref{eq:grad-less} and \eqref{eq:grad-gra} gives
$$
\left|\mathbb{E}\left[\widehat{\mathcal{S}}_h^{(i)}(X_t, 1)\;\middle|\; X_t\right]\right|
\le
\left|\drift^{(i)}(X_t)\right|.
$$
This completes the proof.
\end{proof}

\subsection{Proof of Proposition \ref{sgprop2}}

\begin{proof}
    Since our interest is to claim the following for fixed $z \in \mathbb{R}$ and for a fixed $\alpha$,
    $$
	\mathbb{E}_{\eta_t^{(i)}}\left[
	\Phi\left(\alpha\sqrt{ch}\,\widehat{\nabla}_i \log \pi(X_t)z\right)
	\right]-
	\Phi\left(\sqrt{ch}\,\nabla_i \log \pi(X_t)z\right) =0.
    $$
    To prove our claim, we assume
	$
	\widehat{\nabla}_i \log \pi(X_t)
	= \nabla_i \log \pi(X_t) + \eta^{(i)}_t$ with $ \eta^{(i)}_t \sim \mathcal{N}(0, \tau^2),
	$
	and therefore
	\begin{align*}
	\alpha \sqrt{ch}\left(\nabla_i \log \pi(X_t) + \eta^{(i)}_t\right) z 
	\sim \mathcal{N}\left(
	\alpha \sqrt{ch}\,z \nabla_i \log \pi(X_t),\;
	c\,h\, \alpha^2 z^2 \tau^2
	\right).
	\end{align*}
	Consequently,
	\begin{align*}
\mathbb{E}_{\eta^{(i)}_t}\left[
\Phi\left(\alpha \sqrt{ch}\left(\nabla_i \log \pi(X_t) + \eta^{(i)}_t\right)z\right)
	\right]
	=
	\Phi\left(
	\frac{\alpha
		\sqrt{ch}\,z \nabla_i \log \pi(X_t)
	}{
		\sqrt{1+c\,h\, \alpha^2 z^2 \tau^2}
	}
	\right).
	\end{align*}
	We observe that the target gradient is shrunk by the multiplicative factor less than $1$.
	To remove the shrinkage, we
	choose $\alpha = \alpha_{z,h}^{\tau}$ such that
	$$
	\frac{\alpha}{\sqrt{1 + \alpha^2 c\,h\, z^2 \tau^2}} = 1
	\implies 
\alpha_{z,h}^{\tau}
	=
	\frac{1}{\sqrt{1 - c\,h\, z^2 \tau^2}},
	\quad  |z| < \frac{1}{\tau \sqrt{c h}}.
	$$
	With this choice,
	\begin{align*}
\mathbb{E}_{\eta_t^{(i)}}\left[
\Phi\left(\sqrt{ch}\,\alpha_{z,h}^{\tau}\widehat{\nabla}_i \log \pi(X_t)z\right)
	\right]
	=
	\Phi\left(
	\sqrt{ch}\,
	\nabla_i \log \pi(X_t)\,z
	\right),
	\end{align*}
	which proves our claim.
\end{proof}

\section{Supporting Results}

We prove some auxiliary lemmas here that support our previous results.
 
\begin{lemma}\label{thm:main}
 	For all $u\ge 0$ and all $r\ge 1$,
 	$$
 	(1+u)^r \le 2^{\,r-1}\bigl(1+u^r\bigr).
 	$$
 	Equality holds iff $r=1$ or $u=1$.
 \end{lemma}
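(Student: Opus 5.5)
The inequality $(1+u)^r \le 2^{r-1}(1+u^r)$ is a direct consequence of the convexity of the map $t\mapsto t^r$ on $[0,\infty)$ for $r\ge 1$. The plan is to write $1+u = 2\cdot\tfrac{1}{2}(1+u)$ and apply Jensen's inequality to the midpoint of $1$ and $u$:
\[
\Bigl(\tfrac{1+u}{2}\Bigr)^r \le \tfrac{1}{2}\bigl(1^r + u^r\bigr).
\]
Multiplying both sides by $2^r$ gives $(1+u)^r \le 2^{r-1}(1+u^r)$, which is the claimed bound for all $u\ge 0$ and $r\ge 1$.

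For the equality characterization, first consider $r=1$. Both sides then equal $1+u$, so equality holds for every $u\ge 0$.

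Next consider $r>1$. The map $t\mapsto t^r$ is strictly convex on $[0,\infty)$, since its second derivative $r(r-1)t^{r-2}$ is positive for $t>0$. Strict convexity holds on the closed half-line as well, including the endpoint $t=0$: for $r>1$ and $s\ne t$ with $s,t\ge0$, the midpoint inequality is strict. One way to see this is that $t^r$ is strictly convex on $(0,\infty)$ and continuous on $[0,\infty)$. Alternatively, when $u=0$ one can check directly that $1 < 2^{r-1}$. Strict convexity implies that the midpoint inequality is an equality only when the two points coincide, that is, only when $u=1$. Conversely, if $u=1$ then both sides equal $2^r$. This proves that equality holds if and only if $r=1$ or $u=1$.

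The only step needing any care is the endpoint $u=0$ in the strictness argument, which is handled by the direct check above. The rest of the argument is routine.
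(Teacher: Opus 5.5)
Your proof is correct and is essentially the paper's argument: midpoint Jensen for the convex map $t\mapsto t^r$ on $[0,\infty)$, then multiplying through by $2^r$. The paper's proof stops there and never justifies the stated equality characterization, so your part that handles it fills that gap: strict convexity for $r>1$, the direct check $1<2^{r-1}$ at $u=0$, and the verification of the cases $r=1$ and $u=1$.
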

 \begin{proof}
 	The function $f(t) = t^r$ is convex on $[0,\infty)$ for $r\ge 1$. By definition of a convex function, we have 
 	$$\left(\frac{1 +u}{2}\right)^r = \left(\frac{1}{2} + \frac{u}{2}\right)^r \le \frac{1}{2} 1^r + \frac{1}{2}u^r = \frac{1^r +u^r}{2},$$
 	which gives
 	$$(1+u)^r \le 2^{r-1} (1+u^r).$$
 	Hence, the result is proved.
 \end{proof}

 \begin{lemma}\label{prop:ineq-2}
 	For every $a\ge 0$,
 	$$
 	\frac{a}{\sqrt{1+a}\,\bigl(1+\sqrt{1+a}\bigr)} \le \frac{a}{2}.
 	$$
 \end{lemma}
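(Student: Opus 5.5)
The plan is to rewrite the left-hand side in terms of $s:=\sqrt{1+a}\ge 1$. Then $a=s^2-1=(s-1)(s+1)$, and the left-hand side becomes
\[
\frac{(s-1)(s+1)}{s(1+s)}=\frac{s-1}{s}=1-\frac{1}{\sqrt{1+a}} .
\]
So the claim reduces to showing $1-(1+a)^{-1/2}\le a/2$ for all $a\ge 0$.

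For this reduced inequality, define $g(a)=a/2-1+(1+a)^{-1/2}$. We have $g(0)=0$, and
\[
g'(a)=\tfrac12-\tfrac12(1+a)^{-3/2}\ge 0 \quad \text{for } a\ge 0 .
\]
Hence $g$ is nondecreasing on $[0,\infty)$, so $g\ge 0$ there. This is exactly the required bound.

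An alternative that avoids calculus starts from the same identity, since $(s-1)/s\le (s-1)(s+1)/2$ is equivalent to $2\le s(s+1)$. That inequality holds because $s\ge 1$.

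Either route works, and there is no real obstacle. The only point that needs care is the case $a=0$, where both sides vanish. The identity still holds there (it gives $0$ on both sides), and the calculus argument covers it through $g(0)=0$. Equality occurs exactly at $a=0$.
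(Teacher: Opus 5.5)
Your proof is correct, but your main route differs from the paper's. The paper never simplifies the fraction. It notes that $\sqrt{1+a}\ge 1$ forces the denominator to satisfy $\sqrt{1+a}\,(1+\sqrt{1+a})\ge 2$, takes reciprocals, and multiplies by $a\ge 0$.

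Your substitution $s=\sqrt{1+a}$ undoes a rationalization. The left-hand side is exactly the expression the paper produced when it rationalized $1-(1+u)^{-1/2}$ in Lemma~\ref{lmm:coeffbound}. You then prove $1-(1+a)^{-1/2}\le a/2$ by monotonicity of $g$, which is the tangent-line bound $(1+a)^{-1/2}\ge 1-a/2$ coming from convexity of $a\mapsto(1+a)^{-1/2}$.

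Your alternative inequality, $s(s+1)\ge 2$ for $s\ge 1$, is exactly the paper's denominator bound. For that route the identity is a detour, since you could apply $s(1+s)\ge 2$ directly to the original denominator.

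Your reformulation does give more information. It shows the left-hand side equals $1-1/\sqrt{1+a}<1$, so it is in fact at most $\min\{1,a/2\}$, and it makes the equality case ($a=0$ only) transparent. The paper's direct bound is shorter but yields only the stated inequality.
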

 
 \begin{proof}
 	Let $a\ge0$. Then
 	$$
 	1+a \ge 1 \quad\Rightarrow\quad \sqrt{1+a} \ge 1,
 	$$
 	and 
\begin{align*}
    \sqrt{1+a}\,\bigl(1+\sqrt{1+a}\bigr) \ge 1\cdot 2 = 2.
\end{align*}
 	Since the product on the left is positive, taking reciprocals gives
 	$$
 	\frac{1}{\sqrt{1+a}\,\bigl(1+\sqrt{1+a}\bigr)} \le \frac{1}{2},
 	$$
 	as required. 
 \end{proof}
\begin{lemma}\label{lm:sqrd_norm}
	From Assumption \ref{assum:PG}, there exist constants $C_g>0$ and $m\ge1$ such that
	$$
	\|\nabla \log \pi(x)\| \le C_g\bigl(1+\|x\|^m\bigr),
	\qquad x\in\mathbb{R}^d.
	$$
	Then, for any $r \ge 1$,
	$$
	\|\nabla \log \pi(x)\|^r 
	\le 
	2^{\,r-1} C_g^r \bigl(1+\|x\|^{mr}\bigr),
	\qquad x\in\mathbb{R}^d.
	$$
\end{lemma}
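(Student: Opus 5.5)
The claim follows by raising the growth bound to the power $r$ and then using Lemma~\ref{thm:main} to split the resulting binomial-type power. First, the pointwise growth bound $\|\nabla \log \pi(x)\| \le C_g(1+\|x\|^m)$ is obtained from Assumption~\ref{assum:PG}. Take $y=0$ there to get
\[
\|\nabla\log\pi(x)\| \le \|\nabla\log\pi(0)\| + L\bigl(1+\|x\|^m\bigr)\|x\|.
\]
Then bound $\|x\|(1+\|x\|^m)$ by a multiple of $1+\|x\|^{m+1}$; this uses $\|x\|\le 1+\|x\|^{m+1}$ and $\|x\|^{m+1}\le 1+\|x\|^{m+1}$. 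Absorbing constants and relabelling $m+1$ as $m\ge 1$ gives the stated form with some $C_g>0$.

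Next, fix $r\ge1$. Since $t\mapsto t^r$ is nondecreasing on $[0,\infty)$, raising the bound to the power $r$ gives
\[
\|\nabla\log\pi(x)\|^r \le C_g^r\bigl(1+\|x\|^m\bigr)^r.
\]
Now apply Lemma~\ref{thm:main} with $u=\|x\|^m\ge 0$ to obtain $(1+\|x\|^m)^r\le 2^{r-1}(1+\|x\|^{mr})$. Combining the two displays yields the claim.

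None of these steps is a real obstacle. The only mild point is the first step: the constant $m$ in the lemma need not equal the exponent in Assumption~\ref{assum:PG}, and it is increased by one when passing from the Lipschitz-type bound to the growth bound. This should be made explicit so that later moment bounds, such as Proposition~\ref{prop:mombound}, use a consistent exponent.
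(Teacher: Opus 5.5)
Your proof is correct and matches the paper's argument: raise the growth bound to the power $r$, then apply Lemma~\ref{thm:main} with $u=\|x\|^m$. The paper simply takes the growth bound as given, so your extra derivation from Assumption~\ref{assum:PG} (setting $y=0$) goes beyond it. Your remark that the exponent becomes $m+1$ is also a fair correction to the paper's Appendix~B, which reuses the $m$ of Assumption~\ref{assum:PG}.
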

\begin{proof}
	By raising the assumed bound to the power $r \ge 1$ gives
	$$
	\|\nabla \log \pi(x)\|^r 
	\le 
	C_g^r \bigl(1+\|x\|^m\bigr)^r.
	$$
	Using Lemma \ref{thm:main}, and applying it with $u=\|x\|^m$, we obtain
	$$
	\bigl(1+\|x\|^m\bigr)^r 
	\le 
	2^{\,r-1}\bigl(1+\|x\|^{mr}\bigr).
	$$
	Combining the bounds yields the result.
\end{proof}

In the following lemma, we quantify the discretization error in the SULA coefficients by comparing them with the true Langevin drift and diffusion terms.

\begin{lemma}\label{lmm:coeffbound}
	Under Assumption \ref{assum:PG}, for each $x \in \mathbb{R}^d$, the coefficients of SULA satisfy
	\begin{align*}
	\| \drift(x) - \nabla \log \pi(x) \| 
	&\leq 
	2c\, C_g^3 \left(1 + \|x\|^{3m}\right) h,\\
	\left \| \diff(x)^2 - I_d \right\|_F 
	&\leq 
	C_g^2 \left(1 + \|x\|^{2m}\right) h.
	\end{align*}
\end{lemma}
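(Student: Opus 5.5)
The plan is to work coordinatewise with the scalar quantity $g := \nabla_i \log \pi(x)$ and to reduce both bounds to elementary inequalities in $a := c h g^2 \ge 0$, followed by Lemma~\ref{lm:sqrd_norm} to convert powers of $|g|$ into polynomial bounds in $\|x\|$.

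\textbf{Drift bound.} Write
\[
\drifti(x) - g = g\Bigl(\tfrac{1}{\sqrt{1+a}} - 1\Bigr) = -\,\frac{g\,a}{\sqrt{1+a}\,\bigl(1+\sqrt{1+a}\bigr)}.
\]
By Lemma~\ref{prop:ineq-2}, this gives $|\drifti(x) - g| \le |g|\,a/2 = \tfrac{c h}{2}|g|^3$. Summing the squares over $i$ and using $\sum_i |g_i|^6 \le \bigl(\sum_i g_i^2\bigr)^3$ yields
\[
\|\drift(x)-\nabla\log\pi(x)\| \le \tfrac{c h}{2}\,\|\nabla\log\pi(x)\|^3 .
\]
Lemma~\ref{lm:sqrd_norm} with $r=3$ then bounds $\|\nabla\log\pi(x)\|^3 \le 4C_g^3(1+\|x\|^{3m})$. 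The result is the constant $2cC_g^3$, which matches the stated bound exactly.

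\textbf{Diffusion bound.} The matrix $\diff(x)^2 - I_d$ is diagonal with entries $-\tfrac h2 \drifti(x)^2$. Since $|\drifti(x)| \le |g|$, each entry has absolute value at most $\tfrac h2 g_i^2$. The Frobenius norm is therefore at most $\tfrac h2\bigl(\sum_i g_i^4\bigr)^{1/2} \le \tfrac h2 \|\nabla\log\pi(x)\|^2$. Lemma~\ref{lm:sqrd_norm} with $r=2$ gives $\|\nabla\log\pi(x)\|^2 \le 2C_g^2(1+\|x\|^{2m})$, so the bound becomes $C_g^2(1+\|x\|^{2m})h$, as claimed.

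\textbf{Main obstacle.} There is no real obstacle here. The only points needing care are the norm-comparison steps $\|v\|_6 \le \|v\|_2$ and $\|v\|_4 \le \|v\|_2$, and checking that the constants from Lemma~\ref{lm:sqrd_norm} ($2^{r-1}$) combine correctly with the factors $1/2$.
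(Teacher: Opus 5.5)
Your proposal is correct and follows the paper's proof essentially step for step. Both use the same rationalization of $1/\sqrt{1+a}-1$ combined with Lemma~\ref{prop:ineq-2}, the same $\ell^6$- and $\ell^4$-versus-$\ell^2$ norm comparisons, the bound $|\drifti(x)|\le|\nabla_i\log\pi(x)|$ for the diagonal entries of $\diff(x)^2-I_d$, and Lemma~\ref{lm:sqrd_norm} with $r=3$ and $r=2$ to obtain the constants $2cC_g^3$ and $C_g^2$.
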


\begin{proof}
	We recall drift of SULA scheme with $i$th component  
	$$\drift^{(i)}(x) = 
	\frac{\nabla_i \log \pi(x)}{\sqrt{1 + c\,h (\nabla_i\log\pi(x))^2}}.
	$$
	 We  consider
	\begin{align*}
	\left|\drift^{(i)}(x)-\nabla_i\log\pi(x)\right| &= |\nabla_i\log\pi(x)| 
	\left|\frac{1}{\sqrt{1 + c\,h (\nabla_i\log\pi(x))^2}} - 1\right|\\
     &= |\nabla_i\log\pi(x)| \left|\frac{1-\sqrt{1 + c h (\nabla_i\log\pi(x))^2}}{\sqrt{1 + c h (\nabla_i\log\pi(x))^2}}\right|\\
	&= |\nabla_i\log\pi(x)|
	\left|\frac{-c\,h (\nabla_i\log\pi(x))^2}
	{\sqrt{1 + c\,h (\nabla_i\log\pi(x))^2}
		\left(1+\sqrt{1 + c\,h (\nabla_i\log\pi(x))^2}\right)}\right|. 
	\end{align*}
    Using the relation from Lemma \ref{prop:ineq-2},
    we get
    \begin{align*}
	\left|\drift^{(i)}(x)-\nabla_i\log\pi(x)\right|
	&\le \frac{c}{2}h |\nabla_i\log\pi(x)|^3.
	\end{align*}
    Square and sum these component-wise bounds to control the Euclidean norm 
	\begin{equation}\label{eq:A_normed-squared}
	\|\drift(x)-\nabla \log\pi(x)\|^2
	= \sum_{i=1}^d \left|\drift^{(i)}(x) - \nabla_i\log\pi(x)\right|^2
	\le \Big(\frac{c}{2}\,h\Big)^2 \sum_{i=1}^d |\nabla_i\log\pi(x)|^6.
	\end{equation}	
	Now using monotonicity of $\ell^p$ norms
	$\|v\|_6 \le \|v\|, \; v \in \mathbb{R}^d$ and using Lemma \ref{lm:sqrd_norm},
    we get
	$$
	\left(\sum_{i=1}^d |\nabla_i \log\pi(x)|^6\right)^{1/2}
	= \left(\|\nabla \log\pi(x)\|_6\right)^3
	\le \|\nabla \log\pi(x)\|^3 \le 4C_g^3 (1+\|x\|^{3m}).
	$$
	Taking square roots in (\ref{eq:A_normed-squared}) gives the desired bound
	$$
	\|\drift(x)-\nabla\log\pi(x)\| \le \frac{c}{2}\,h \left(\sum_{i=1}^d |\nabla_i \log\pi(x)|^6\right)^{1/2}
	\le 2c\,C_g^3 (1+\|x\|^{3m})h.
	$$
	Now, we write
	$$
	\diff^{(i)}(x)^2 
	= 1 - \frac{1}{2}\frac{h (\nabla_i\log\pi(x))^2}
	{1 + c\,h(\nabla_i\log\pi(x))^2},
	$$
	and since the denominator is $\ge 1$ which leads to
	\begin{equation}\label{eq:cov-combound}
	\left|\diff^{(i)}(x)^2 -1 \right|
	\le \frac{1}{2}h (\nabla_i\log\pi(x))^2.
	\end{equation}
	Therefore on a similar line, we write
    $$ \left\|\diff(x)^2 -I_d \right\|_F^2 = \sum_{i=1}^{d} \left|\diff^{(i)}(x)^2 - 1\right|^2 \le \sum_{i=1}^{d} \left(\frac{1}{2}h(\nabla_i\log\pi(x))^2\right)^2 = \frac{1}{4} h^2 \sum_{i=1}^{d}(\nabla_i\log\pi(x))^4.$$
	Using $\|v\|_4 \le \|v\|, \; v \in \mathbb{R}^d$ gives
	$$
	\left\|\diff(x)^2-I_d\right\|_F
	\le \frac{1}{2}h \|\nabla\log\pi(x)\|^2
	\le C_g^2 (1+\|x\|^{2m})h.
	$$
	This completes the proof.
\end{proof}

With the aim of proving Theorem~\ref{thm:weak-true}, which establishes that SULA approximates the underlying SDE in the limit as the step size $h \to 0$, we now introduce the operators governing the leading-order terms in the corresponding Taylor expansions and present an important remark.

\begin{definition}
	For $f \in C^2(\mathbb{R}^d)  $, the generator $\mathcal{L}$ of the Langevin diffusion is defined by
	\begin{equation*}
	\mathcal{L}f(x) = \nabla \log \pi (x) \cdot\nabla f + \nabla \cdot \nabla f.
	\end{equation*}
	Also, for $f \in C^4(\mathbb{R}^d)$, we introduce an operator $\mathcal{A}: C^4(\mathbb{R}^d) \rightarrow C^0(\mathbb{R}^d)$ as
	\begin{align}\label{eq:RoTULA-A2-operator}
	\mathcal{A} f(x)
	\;=\; &-\frac{c}{2}\sum_{i=1}^d \partial_i f(x)\,\bigl(\nabla_i\log\pi(x)\bigr)^3 +\frac{1}{2}\sum_{i,j=1}^d \partial_{ij} f(x)\,\nabla_i\log\pi(x)\,\nabla_j\log\pi(x) \nonumber \\
	&\;-\;\frac{1}{2}\sum_{i=1}^d \partial_{ii} f(x)\,\left(\nabla_i\log\pi(x) \right)^2
	+\sum_{i, j=1}^d \nabla_i\log\pi(x)\partial_{ijj} f(x) +\frac{1}{2}\sum_{i,k=1}^d \partial_{i i k k} f(x).
	\end{align}
\end{definition}

It is noted that using the Binomial expansion for small $u$  
	$$
	(1+u)^{-1/2} = 1 - \frac{1}{2}u + \frac{3}{8}u^2 - \frac{5}{16}u^3 + \cdots,
	$$
	we write $\drift^{(i)}(x)$ by setting $u=c\,h (\nabla_i\log\pi(x))^2$ as
	\begin{align}\label{eq:A-R_expansion}
    \drift^{(i)}(x) 
	&= \nabla_i\log\pi(x)\left(
	1 - \frac{1}{2}c\,h (\nabla_i\log\pi(x))^2
	+ O(h^2)
	\right) \nonumber\\
        &=: \nabla_i\log\pi(x) + R^{(i)}_{A}(x,h),
	\end{align}
	where
	\begin{equation}\label{eq:RA}
	R^{(i)}_{A}(x,h)
	= \nabla_i\log\pi(x)
	\left(
	-\frac{1}{2}c\,h (\nabla_i\log\pi(x))^2
	+ O(h^2)
	\right).
	\end{equation}
	Under Assumption \ref{assum:PG} and with the help of Lemma \ref{lm:sqrd_norm} for all $i$, we get
    \begin{equation}\label{eq:R-bound}
        \left|R^{(i)}_{A}(x,h)\right|
	\le 2c\, C_g^3(1+\|x\|^{3m})h
	+ O\left(C_g^3 (1+\|x\|^{3m})h^2\right).
    \end{equation}

Based on the setup above, we now state and prove the following result that describes the local discretization error of the SULA scheme.

\begin{proposition}[One-step weak expansion for SULA] \label{lmm:Weakexp}
	Under Assumptions \ref{assum:LC}, \ref{assum:HP}, and \ref{assum:PG}, we have the following:
	\begin{enumerate}
		\item[(i)] For any function $f\in C_P^4(\mathbb R^d)$, there exist $M_1 : \mathbb{R}^d \times (0,1) \rightarrow \mathbb{
			R}$ and $R_1(x) \in C_P^\infty(\mathbb R^d)$ such that for every $h\in(0, 1)$ and for every $x\in\mathbb R^d$
		\begin{equation}\label{eq:one-step-expansion}
		\mathbb{E}_x\big[f(X_{1})\big]
		= f(x) + h \cdot \mathcal{L}f(x) + M_1(x,h),
		\end{equation}
		and the remainder obeys the bound
		$
		|M_1(x,h)| \le R_1(x)\,h^2.$
		\item[(ii)] For any $f\in C_P^{6}(\mathbb R^d)$
		there exists $M_2 : \mathbb{R}^d \times (0,1) \rightarrow \mathbb{
			R}$  and $R_2(x)\in C_P^\infty(\mathbb R^d)$ such that for every $h\in(0, 1)$ and for every $x\in\mathbb R^d$
		\begin{equation}\label{eq:RoTULA-secondorder-expansion}
		\mathbb{E}_x[f(X_1)]
		= f(x) + h\cdot\mathcal L f(x) + h^2\cdot\mathcal{A} f(x)
		+ M_2(x,h),
		\end{equation}
		where the remainder satisfies $
		|M_2(x,h)| \le R_2(x) h^3.$
	\end{enumerate}
\end{proposition}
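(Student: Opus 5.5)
Conditional on $X_0=x$, the SULA increment $\Delta:=X_1-x=h\,\drift(x)+\sqrt{2h}\,\diff(x)Z$ is Gaussian with independent coordinates. Its mean is $h\drift^{(i)}(x)$ and its variance is $2h\,\diffi(x)^2=2h-h^2\drifti(x)^2$. The plan is to Taylor expand $f(x+\Delta)$ around $x$ and compute the resulting Gaussian moments of $\Delta$ exactly. Then we substitute the small-$h$ expansions of the coefficients: \eqref{eq:A-R_expansion} for the drift, and
\[
\diffi(x)^2=1-\tfrac{h}{2}(\nabla_i\log\pi(x))^2+O\bigl(h^2(\nabla_i\log\pi)^4\bigr)
\]
for the diffusion. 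Finally, we collect powers of $h$.

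For (i), expand to third order with an integral (Lagrange) remainder of order four:
\[
f(x+\Delta)=\sum_{k\le 3}\tfrac{1}{k!}\nabla^k f(x)[\Delta^{\otimes k}]+\tfrac{1}{3!}\int_0^1(1-s)^3\nabla^4 f(x+s\Delta)[\Delta^{\otimes4}]\,ds.
\]
The needed moments are:
\begin{itemize}
\item $\E\Delta_i=h\drifti$;
\item $\E\Delta_i\Delta_j=h^2\drifti\mathcal S_h^{(j)}+\delta_{ij}(2h-h^2(\drifti)^2)=2h\delta_{ij}+O(h^2)$;
\item third moments are $O(h^2)$, since each is a product of a mean, of order $h$, with a variance, of order $h$, or a cube of means.
\end{itemize}
Inserting these into the first two terms gives
\[
h\,\nabla\log\pi\cdot\nabla f+h\Delta f=h\mathcal L f.
\]
The leftover terms are controlled as follows. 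The drift error is bounded by Lemma~\ref{lmm:coeffbound} by $h^2\,2cC_g^3(1+\|x\|^{3m})|\nabla f|$. The $h^2$ parts of the second and third moments are bounded using $|\drifti|\le|\nabla_i\log\pi|\le C_g(1+\|x\|^m)$. For the fourth-order remainder we use $\|\nabla^4 f(y)\|\le C(1+\|y\|^{q})$, $\|x+s\Delta\|\le\|x\|+\|\Delta\|$, and Cauchy--Schwarz. This reduces it to $\E\|\Delta\|^{8}$ and $\E\|\Delta\|^{2q}$, both at most $C h^{4}$ times a polynomial in $\|x\|$. Here it is essential that $h|\drifti|\le\sqrt{h/c}$ and that $\diffi\le 1$, so every moment of $\Delta$ scales like $h^{k/2}$ times a polynomial. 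In fact the $\sqrt h$-size Gaussian part gives $\E\|\Delta\|^4=O(h^2)$. Collecting everything yields a polynomial $R_1\in C_P^\infty$, which can be taken smooth, e.g.\ as $C(1+\|x\|^2)^{N}$, with $|M_1|\le R_1h^2$.

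For (ii), repeat the argument to fifth order with a sixth-order integral remainder, which is bounded by $h^3$ times a polynomial as above, because $\E\|\Delta\|^6=O(h^3)$. We then identify all $h^2$ contributions and match them against $\mathcal A f$:
\begin{itemize}
\item \emph{First-order term.} The drift correction $-\tfrac c2 h(\nabla_i\log\pi)^3$ multiplied by $h$ gives $-\tfrac c2\sum_i\partial_i f(\nabla_i\log\pi)^3$.
\item \emph{Second-order term.} $\tfrac12\E\Delta_i\Delta_j$ produces $\tfrac12\sum_{ij}\partial_{ij}f\,\nabla_i\log\pi\nabla_j\log\pi$ from the mean product. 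It also produces $-\tfrac12\sum_i\partial_{ii}f(\nabla_i\log\pi)^2$ from the diffusion correction $-h^2(\drifti)^2$.
\item \emph{Third-order term.} $\tfrac16\E\Delta_i\Delta_j\Delta_k$ with the Isserlis structure, three terms of mean$\times$variance $2h^2\nabla_i\log\pi\,\delta_{jk}$, gives $\sum_{ij}\nabla_i\log\pi\,\partial_{ijj}f$.
\item \emph{Fourth-order term.} $\tfrac1{24}\E\Delta^{\otimes4}$ with leading Gaussian part $4h^2(\delta_{ij}\delta_{kl}+\cdots)$ gives $\tfrac12\sum_{ik}\partial_{iikk}f$.
\end{itemize}
All discrepancies are $O(h^3)$ with polynomial weights. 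The expansions of $\drifti$ and $\diffi^2$ need explicit remainders uniform in $h$. For that we use the bound $|(1+u)^{-1/2}-1+u/2|\le\tfrac38u^2$ for $u\ge0$, rather than the formal $O(h^2)$.

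The main obstacle is the uniform, polynomially weighted control of the Taylor remainder. It needs moment bounds on $\Delta$ that are uniform in $x$ up to polynomial factors and carry the correct power of $h$. It also needs care that the $h^2$ coefficient bookkeeping in (ii), especially the diagonal $i=j$ cases in the second- and fourth-order terms, reproduces exactly the constants in \eqref{eq:RoTULA-A2-operator}.
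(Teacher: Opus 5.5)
Your proposal is correct and follows essentially the same route as the paper: Taylor-expand $f(x+\Delta)$, compute the Gaussian moments of the increment, insert the small-$h$ expansions of $\mathcal{S}_h^{(i)}$ and $(\mathcal{C}_h^{(i)})^2$, and match the $h$ and $h^2$ coefficients against $\mathcal{L}f$ and $\mathcal{A}f$. Three features of your version make the argument tighter than the paper's, which relies on an informal ``$+\cdots$'' Taylor tail and obtains $M_1$ from the $C_P^6$-level expansion of (ii): the explicit integral remainders, the uniform moment bounds from $h|\mathcal{S}_h^{(i)}|\le\sqrt{h/c}$ and $\mathcal{C}_h^{(i)}\le 1$, and the direct treatment of (i) under only $f\in C_P^4$. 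There is one harmless slip: in your Cauchy--Schwarz step only $\E\|\Delta\|^8=O(h^4)$ is needed, whereas $\E\|\Delta\|^{2q}$ is merely $O(h^{q})$; it is polynomially bounded, and that suffices.
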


 \begin{proof}
 	Consider the first step of SULA with $X_0 = x$ and let
\begin{equation}\label{eq:firststep}
\Delta X = X_1 - X_0 = h\,\drift(x)
\;+\; \sqrt{2h}\,\diff(x)\, Z_1.
\end{equation}
For $\Delta X = (\Delta X^{(1)}, \Delta X^{(2)}, \ldots, \Delta X^{(d)}) \in \mathbb{R}^d$, a Taylor expansion of $f$ around $x$ yields
\begin{align*}
\mathbb{E}_x\left[f(x + \Delta X)\right]
&= f(x)
+ \mathbb{E}_x\left[\sum_{i=1}^{d}\partial_i f(x)\, \Delta X^{(i)} \right] \\
&\quad + \frac{1}{2}\mathbb{E}_x\left[\sum_{i,j=1}^{d}\partial_{ij} f(x)\, \Delta X^{(i)} \Delta X^{(j)}\right] \\
&\quad + \frac{1}{6}\mathbb{E}_x\left[\sum_{i,j,k=1}^{d}\partial_{ijk} f(x)\, \Delta X^{(i)} \Delta X^{(j)} \Delta X^{(k)}\right] \\
&\quad + \frac{1}{24}\mathbb{E}_x\left[\sum_{i,j,k,l=1}^{d}\partial_{ijkl} f(x)\, \Delta X^{(i)} \Delta X^{(j)} \Delta X^{(k)} \Delta X^{(l)}\right]
+ \cdots .
\end{align*}
 By considering the linear term in the Taylor series expansion,  we write
\begin{align*}
\mathbb{E}_x\left[ \sum_{i=1}^{d}\partial_i f(x) \Delta X^{(i)}\right] 
&= \sum_{i=1}^{d}\partial_i f(x) \left( h \,\drift^{(i)}(x)
\;+\; \sqrt{2h}\,\diff^{(i)}(x)\,\mathbb{E} Z_1^{(i)}\right)\\
&= h \sum_{i=1}^{d}\partial_i f(x)  \drift^{(i)}(x).
\end{align*}
 With the help of \eqref{eq:A-R_expansion} and \eqref{eq:RA}, we obtain
\begin{align}\label{eq:linearterm}
\mathbb{E}_x\left[ \sum_{i=1}^{d}\partial_i f(x)\,\Delta X^{(i)} \right]
&= h \sum_{i=1}^{d}\partial_i f(x)\bigl(\nabla_i\log\pi(x) + R^{(i)}_{A}(x,h)\bigr) \nonumber\\
&= h\,\nabla f(x)\cdot\nabla\log\pi(x)
\;+\; h \sum_{i=1}^{d}\partial_i f(x)\,R^{(i)}_{A}(x,h)\nonumber\\
&= h\,\nabla f(x)\cdot\nabla\log\pi(x)  - \frac{c}{2}h^2
\sum_{i=1}^d \partial_i f(x)\bigl(\nabla_i\log\pi(x)\bigr)^3
+ O(h^3).
\end{align}
Now, from the quadratic term in the Taylor series expansion, we have
\begin{align}\label{eq:qudraticterm}
 	\frac{1}{2}\mathbb{E}_x\left[ \sum_{i,j=1}^{d}\partial_{ij} f(x) \Delta X^{(i)} \Delta X^{(j)}  \right]  
 	&= \frac{1}{2}\, \text{tr}\left(\nabla^2 f(x) \mathbb{E}_x \left[\Delta X \Delta X^\top  \right]\right),
\end{align}
where tr\,$(\cdot)$ is the trace operator. Using $\mathbb{E}[Z_1] =0$ and $\mathbb{E}[Z_1 Z_1^\top] =I_d$, we get
\begin{align*}
\mathbb{E}_x \left[\Delta X \Delta X^\top  \right] 
&=  \mathbb{E}_x \left[ \left(h\drift(x)
\;+\; \sqrt{2h}\,\diff(x) Z_1\right)\left(h\drift(x)
\;+\; \sqrt{2h}\,\diff(x) Z_1\right)^\top\right]\\
&=  h^2 \drift(x)\drift(x)^\top + \sqrt{2h}\, h \, \drift(x) \mathbb{E}[Z_1^\top] \diff(x)^\top \\
&\;\;\;\;+ \sqrt{2h}\, h \,\diff(x) \mathbb{E}[Z_1] \drift(x)^\top + 2 h\, \diff(x) \mathbb{E}[Z_1 Z_1^\top] \diff(x)^\top\\
&= h^2 \drift(x)\drift(x)^\top + 2 h \,\diff(x)^2.
\end{align*}
  By the
properties of the trace operator, (\ref{eq:qudraticterm}) simplifies to
\begin{align*}
\frac{1}{2}\, \text{tr}\left(\nabla^2 f(x) \mathbb{E}_x \left[\Delta X \Delta X^\top  \right]\right) 
&= \frac{1}{2}\, \text{tr}\left(\nabla^2 f(x) h^2 \drift(x)\drift(x)^\top \right) +  \frac{1}{2}\, \text{tr}\left(\nabla^2 f(x) 2 h \,\diff(x)^2 \right)\nonumber\\
&=  \frac{h^2}{2}\, \text{tr}\left(\drift(x)^\top \nabla^2 f(x) \drift(x) \right) \;+\; h \sum_{i=1}^{d}  \partial_{ii}f(x) \diff^{(i)}(x)^2 \notag\\
&= \frac{h^2}{2} \drift(x)^\top \,\nabla^2 f(x)\,\drift(x) 	\notag\\
&\;\;\;+ h\sum_{i=1}^{d}  \partial_{ii}f(x) \left(\diff^{(i)}(x)^2 -1\right) + \nabla \cdot \nabla f(x) h \nonumber\\ 
&= \frac{h^2}{2} \drift(x)^\top \,\nabla^2 f(x)\,\drift(x)\notag\\
&\;\;\;- \frac{h^2}{2}\sum_{i=1}^{d}  \partial_{ii}f(x) \drift^{(i)}(x)^2 + \nabla \cdot \nabla f(x) h,
\end{align*}
 	where the last step due to (\ref{eq:SULA_drift}). Further, from \eqref{eq:A-R_expansion} and \eqref{eq:RA}, we get
\begin{align}\label{eq:quadratic-term}
    \frac{1}{2}\mathbb{E}_x\left[ \sum_{i,j=1}^{d}\partial_{ij} f(x) \Delta X^{(i)} \Delta X^{(j)}  \right]  
    &= \nabla \cdot \nabla f(x)\, h + \frac{h^2}{2} \sum_{i,j=1}^{d}\partial_{ij}f(x)\, \nabla_i \log \pi(x)\nabla_j \log \pi(x)\nonumber\\
    &\;\;\;-\frac{h^2}{2} \sum_{i=1}^{d}\partial_{ii}f(x)(\nabla_i \log \pi(x))^2 + O(h^3).
\end{align}
Now consider the cubic term in the Taylor series
\begin{align*}
\frac{1}{6}\,\mathbb{E}_x\Bigg[\sum_{i,j,k=1}^d \partial_{ijk} f(x) &\Delta X^{(i)} \Delta X^{(j)} \Delta X^{(k)}  \Bigg]
= \frac{1}{6} \sum_{i,j,k=1}^d \partial_{ijk} f(x)\, \mathbb{E}_x\Big[
\left(h \drift^{(i)}(x) + \sqrt{2h} \,\diff^{(i)}(x) Z_1^{(i)}\right) \\
&\quad \times \left(h \drift^{(j)}(x) + \sqrt{2h} \,\diff^{(j)}(x) Z_1^{(j)}\right)
\left(h \drift^{(k)}(x) + \sqrt{2h} \,\diff^{(k)}(x) Z_1^{(k)}\right)
\Big].
\end{align*}
 	The above summation involves $8$ terms. Since $Z_1$ is a zero mean Gaussian vector, therefore $\mathbb{E}_x\big[\diff(x)Z_1\big] = 0$
and also
$$
\mathbb{E}_x\left[\left(\diff^{(i)}(x)Z^{(i)}_1\right)\left(\diff^{(j)}(x)Z^{(j)}_1\right)\left(\diff^{(k)}(x)Z^{(k)}_1\right)\right] = 0, \text{ for } i \neq j \neq k.
$$
Consequently, any term containing an odd number of factors of 
$(\diff(x)Z_1)$ vanishes. Thus, only the term with no such factor
and exactly three terms with two Gaussian factors remain. 
\vspace{.4cm}\\
$\displaystyle\frac{1}{6}\,\mathbb{E}_x\Bigg[\sum_{i,j,k=1}^d \partial_{ijk} f(x) \Delta X^{(i)} \Delta X^{(j)} \Delta X^{(k)}  \Bigg]$
\begin{align*}
&= \frac{h^3}{6} \sum_{i,j,k=1}^d \partial_{ijk} f(x)\, 
\drift^{(i)}(x) \drift^{(j)}(x) \drift^{(k)}(x)\\ 
&\;\;+ 3 \cdot \frac{1}{6}\sum_{i,j,k=1}^d \partial_{ijk} f(x)\mathbb{E}_x\left[
\left(h\,\drift^{(i)}(x)\right)
\left(\sqrt{2h}\,\diff^{(j)}(x)Z^{(j)}_1\right)
\left(\sqrt{2h}\,\diff^{(k)}(x)Z^{(k)}_1\right)
\right] \\
&= \frac{h^3}{6} \sum_{i,j,k=1}^d \partial_{ijk} f(x)\, 
\drift^{(i)}(x) \drift^{(j)}(x) \drift^{(k)}(x)\\ 
&\;\;+ h^2\sum_{i,j,k=1}^d \partial_{ijk} f(x)\, \drift^{(i)}(x)\diff^{(j)}(x)\diff^{(k)}(x)\delta_{jk},
\end{align*}
  where $\delta_{jk}$ denotes the Kronecker delta function which takes value $1$ when $j=k$ and $0$ otherwise. Since $f\in C_P^4(\mathbb R^d)$ and therefore
$\partial_{ijk} f = \partial_{jik} f = \partial_{ikj} f$,
the terms become identical. Therefore, the cubic term we get
\begin{align}\label{eq:cubicterm}
= \frac{h^3}{6} \sum_{i,j,k=1}^d \partial_{ijk} f(x)\, 
\drift^{(i)}(x)\drift^{(j)}(x)\drift^{(k)}(x) 
+ h^2 \sum_{i,j=1}^d \partial_{ijj} f(x)\drift^{(i)}(x)\left(\diff^{(j)}(x)\right)^2.
\end{align}

Using \eqref{eq:SULA_drift} and \eqref{eq:A-R_expansion}, we obtain
\begin{align}\label{eq:cubic-second}
\frac{1}{6}\mathbb{E}_x\Bigg[\sum_{i,j,k=1}^d \partial_{ijk} f(x)
\Delta X^{(i)} \Delta X^{(j)} \Delta X^{(k)}\Bigg]
&=  h^2 \sum_{i,j=1}^d \partial_{ijj} f(x)\drift^{(i)}(x)\left(\diff^2(x)\right)_{jj} + O(h^3)\nonumber\\
&= h^2
\sum_{i,j=1}^d \nabla_i\log\pi(x)\,\partial_{ijj} f(x)
+ O(h^3).
\end{align}

We now consider the fourth order term from the Taylor series and observe that all terms containing at least one drift factor $h\,\drift(x)$ are of order
$h^3$ or higher. Therefore we write\vspace{.4cm}\\
$	\label{eq:fourth-order}
\frac{1}{24}\,
\mathbb{E}_x\Bigg[
\sum_{i,j,k,l=1}^d
\partial_{ijkl} f(x)\,
\Delta X^{(i)} \Delta X^{(j)} \Delta X^{(k)} \Delta X^{(l)}
\Bigg]$
\begin{align*}
&=
\frac{(2h)^2}{24}
\sum_{i,j,k,l=1}^d
\partial_{ijkl} f(x)\,
\mathbb{E}\left[
\left(\diff^{(i)}(x)Z^{(i)}_1\right)\left(\diff^{(j)}(x)Z^{(j)}_1\right)\left(\diff^{(k)}(x)Z^{(k)}_1\right)
\left(\diff^{(l)}(x)Z^{(l)}\right)
\right]
+ O(h^3).
\end{align*}
 Using Wick's formula, it simplifies to
\begin{align*}
&= \frac{(2h)^2}{24}
\sum_{i,j,k,l=1}^d
\partial_{ijkl} f(x)\,
\diff^{(i)}(x)
\diff^{(j)}(x)
\diff^{(k)}(x)
\diff^{(l)}(x)
\mathbb{E}\!\left[Z_1^{(i)} Z_1^{(j)} Z_1^{(k)} Z_1^{(l)}\right]
+ O(h^3)\\
&= \frac{h^2}{6}
\sum_{i,j,k,l=1}^d
\partial_{ijkl} f(x)\,
\diff^{(i)}(x)
\diff^{(j)}(x)
\diff^{(k)}(x)
\diff^{(l)}(x)
\Big(
\delta_{ij}\delta_{kl}
+
\delta_{ik}\delta_{jl}
+
\delta_{il}\delta_{jk}
\Big)
+ O(h^3).
\end{align*}

Since $f\in C_P^{6}(\mathbb{R}^d)$, we have
$
\partial_{iikk} f
=
\partial_{ikik} f
=
\partial_{ikki} f,
$
so all three resulting sums are identical. Using the Kronecker delta structure, we obtain
\begin{align}
\label{eq:fourth-second}
\frac{1}{24}\mathbb{E}_x\!\Bigg[
\sum_{i,j,k,l=1}^d
\partial_{ijkl} f(x)\,
\Delta X^{(i)} \Delta X^{(j)} \Delta X^{(k)} \Delta X^{(l)}
\Bigg]
=
\frac{h^2}{2}
\sum_{i,k=1}^d
\partial_{i i k k} f(x)
+ O(h^3).
\end{align}

Similarly, the remaining terms in the Taylor expansion are of higher order in $h$ (as functions of $x$). Hence, combining all contributions from (\ref{eq:linearterm}), (\ref{eq:quadratic-term}), (\ref{eq:cubic-second}), (\ref{eq:fourth-second}), together with the higher-order terms in the Taylor series, we obtain
\begin{align*}
\mathbb{E}_x\big[f(X_{1})\big]
&= f(x) + h \, \nabla f(x) \cdot \nabla \log\pi(x)
+ \nabla \cdot \nabla f(x)\,h
+ M_1(x, h),
\end{align*}
where
$
M_1(x, h)
= h^2\,\mathcal{A} f(x) + M_2(x,h),
$
with the operator $\mathcal{A}$ given in (\ref{eq:RoTULA-A2-operator}) and where $M_2(x,h)$ satisfies
$
|M_2(x,h)| \le R_2(x)h^3,
$
for some $R_2 \in C_P^\infty(\mathbb{R}^d)$. This follows since all higher-order terms are polynomially bounded due to Lemma \ref{lm:sqrd_norm} and Lemma \ref{lmm:coeffbound}, together with the assumption $f \in C_P^4(\mathbb{R}^d)$. Clearly, $M_1$ also satisfies
$
|M_1(x,h)| \le R_1(x)h^2
$
for some $R_1(x) \in C_P^\infty(\mathbb{R}^d)$.
This completes the proof of Parts (i) and (ii) of the proposition.
 \end{proof}

The next lemma provides an additional assumption that is useful in proving the moment bound in Proposition \ref{prop:mombound} and the drift condition in Proposition \ref{prop:RoTULA-drift}.
\begin{lemma}\label{lem:coercivity}
	Under Assumptions \ref{assum1} and \ref{assum:PG}, there exist constants $M>0$ and $\kappa'>0$, such that for all $\|x\|\ge M$
	\begin{equation}
	\left\langle \frac{x}{\|x\|},\,\drift(x)\right\rangle
	\;+\;
	\frac{h}{2\|x\|}\,\|\drift(x)\|^{2}
	\;\le\;
	-\kappa' .
	\label{eq:meanmatcheddrift}
	\end{equation}
\end{lemma}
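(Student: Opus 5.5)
The plan is to bound the two terms on the left of \eqref{eq:meanmatcheddrift} separately. The first term should be uniformly negative by using the coordinatewise sign condition together with the angle condition in Assumption~\ref{assum1}. The second term should vanish as $\|x\|\to\infty$ because the SULA drift is saturated. Throughout, write $g(x)=\nabla\log\pi(x)$ and $g_i=\nabla_i\log\pi(x)$.

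\emph{Second term.} Since $t\mapsto t/\sqrt{1+cht^2}$ is bounded in absolute value by $1/\sqrt{ch}$, every coordinate satisfies $|\drifti(x)|\le 1/\sqrt{ch}$. Hence $\|\drift(x)\|^2\le d/(ch)$, and
\[
\frac{h}{2\|x\|}\|\drift(x)\|^2\le \frac{d}{2c\|x\|}.
\]
This bound is uniform in $h\in(0,1)$ and tends to $0$ as $\|x\|\to\infty$. This is the only place where the specific variance-balanced form of the scheme is used.

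\emph{First term.} From the $\limsup$ condition in Assumption~\ref{assum1}, choose $\delta>0$ and $M_0$ such that for $\|x\|\ge M_0$ three things hold: $\langle x,g(x)\rangle\le-\delta\|x\|\|g(x)\|$, $x_ig_i<0$ for every $i$, and $\|g(x)\|\ge d/\delta$. The last requirement is possible by the $\liminf$ condition.
\begin{itemize}
\item Each product $x_i\drifti(x)$ has the sign of $x_ig_i$, so every summand of $\langle x,\drift(x)\rangle$ is nonpositive. It therefore suffices to find one coordinate with a uniformly negative contribution.
\item Since $\sum_i x_ig_i\le-\delta\|x\|\|g\|$, some index $i$ satisfies $|x_i||g_i|\ge \delta\|x\|\|g\|/d$.
\item Using $|x_i|\le\|x\|$ and $|g_i|\le\|g\|$, this index satisfies both $|x_i|\ge\delta\|x\|/d$ and $|g_i|\ge\delta\|g\|/d\ge1$.
\item Monotonicity of $t\mapsto t/\sqrt{1+cht^2}$ on $t\ge0$, together with $h<1$, gives $|\drifti(x)|\ge 1/\sqrt{1+ch}\ge 1/\sqrt{1+c}$.
\end{itemize}
Consequently
\[
\Big\langle \tfrac{x}{\|x\|},\drift(x)\Big\rangle\le \frac{x_i\drifti(x)}{\|x\|}\le -\frac{\delta}{d\sqrt{1+c}}=:-2\kappa'.
\]

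\emph{Conclusion.} Take $M=\max\{M_0,\,d/(2c\kappa')\}$. Then for $\|x\|\ge M$ the second term is at most $\kappa'$, and the sum is at most $-\kappa'$, with constants independent of $h\in(0,1)$. Assumption~\ref{assum:PG} is not really needed here beyond guaranteeing continuity and finiteness of $g$.

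The main obstacle is converting the global angle condition into a per-coordinate lower bound. The saturation acts coordinatewise, so $\drift$ is not parallel to $g$, and the angle condition cannot be transferred to $\drift$ directly. The pigeonhole step above handles this. The sign condition $x_ig_i<0$ is essential there, since it prevents the remaining coordinates from cancelling the negative contribution.
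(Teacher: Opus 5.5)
Your proof is correct. It differs from the paper's in how the first term is handled, and slightly in how the second term is bounded.

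The paper needs no pigeonhole step. Since $x_i\nabla_i\log\pi(x)\le 0$ for every $i$, and each saturation factor satisfies $(1+ch(\nabla_i\log\pi(x))^2)^{-1/2}\ge(1+ch\|\nabla\log\pi(x)\|^2)^{-1/2}$, it gets in one line
\[
\langle x,\drift(x)\rangle\le\frac{\langle x,\nabla\log\pi(x)\rangle}{\sqrt{1+ch\|\nabla\log\pi(x)\|^2}}.
\]
So the angle condition does transfer to $\drift$, giving $\langle x/\|x\|,\drift(x)\rangle\le-\eta\,\|\nabla\log\pi(x)\|/\sqrt{1+ch\|\nabla\log\pi(x)\|^2}$. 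Your remark that the angle condition cannot be transferred directly is therefore too pessimistic. The coordinatewise sign condition you already invoke is exactly what makes the direct transfer work.

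For the quadratic term, the paper interpolates between $\|\drift(x)\|\le\sqrt{d/(ch)}$ and $\|\drift(x)\|\le\sqrt{d}\,\|\nabla\log\pi(x)\|/\sqrt{1+ch\|\nabla\log\pi(x)\|^2}$. This lets the common factor $\|\nabla\log\pi(x)\|/\sqrt{1+ch\|\nabla\log\pi(x)\|^2}$ be pulled out of both terms. It then finishes with $\|\nabla\log\pi(x)\|\ge S_0$ and monotonicity.

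Each route has its advantage. The paper's comparison keeps the full angle constant $\eta$, with no loss of a factor $1/d$ in $\kappa'$. Your pigeonhole step yields $\kappa'=\delta/(2d\sqrt{1+c})$. Your route separates the two terms cleanly: the crude bound $\frac{h}{2\|x\|}\|\drift(x)\|^2\le d/(2c\|x\|)$ removes the need for the interpolation trick. Both give constants uniform in $h\in(0,1)$, and both are adequate for the lemma, where only the existence of $M$ and $\kappa'$ matters.
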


\begin{proof}
    Recall the componentwise definition
$$
\drift^{(i)}(x) = \frac{\nabla_i\log\pi(x)}{\sqrt{1 + c\,h\,|\nabla_i\log\pi(x)|^2}}, \quad i=1,\dots,d.
$$
We begin by bounding the drift term $\langle x, \drift(x)\rangle$. Observe that for each $i$,
\begin{align*}
|\nabla_i \log\pi(x)| 
&\le \|\nabla \log\pi(x)\|\\
\frac{1}{\sqrt{1 + c h\,|\nabla_i\log\pi(x)|^2}}
&\ge
\frac{1}{\sqrt{1 + c h\,\|\nabla\log\pi(x)\|^2}}.
\end{align*}
Under Assumption \ref{assum1}, we compute
\begin{align}
\langle x, \drift(x) \rangle
&= \sum_{i=1}^d x_i \cdot \frac{\nabla_i\log\pi(x)}{\sqrt{1 + c h \, |\nabla_i\log\pi(x)|^2}} \notag \\
&\le \frac{1}{\sqrt{1 + c h \|\nabla\log\pi(x)\|^2}} \sum_{i=1}^d x_i \nabla_i\log\pi(x) \notag \\
&= \frac{1}{\sqrt{1 + c h \|\nabla\log\pi(x)\|^2}} \langle x, \nabla\log\pi(x) \rangle.
\label{eq:drift-step-final}
\end{align}
Also by Assumption \ref{assum1}, there exist constants $\eta>0$ and $M_1>0$ such that for all $\|x\|\ge M_1$,
$$
\langle x, \nabla\log\pi(x) \rangle \le -\eta \|x\| \|\nabla\log\pi(x)\|.
$$
Combining this with \eqref{eq:drift-step-final}, we obtain
\begin{equation}\label{eq:drift-final-clean}
\left\langle \frac{x}{\|x\|}, \drift(x) \right\rangle
\le
-\eta \cdot \frac{\|\nabla\log\pi(x)\|}{\sqrt{1 + c h \|\nabla\log\pi(x)\|^2}}.
\end{equation}
We now bound the quadratic term. First, using the inequality
$$
\frac{a^2}{1 + c h\, a^2} \le \frac{1}{c h},
$$
we obtain the following bound
$$
\|\drift(x)\|^2
\le \frac{d}{c h}
\quad \Longrightarrow \quad
\|\drift(x)\| \le \sqrt{\frac{d}{c h}}.
$$
Second, since $|\nabla_i \log\pi(x)| \le \|\nabla\log\pi(x)\|$ and the function 
$s \mapsto s^2/(1+c h s^2)$ is non-decreasing on $[0,\infty)$, we obtain
$$
\|\drift(x)\|
=
\left(\sum_{i=1}^d\frac{|\nabla_i\log\pi(x)|^2}{1 + c\,h\,|\nabla_i\log\pi(x)|^2}\right)^{\frac12}
\le
\sqrt{d}\cdot
\frac{\|\nabla\log\pi(x)\|}{\sqrt{1 + c h\, \|\nabla\log\pi(x)\|^2}}.
$$
Combining the two bounds via
$$
\|\drift(x)\|^2 = \|\drift(x)\| \cdot \|\drift(x)\|,
$$
we obtain
\begin{align}
\|\drift(x)\|^2
&\le
\sqrt{\frac{d}{c h}}
\cdot
\left(
\sqrt{d} \cdot \frac{\|\nabla\log\pi(x)\|}{\sqrt{1 + c h\, \|\nabla\log\pi(x)\|^2}}
\right) \notag \\
&=
\frac{d}{\sqrt{c h}}
\cdot
\frac{\|\nabla\log\pi(x)\|}{\sqrt{1 + c h\, \|\nabla\log\pi(x)\|^2}}.
\end{align}
Multiplying by $\frac{h}{2\|x\|}$ and since $h \in (0,1)$, we get
\begin{equation}\label{eq:quad-final-clean}
\frac{h}{2\|x\|}\|\drift(x)\|^2
\le
\frac{d}{2\|x\|\sqrt{c}}
\cdot
\frac{\|\nabla\log\pi(x)\|}{\sqrt{1 + c h\, \|\nabla\log\pi(x)\|^2}}.
\end{equation}
Combining \eqref{eq:drift-final-clean} and \eqref{eq:quad-final-clean}, we obtain
\begin{align}
&\left\langle \frac{x}{\|x\|},\drift(x)\right\rangle
+ \frac{h}{2\|x\|}\|\drift(x)\|^2 \le
\frac{\|\nabla\log\pi(x)\|}{\sqrt{1 + c h \|\nabla\log\pi(x)\|^2}}
\left(
-\eta + \frac{d}{2\|x\|\sqrt{c}}
\right).
\label{eq:combined-final}
\end{align}
Choose $M_2>0$ such that for all $\|x\|\ge M_2$,
$$
\frac{d}{2\|x\|\sqrt{c}} \le \frac{\eta}{2}.
$$

Let $M := \max(M_1, M_2)$. Then for all $\|x\|\ge M$,
$$
-\eta + \frac{d}{2\|x\|\sqrt{c}} \le -\frac{\eta}{2}.
$$
Substituting this into \eqref{eq:combined-final}, we obtain
$$
\left\langle \frac{x}{\|x\|},\drift(x)\right\rangle
+ \frac{h}{2\|x\|}\|\drift(x)\|^2
\le
-\frac{\eta}{2}
\cdot
\frac{\|\nabla\log\pi(x)\|}{\sqrt{1 + c h \|\nabla\log\pi(x)\|^2}}.
$$
Finally, since $\|\nabla\log\pi(x)\| \to \infty$, there exists $S_0>0$ such that for all $\|x\|\ge M$,
$$
\|\nabla\log\pi(x)\| \ge S_0.
$$
Using monotonicity and since $h \in (0,1)$, we obtain
$$
\frac{\|\nabla\log\pi(x)\|}{\sqrt{1+c h\|\nabla\log\pi(x)\|^2}}
\ge
\frac{S_0}{\sqrt{1+c h S_0^2}} \ge \frac{S_0}{\sqrt{1+c S_0^2}}.
$$
Setting
$$
\kappa' := \frac{\eta}{2} \cdot \frac{S_0}{\sqrt{1+c S_0^2}} > 0,
$$
we conclude that for all $\|x\|\ge M$,
$$
\left\langle \frac{x}{\|x\|},\drift(x)\right\rangle
+ \frac{h}{2\|x\|}\|\drift(x)\|^2
\le -\kappa',
$$
which completes the proof.
\end{proof}
The following proposition is important to prove the main weak convergence result.
\begin{proposition}\label{prop:mombound}
	Under Assumptions \ref{assum1} and \ref{assum:PG}, for any integer $p\ge1$ and finite horizon $T>0$ there exists constant $M'= M'(p,x, T)>0$ such that for every time-step $h\in(0,1)$ and every $t$ with $t\le T/h$,
	$$
	\mathbb{E}_x\big[ \|X_{t}\|^{2p} \big] \le M'.
	$$
\end{proposition}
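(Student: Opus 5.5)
We propose to prove the moment bound by a discrete Gronwall-type recursion on $\mathbb{E}_x[\|X_t\|^{2p}]$. The aim is a one-step inequality of the form
\[
\mathbb{E}\bigl[\|X_{t+1}\|^{2p}\mid X_t=y\bigr]\le (1+Ch)\|y\|^{2p}+Ch
\]
with $C$ independent of $h\in(0,1)$ and $y$. Iterating it for $t\le T/h$ steps gives
\[
\mathbb{E}_x\|X_t\|^{2p}\le e^{CT}\bigl(\|x\|^{2p}+CT\bigr)=:M'(p,x,T).
\]

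First we write $X_{t+1}=m(y)+\xi$ with $m(y)=y+h\drift(y)$ and $\xi=\sqrt{2h}\,\diff(y)Z$. The key structural facts are these:
\begin{itemize}
\item $\|\diff(y)\|\le 1$, so the noise is uniformly controlled: $\mathbb{E}\|\xi\|^{k}\le C_k h^{k/2}$.
\item The $2h\,\mathrm{tr}(\diff^2)\le 2hd$ bound holds.
\item $\|h\drift(y)\|\le \sqrt{dh/c}$.
\end{itemize}
For the mean, we split into two regions. For $\|y\|\ge M$, Lemma~\ref{lem:coercivity} gives
\[
\|m(y)\|^2=\|y\|^2+2h\langle y,\drift(y)\rangle+h^2\|\drift(y)\|^2\le \|y\|^2-2h\kappa'\|y\|\le\|y\|^2 .
\]
For $\|y\|\le M$, continuity together with the saturation bound $\|\drift\|\le\sqrt{d/(ch)}$ gives $\|m(y)\|^2\le \|y\|^2+2hMK_M+h^2K_M^2\le \|y\|^2+C_0h$. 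Here $K_M:=\sup_{\|y\|\le M}\|\drift(y)\|$; since $\nabla\log\pi$ is continuous and $|\drift^{(i)}|\le|\nabla_i\log\pi|$, this supremum is bounded uniformly in $h$. Combining the two regions, $\|m(y)\|^2\le\|y\|^2+C_0h$ for all $y$.

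Next we expand
\[
\|m+\xi\|^{2p}=\bigl(\|m\|^2+2\langle m,\xi\rangle+\|\xi\|^2\bigr)^p
\]
by the multinomial theorem and take conditional expectation. The term linear in $\xi$, namely $p\|m\|^{2p-2}\,2\langle m,\mathbb{E}\xi\rangle$, vanishes. Every other term contains a factor $\langle m,\xi\rangle^a\|\xi\|^{2b}$ with $a+2b\ge 2$. Each such term is therefore bounded by $C\|m\|^{2p-2b-a}\|m\|^{a}h^{(a+2b)/2}\le C h\,(1+\|m\|^{2p})$, using Young's inequality to absorb the lower powers of $\|m\|$ and $h<1$. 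This yields
\[
\mathbb{E}\bigl[\|X_{t+1}\|^{2p}\mid X_t=y\bigr]\le (1+Ch)\|m(y)\|^{2p}+Ch .
\]
Finally, $(\|y\|^2+C_0h)^p\le(1+C'h)\|y\|^{2p}+C'h$, again by a binomial expansion and Young's inequality.

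Note that Assumption~\ref{assum:PG} is not really needed for this argument beyond what feeds Lemma~\ref{lem:coercivity}. The saturation makes the increments uniformly bounded in distribution, independent of the polynomial growth of the gradient, and that uniformity is exactly what keeps the constants free of $h$. The main obstacle is the uniformity in $h$ of all constants, particularly $M$ and $\kappa'$ in Lemma~\ref{lem:coercivity}, which must not depend on $h$. We would re-check that its proof indeed produces $h$-free $M,\kappa'$, which it does since it only used $h<1$. Then the inner-ball constant $C_0$ is the only remaining place to watch, and it is handled by the uniform bound on $K_M$ above.
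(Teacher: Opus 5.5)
Your proposal is correct and follows essentially the same route as the paper. Both arguments derive a one-step bound $\mathbb{E}[\|X_{t+1}\|^{2p}\mid X_t=y]\le(1+Ch)\|y\|^{2p}+Ch$ from Lemma~\ref{lem:coercivity}, the bound $\diffi\le 1$ and the vanishing of the term linear in the noise, and then iterate it as a discrete Gronwall recursion over $t\le T/h$. Your version is tighter at the points the paper only sketches: you treat $\|y\|\le M$ explicitly with the $h$-uniform bound $|\drifti|\le|\nabla_i\log\pi|$ (the paper applies the coercivity bound without handling the inner ball, and relies on the non-uniform $|\drifti|\le 1/\sqrt{ch}$), and you absorb every $h^{j/2}$ term with $j\ge 2$ into $Ch(1+\|m\|^{2p})$ instead of appealing to vanishing $h^{1/2}$ and $h^{3/2}$ coefficients.
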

\begin{proof}
    Recall that the one-step update of SULA discretizer with $X_0 =x$
$$
X_1= x + h\,\drift(x)+ \sqrt{2h}\,\diff(x) Z_1,
\qquad Z_1 \sim \mathcal N(0,I_d),
$$
where the $i$th component is
$$
X_1^{(i)}
=
x_i
+ h\,\drift^{(i)}(x)
+ \sqrt{2h}\,\diff^{(i)}(x) Z_1^{(i)},
\qquad i=1,\dots,d.
$$
We write
\begin{align*}
\|X_1\|^2
&=
\sum_{i=1}^d
\Bigl(
x_i
+ h\,\drift^{(i)}(x)
+ \sqrt{2h}\,\diff^{(i)}(x) Z_1^{(i)}
\Bigr)^2
\\
&=
\|x\|^2
+ 2h\,\langle x,\drift(x)\rangle
+ 2\sqrt{2h}
\sum_{i=1}^d x_i \diff^{(i)}(x) Z_1^{(i)}
\\
&\quad
+ h^2\|\drift(x)\|^2
+ 2\sqrt{2}h^{3/2}
\sum_{i=1}^d \drift^{(i)}(x) \diff^{(i)}(x) Z_1^{(i)}
+ 2h\,\|\diff(x)Z_1\|^2.
\end{align*}
Consequently, for any $p \ge 1$,
\begin{align*}
\|X_1\|^{2p}
&=
\Bigl(
\|x\|^2
+ 2h\,\langle x,\drift(x)\rangle
+ 2\sqrt{2h}
\sum_{i=1}^d x_i \diff^{(i)}(x) Z_1^{(i)}
+ h^2\|\drift(x)\|^2
\\
&\quad
+ 2\sqrt{2}h^{3/2}
\sum_{i=1}^d \drift^{(i)}(x)\diff^{(i)}(x) Z_1^{(i)}
+ 2h\,\|\diff(x)Z_1\|^2
\Bigr)^p.
\end{align*}
By taking expectation, we observe that $\mathbb{E}\big[ \|X_1\|^{2p}\big]$ is a polynomial with respect to $h^{1/2}$.
It is also clear from Proposition \ref{lmm:Weakexp} that for $f(x) = \|x\|^{2p}$, the coefficients of $h^{1/2}$ and $h^{3/2}$ are zero. Also, by Lemma \ref{lem:coercivity}, we get
\begin{align*}
2h \left( \langle x,\drift(x) \rangle + \frac{h}{2}\|\drift(x)\|^2 \right) 
&\le - 2h \kappa' \|x\|\\
\implies \|x\|^2 + 2h \left( \langle x,\drift(x) \rangle + \frac{h}{2} \|\drift(x)\|^2 \right) 
&\le \|x\|^2 - 2h \kappa' \|x\|.
\end{align*}
Since $Z_1\sim\mathcal{N}(0,I_d)$, we get
\begin{equation} \label{eq:Sigma-z-bound}
  \mathbb{E}_x\left[\|\diff(x) Z_1\|^{2p}\right]
\le
C_p \|\diff(x)\|_F^{2p},  
\end{equation}
where $C_p := \mathbb{E}[\|Z_1\|^{2p}] $.
We recall $\diff(x)$ is diagonal matrix with entries
$$
\diff^{(i)}(x) = \sqrt{1-\tfrac{h}{2}\drift^{(i)}(x)^2},
$$
and therefore
$$
\|\diff(x)\|_F^2
=
\sum_{i=1}^d \left(1-\tfrac{h}{2}\drift^{(i)}(x)^2\right)
\le d \cdot 1 =d,
\qquad x\in\mathbb{R}^d.
$$
Consequently, from \eqref{eq:Sigma-z-bound}
\begin{equation}
\mathbb{E}_x\left[\|\diff(x) Z_1\|^{2p}\right]
\le
d^p C_p.
\end{equation}
Also, again by using $\diff^{(i)}(x) \le 1$ and $\drift^{(i)}(x) \le 1/\sqrt{c h}$, the remaining terms can grow at most linearly in $x$. Hence,
all the coefficients of the polynomial in $\mathbb{E}_x\left[\|X_{1}\|^{2p}\right]$ are or can be bounded by order $O(1+\|x\|^{2p})$ with respect to $x$. This means there exist a constant $c'$ such that for $h \in (0,1)$
$$
\mathbb{E}_x\left[\|X_{1}\|^{2p}\right]
\le
\|x\|^{2p}
+
c' h (1 + \|x\|^{2p})
= \|x\|^{2p}(1+c'h) + c' h.
$$
Finally, due to Markov property, we have
$$
\mathbb{E}_x\left[\|X_{t}\|^{2p}\right]
\le \mathbb{E}_x\left[\|X_{t-1}\|^{2p}\right](1+c'h) + c' h.
$$
Iterating this inequality recursively, we get
\begin{align*}
\mathbb{E}_x[\|X_t\|^{2p}]
&\le (1+c'h)\,\mathbb{E}_x[\|X_{t-1}\|^{2p}] + c'h \\
&\le (1+c'h)^2\,\mathbb{E}_x[\|X_{t-2}\|^{2p}]
+ c'h\bigl(1+(1+c'h)\bigr) \\
&\;\;\vdots \\
&\le (1+c'h)^t \|x\|^{2p}
+ c'h \sum_{k=0}^{t-1} (1+c'h)^k \\
&= (1+c'h)^t \|x\|^{2p}
+ c'h \frac{(1+c'h)^t - 1}{c'h} \\
&\le (1+c'h)^t(\|x\|^{2p}+1) + (1+c'h)^t.
\end{align*}
The right-hand side is uniformly bounded for all $t$ satisfying $t \le T/h$. Therefore, there exists a constant $M' = M'(p,x,T) > 0$ such that
$$
\mathbb{E}_x[\|X_{t}\|^{2p}]
\le \|x\|^{2p} \left(1+\frac{c'T}{t}\right)^t +  \left(1+\frac{c'T}{t} \right)^t \le e^{c'T} (1+\|x\|^{2p}) = M'.
$$
It completes the proof.
\end{proof}

We now state the following two lemmas which are used in proving Proposition \ref{sgprop1} .

\begin{lemma}\label{lmm:propSG1}
	Let $\Psi(x) = \dfrac{x}{\sqrt{1 + c x^2}}$ with $c > 0$ and $x \in \mathbb{R}$. Then for any $\delta \in \mathbb{R}$ and $\eta \neq 0$,
	$$
	\frac{1}{2}\bigl(\Psi(\delta+\eta) + \Psi(\delta-\eta)\bigr) 
	\begin{cases}
	< \Psi(\delta), & \text{if } \delta > 0, \\
	= \Psi(\delta), & \text{if } \delta = 0, \\
	> \Psi(\delta), & \text{if } \delta < 0.
	\end{cases}
	$$
\end{lemma}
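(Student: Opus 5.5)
The plan is to exploit the oddness of $\Psi$ together with a concavity/convexity property on each half-line. First I record the derivatives: $\Psi'(x)=(1+cx^2)^{-3/2}>0$ and $\Psi''(x)=-3cx(1+cx^2)^{-5/2}$. So $\Psi$ is odd, strictly increasing, strictly concave on $[0,\infty)$ and strictly convex on $(-\infty,0]$. Oddness immediately gives the case $\delta=0$, since $\Psi(\eta)+\Psi(-\eta)=0=2\Psi(0)$. It also reduces the case $\delta<0$ to the case $\delta>0$: replacing $\delta$ by $-\delta$ flips the sign of both sides. It therefore suffices to treat $\delta>0$. Without loss of generality I take $\eta>0$, since the expression is symmetric in $\pm\eta$.

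For $\delta>0$, I split into two cases. If $\eta\le\delta$, both points $\delta\pm\eta$ lie in $[0,\infty)$, where $\Psi$ is strictly concave. Jensen's inequality with equal weights, which is strict because $\eta\neq0$, then gives $\tfrac12(\Psi(\delta+\eta)+\Psi(\delta-\eta))<\Psi(\delta)$.

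If $\eta>\delta$, the point $\delta-\eta$ is negative, so the straddling configuration must be handled directly. I write
\[
\Psi(\delta+\eta)+\Psi(\delta-\eta)=\Psi(\eta+\delta)-\Psi(\eta-\delta).
\]
By the mean value theorem this equals $2\delta\,\Psi'(\xi)$ for some $\xi\in(\eta-\delta,\eta+\delta)$, with $\xi>0$. Likewise $2\Psi(\delta)=\Psi(\delta)-\Psi(-\delta)=2\delta\,\Psi'(\zeta)$ for some $\zeta\in(-\delta,\delta)$, so $|\zeta|<\delta$.

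A cleaner route avoids locating $\zeta$ altogether. I write both sides as integrals of $\Psi'$ over intervals of the same length $2\delta$:
\[
\Psi(\eta+\delta)-\Psi(\eta-\delta)=\int_{\eta-\delta}^{\eta+\delta}\Psi'(s)\,ds,
\qquad
2\Psi(\delta)=\int_{-\delta}^{\delta}\Psi'(s)\,ds.
\]
The derivative $\Psi'$ is even and strictly decreasing in $|s|$. Shifting the interval by $\eta>0$ therefore strictly decreases the integral: the integrand at $s+\eta$ is no larger than at $s$ when $|s+\eta|\ge|s|$, which holds for $s\ge-\eta/2$. Pairing $s$ with $-s$ then handles the remaining part of the interval.

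This integral comparison in fact covers both subcases at once, and it is the step I expect to need the most care. The precise claim is that $g(\eta)=\int_{\eta-\delta}^{\eta+\delta}\Psi'(s)\,ds$ is strictly decreasing for $\eta>0$. I plan to verify this by differentiating: $g'(\eta)=\Psi'(\eta+\delta)-\Psi'(\eta-\delta)$, which is negative because $|\eta+\delta|>|\eta-\delta|$ when $\eta,\delta>0$ and $\Psi'$ decreases in $|s|$. Hence $g(\eta)<g(0)=2\Psi(\delta)$, which proves the strict inequality for $\delta>0$ and concludes the proof.
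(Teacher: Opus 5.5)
Your final argument is correct and is essentially the paper's proof: by evenness of $\Psi'$, $g(\eta)-g(0)=\int_0^\eta\bigl(\Psi'(\delta+s)-\Psi'(\delta-s)\bigr)\,ds$, so your monotonicity of $g$ is exactly the paper's fundamental-theorem-of-calculus identity combined with $|\delta-s|<|\delta+s|$ for $\delta,s>0$. The Jensen/mean-value case split and the interval-shifting heuristic are unnecessary once you have the $g'$ computation, which by oddness holds for every $\eta>0$ whether or not $\delta-\eta$ is negative; your use of oddness to handle $\delta<0$ spells out what the paper only calls ``analogous.''
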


\begin{proof}
	We first compute the derivative of $\Psi$
	$$
	\Psi'(x) = (1 + c x^2)^{-3/2}.
	$$
	Note that $\Psi'(x) > 0$ for all $x$, $\Psi'$ is even, and $\Psi'$ is strictly decreasing for $|x| \ge 0$. Since the expression $\Psi(\delta+\eta) + \Psi(\delta-\eta)$ is invariant under the symmetry of $\eta$, it is sufficient to consider $\eta > 0$. Using the fundamental theorem of calculus, we write
	$$
	\Psi(\delta + \eta) = \Psi(\delta) + \int_{\delta}^{\delta + \eta} \Psi'(x)\,dx,
	$$
	$$
	\Psi(\delta - \eta) = \Psi(\delta) - \int_{\delta - \eta}^{\delta} \Psi'(x)\,dx.
	$$
	Averaging these two expressions gives
	$$
	\frac{1}{2}\bigl(\Psi(\delta+\eta) + \Psi(\delta-\eta)\bigr)
	= \Psi(\delta) + \frac{1}{2}\left(\int_{\delta}^{\delta + \eta} \Psi'(x)\,dx - \int_{\delta - \eta}^{\delta} \Psi'(x)\,dx\right).
	$$
	By a change of variables, we obtain
	\begin{align}\label{eq:avg-psi}
	\frac{1}{2}\bigl(\Psi(\delta+\eta) + \Psi(\delta-\eta)\bigr)
	&= \Psi(\delta) + \frac{1}{2}\left(\int_0^\eta \Psi'(\delta + s)\,ds - \int_0^\eta \Psi'(\delta - s)\,ds\right) \nonumber\\
	\frac{1}{2}\bigl(\Psi(\delta+\eta) + \Psi(\delta-\eta)\bigr)
	&= \Psi(\delta) + \frac{1}{2}\int_0^\eta \bigl(\Psi'(\delta + s) - \Psi'(\delta - s)\bigr)\,ds.
	\end{align}
	For $s \in (0,\eta)$, since $\Psi'$ depends only on $|x|$ and is strictly decreasing in $|x|$. Observe that
	$$
	|\delta - s|^2 - |\delta + s|^2
	= (\delta - s)^2 - (\delta + s)^2
	= -4\delta s.
	$$
	Since $s > 0$, the sign of $-4\delta s$ is the opposite of the sign of $\delta$. Hence, If $\delta > 0$, then $|\delta - s| < |\delta + s|$, so $\Psi'(\delta - s) > \Psi'(\delta + s)$. The other case follows analogously.  Hence, from \eqref{eq:avg-psi}, the required result follows.
\end{proof}

\begin{lemma}\label{lmm:propSG2}
	Let $\Psi(x) = \dfrac{x}{\sqrt{1 + c x^2}}$ with $c > 0$ and $x \in \mathbb{R}$. Then for any $\delta \in \mathbb{R}$ and $\eta \neq 0$,
	$$
	\frac{1}{2}\bigl(\Psi(\delta+\eta) + \Psi(\delta-\eta)\bigr)
	\begin{cases}
	> 0, & \text{if } \delta > 0, \\
	= 0, & \text{if } \delta = 0, \\
	< 0, & \text{if } \delta < 0.
	\end{cases}
	$$
\end{lemma}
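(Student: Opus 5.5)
The plan is to use that $\Psi$ is odd and strictly increasing, and reduce the case $\delta<0$ to the case $\delta>0$ by symmetry. First, $\Psi'(x)=(1+cx^2)^{-3/2}>0$, as computed in the proof of Lemma~\ref{lmm:propSG1}, so $\Psi$ is strictly increasing on $\mathbb{R}$. Moreover $\Psi(-x)=-\Psi(x)$. Since the expression $\Psi(\delta+\eta)+\Psi(\delta-\eta)$ is symmetric in $\eta\mapsto-\eta$, we may assume $\eta>0$ throughout.

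\emph{Case $\delta=0$.} By oddness, $\Psi(\eta)+\Psi(-\eta)=\Psi(\eta)-\Psi(\eta)=0$, which gives the middle line of the statement.

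\emph{Case $\delta>0$.} We rewrite the sum as
\[
\Psi(\delta+\eta)+\Psi(\delta-\eta)=\Psi(\delta+\eta)-\Psi(\eta-\delta).
\]
Because $\delta>0$, we have $\delta+\eta>\eta-\delta$. Strict monotonicity of $\Psi$ then gives $\Psi(\delta+\eta)>\Psi(\eta-\delta)$, so the average is strictly positive.

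\emph{Case $\delta<0$.} Apply the previous case to $-\delta>0$ and use oddness:
\[
\Psi(\delta+\eta)+\Psi(\delta-\eta)=-\bigl(\Psi(-\delta-\eta)+\Psi(-\delta+\eta)\bigr)<0.
\]

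No step is a genuine obstacle. The only point that needs care is to use \emph{strict} monotonicity of $\Psi$, which follows from $\Psi'>0$ everywhere, so that the inequalities are strict as the statement requires. No convexity or concavity argument is needed, unlike in Lemma~\ref{lmm:propSG1}.
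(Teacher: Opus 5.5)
Your proof is correct and follows essentially the same route as the paper: strict monotonicity and oddness of $\Psi$, reduction to $\eta>0$, the trivial case $\delta=0$, and the case $\delta<0$ deduced from $\delta>0$ by oddness. The only difference is that you apply the identity $\Psi(\delta-\eta)=-\Psi(\eta-\delta)$ uniformly when $\delta>0$. This removes the paper's split into the subcases $\delta\ge\eta$ and $0<\delta<\eta$, and it also shows that the reduction to $\eta>0$ is not needed.
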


\begin{proof}
	Note that $\Psi$ is odd and therefore $\Psi(-x) = -\Psi(x)$ and strictly increasing on $\mathbb{R}$ since
	$$
	\Psi'(x) = (1 + c x^2)^{-3/2} > 0.
	$$
	Since the expression $\Psi(\delta+\eta) + \Psi(\delta-\eta)$ is invariant under symmetry of $\eta$, it is sufficient to consider $\eta > 0$ case. For $\delta = 0$, the proof holds trivially since $\Psi$ is an odd function. Let $\delta > 0$. For $\delta \ge \eta$, $\delta-\eta \ge 0$ and $\delta+\eta > 0$. Since $\Psi$ is increasing and $\Psi(0)=0$, we have
	$$
	\Psi(\delta+\eta) > \Psi(\delta-\eta) \ge 0.
	$$ 
	Also, when $0 < \delta < \eta$, $\delta-\eta < 0 < \delta+\eta$. Since $\Psi$ is odd, we have
	$$
	\Psi(\delta-\eta) = -\Psi(\eta-\delta),
	$$
	so
	$$
	\Psi(\delta+\eta) + \Psi(\delta-\eta)
	= \Psi(\delta+\eta) - \Psi(\eta-\delta).
	$$
	Since $\Psi$ is strictly increasing, this is positive if and only if
	$$
	\delta+\eta > \eta-\delta \quad \Longleftrightarrow \quad 2\delta > 0,
	$$
	which holds since $\delta > 0$. For $\delta < 0$. Let $\delta' = -\delta > 0$. Since $\Psi$ is odd function. Hence
	$$
	\Psi(\delta+\eta) + \Psi(\delta-\eta)
	= -\left(\Psi(\delta'+\eta) + \Psi(\delta'-\eta)\right).
	$$
	By previous case $\delta' > 0$, the quantity $\Psi(\delta'+\eta) + \Psi(\delta'-\eta)$ is strictly positive. Thus, the lemma holds in all cases, with equality if $\delta = 0$.
\end{proof}




\end{document}